\newlength{\defbaselineskip}
\newcommand{\setlinespacing}[1]%
           {\setlength{\baselineskip}{#1 \defbaselineskip}}
\theoremstyle{plain}
\newtheorem{thm}{Theorem}[section]
\newtheorem{lem}[thm]{Lemma}
\newtheorem{exam}[thm]{Example}
\theoremstyle{definition}
\newtheorem{ass}{Assumption}[section]
\newtheorem{rmk}{Remark}[section]
\DeclareMathOperator*{\esssup}{esssup}
\DeclareMathOperator*{\essinf}{essinf}
\newcommand{\cN}{\mathcal{N}}
\newcommand{\cB}{\mathcal{B}}
\newcommand{\cE}{\mathcal{E}}
\newcommand{\cF}{\mathcal{F}}
\newcommand{\cW}{\mathcal{W}}
\newcommand{\cX}{\mathcal{X}}
\newcommand{\cY}{\mathcal{Y}}
\newcommand{\cZ}{\mathcal{Z}}
\newcommand{\bE}{\mathbb{E}}
\newcommand{\bP}{\mathbb{P}}
\newcommand{\bR}{\mathbb{R}}
\newcommand{\bN}{\mathbb{N}}
\newcommand{\sF}{\mathscr{F}}
\newcommand{\sP}{\mathscr{P}}
\DeclareMathOperator{\var}{var}
\newcommand{\abs}[1]{\left|{#1}\right|}
\makeatletter\@addtoreset{equation}{section} \makeatother
\begin{document}

\title{ A Deep Learning-Based Method for Fully Coupled Non-Markovian FBSDEs with Applications
}

\author[3]{Hasib Uddin Molla\thanks{Corresponding author. Email: \texttt{mdhasibuddin.molla@ucalgary.ca}}}
\author[3] {Matthew Backhouse }
\author[3]{Ankit Banarjee}

\author[3] {Jinniao Qiu}


\affil[3]{Department of Mathematics and Statistics,
	University of Calgary}


\maketitle

\begin{abstract} 
In this work, we extend deep learning-based numerical methods to fully coupled forward-backward stochastic differential equations (FBSDEs) within a non-Markovian framework. Error estimates and convergence are provided. In contrast to the existing literature, our approach not only analyzes the non-Markovian framework but also addresses fully coupled settings, in which both the drift and diffusion coefficients of the forward process may be random and depend on the backward components $Y $and $Z$. Furthermore, we illustrate the practical applicability of our framework by addressing utility maximization problems under rough volatility, which are solved numerically with the proposed deep learning-based methods.\\

\noindent \textbf{Keywords:} .
\end{abstract}

\section{Introduction} 
Let $(\Omega, \sF, (\sF_t)_{t\in[0,T]}, \bP)$ be a complete filtered probability space, where the filtration $(\sF_t)_{t\in[0,T]}$ is the augmented filtration generated by an $m$-dimensional Wiener process $(W_t)_{t\in[0,T]}$. The predictable $\sigma$-algebra on $\Omega \times [0,T]$ associated to $(\sF_t)_{t\in[0,T]}$ is denoted by $\sP$.
\\[4pt]
We consider the following coupled forward-backward stochastic differential equation (FBSDE):
\begin{equation}\label{maineq}
    \begin{cases}
        dX_t &= b(t,V_t,X_t,Y_t,Z_t)dt + \sigma(t,V_t,X_t,Y_t,Z_t)dW_t, \ t\in[0,T],\\
        X_0&=x_0, \\
        -dY_t &=f(t,V_t,X_t,Y_t,Z_t)dt - Z_t dW_t, \ t\in[0,T],\\
        Y_T &= g(V_T,X_T).
    \end{cases}
\end{equation}
where $T>0$ is a fixed finite terminal time, and the solution triple $(X_t,Y_t,Z_t)$ takes values in $\bR^d\times\bR^{d_0}\times\bR^{d_0\times m}$. The functions $b,\sigma,f$, and $g$ take values in appropriate spaces consistent with the dimensions of the problem. Wiener process $W_t$ is decomposed as $W_t=(\widetilde{W}_t,B_t)$, where $\widetilde{W}_t$ is $m_1$-dimensional and $B_t$ is $m_2$-dimensional, with $m_1+m_2=m$. When $m_1=0$ or $m_2=0$, we mean $W=B$ or $W=\widetilde{W}$ respectively. \\[4pt]
The following assumption is imposed on the exogenous stochastic process $V$.
\begin{ass}
  \label{ass:stoch-var}
  The process $V$ has continuous trajectories, takes values in $\bR^{m_0}$, and is adapted to the filtration generated by the Wiener process $\widetilde{W}$. In addition, it is integrable in the sense that
  \begin{equation*}
    \bE\left[ \int_0^T |V_s| ds \right] < \infty, \quad T > 0.
  \end{equation*}
\end{ass}
\noindent
 It is important to emphasize that $V$, and even the joint process $(X, V)$, is not assumed to be Markovian or to possess the semi-martingale property. In fact, neither of our main examples satisfies these properties. 
 \\[4pt]
 FBSDEs have found numerous applications in modelling optimization problems in mathematical finance and have been the subject of growing interest over the past three decades. Considerable effort has been devoted to understanding the solvability of fully coupled FBSDEs, particularly over arbitrary time horizons and under minimal regularity assumptions on the coefficients.
One of the earliest approaches is the contraction mapping technique introduced by Antonelli \cite{FAntonelli1993BFSDEs}, which establishes the existence and uniqueness of solutions but only for sufficiently small time intervals. The four-step scheme developed by Ma et al. \cite{JMaPProtterJYong1994SolvingFBSDEs4StepScheme} provides a more direct method of solution by linking the problem to a system of parabolic partial differential equations (PDEs). Nonetheless, this method therein is restricted to the Markovian setting. Another approach is the so-called continuation method, proposed by Hu and Peng \cite{YHuSPeng1995SolFBSDEs} and further developed by Yong \cite{JYong1997FindAdaptedSolFBSDEMethodofCont}, which accommodates non-Markovian frameworks and is applicable over general time horizons. This method has proven effective in broadening the class of FBSDEs for which well-posedness can be established.
 \\[4pt]
 Under the non-Markovian framework, the resolution of an FBSDE  is closely linked to the existence of a random field $u(x,t)$ such that $Y(t)=u(X(t),t),t\in[0,T]$. This random field, known as the decoupling field of the FBSDE, satisfies an associated quasilinear backward stochastic partial differential equation (BSPDE). The connection between FBSDEs and BSPDEs, as well as the conditions under which they are well-posed, has been investigated in \cite{JMaHYinJZhang2012OnNon-MarkovFBSDEsandBSPDEs, JMaZWuDZhangJZhang2015OnWellPossedofFBSDEsAUnifiedAppr}; 
 in the Markovian case, the decoupling field becomes deterministic and satisfies a deterministic quasilinear parabolic PDE. This correspondence between FBSDEs and BSPDEs forms the foundation of the stochastic Feynman–Kac formula, which provides a probabilistic representation for the solution of certain classes of partial differential equations.
 \\[4pt]
 Over the past decades, researchers have developed several approaches for numerical approximations of FBSDEs. Some of these approaches exploit the connections between FBSDEs and quasilinear parabolic partial differential equations (PDEs), particularly through the four-step scheme. These methods rely on the numerical solutions of parabolic PDEs,  for example through finite difference techniques, which typically become computationally prohibitive in high-dimensional contexts due to the curse of dimensionality. 
 To overcome the limitations of PDE-based approaches, alternative methodologies based on time discretization schemes have been explored. For decoupled FBSDEs,  the discretization of the forward component is relatively straightforward, whereas the backward component involves the computation of conditional expectations at each time step--a task that poses significant challenges, particularly in high-dimensional settings. Such time discretization-based schemes for decoupled FBSDEs are studied by Zhang \cite{JZhang2004NumericalSchemeBSDEs}, while regression-based Monte-Carlo methods were proposed by Bouchard and Touzi \cite{BBouchardNTouzi2004DiscreteTimeApprxMCsimulBSDE}, and further developed by Gobet et al. \cite{EGobetJPLemorXWarin2005RegressionBasedMCtoSolveBSDEs} among many others. 
Despite ongoing efforts to reduce the computational complexity associated with evaluating conditional expectations, this remains a significant obstacle for high-dimensional problems. In the case of coupled FBSDEs, these difficulties are exacerbated, as time discretization schemes often require iterative procedures such as Picard iterations, which further increase computational cost. To address this, Bender and Zhang \cite{CBenderJZhang2008TimeDiscrMarkovIteraCoupledDFBSDEs} proposed a time discretization scheme incorporating Markovian iteration to mitigate the complexity inherent in solving coupled FBSDEs numerically.
 \\[4pt]
 Since most numerical algorithms for FBSDEs and parabolic PDEs suffer from the so-called curse of dimensionality, which renders them inefficient or infeasible in high-dimensional settings, a class of deep learning-based algorithms has recently emerged to overcome this challenge. Notably, several methods have been developed for decoupled FBSDEs and their associated parabolic PDEs \cite{EWeinanJHanAJent2017DeepLearnNumrMethodHighDimParbPDEsBSDEs,JHanAJentWE2018SolvingHighDimEqusDeepLearnin,CHureHPhamXWarin2020DeepBShcemeHighDimNonlPDEs}, demonstrating the capacity to mitigate the challenges associated with high-dimensional problems.
 These deep learning-based approaches, commonly referred to as Deep BSDE methods, have since been extended to some coupled FBSDEs \cite{JHanJLong2020ConvgDeepBSDEforCoupledFBSDEs,SJiSPengYPengXZhang2020ThreeAlgorithmsforSolvHighDimCoupledFBSDEsDeepLearn,HUMollaJQiu20221NumercFBSDE}. In particular, Han and Long \cite{JHanJLong2020ConvgDeepBSDEforCoupledFBSDEs} established a convergence analysis and obtained a posteriori error estimates for the Deep BSDE scheme in the coupled setting. However, these studies have largely been confined to the Markovian framework and to cases where the coefficients of the forward SDE do not depend on the unknown process $Z$. More recently, Bayer et al. \cite{CBayerJQiuYYao2020PricingOptnRoughVolatBSPDEs} extended the application of Deep BSDE methods to a class of FBSDEs under a non-Markovian framework but restricted to a \textit{depcoupled} setting.
\\[4pt]
We emphasize that the FBSDEs considered in this work are non-Markovian, in the sense that their coefficients may exhibit randomness through dependence on external non-Markovian processes, such as $V$ in our case. However, this form of randomness, arising from an exogenous process, is distinct from that in path-dependent FBSDEs, where the coefficients depend explicitly on the entire trajectory of the forward process $X$. Such path-dependent FBSDEs typically arise in stochastic control problems with memory or path-dependent features, including applications to exotic derivatives such as Asian, barrier, or lookback options. These classes of problems are also naturally connected to the theory of path-dependent partial differential equations (PPDEs) (see \cite{TPhamJZhang_2014_ZeroSumGamePathDependent},\cite{Saporito2019StochControlPathDependent} for example). Neural network-based methods \cite{SaporitoZhang2021NeuralNetPathDependentPDE},\cite{QiFeng_2023_DeepSignatureFBSDEAlgo} can be used for solving path-dependent partial differential equations (PPDEs) and FBSDEs.

In this work, we investigate Deep BSDE methods for fully coupled non-Markovian FBSDEs. Error estimates are obtained for a class of coupled FBSDEs in which both the drift and diffusion components of the forward process may be random and depend on the backward process $Z$ and $Y$. The remainder of this paper is organized as follows. In Section 2, we introduce the necessary notations and assumptions, review key properties of solutions to FBSDEs, and provide a brief overview of neural network approximation techniques. Section 3 presents the framework of our proposed numerical scheme and derives an estimate for the associated simulation error. In Section 4, we carry out the convergence analysis of the scheme. Section 5 details the algorithm for the numerical approximation of fully coupled non-Markovian FBSDEs. Finally, in Section 6, we illustrate the effectiveness of the proposed method through two numerical examples.

\section{Preliminaries}
\subsection{Notations and Assumptions}

In this section, we introduce the notations and assumptions used throughout the paper. Let $\pi:0=t_0<t_1<t_2\cdots<t_{N-1}<t_N=T $ denote a partition of the time interval $[0,T]$ into $N$ subintervals $(t_{i},t_{i+1})$. We define the mesh size of the partition as 
$$h:=\max\{t_{i+1}-t_i:i=0,1,\cdots,N-1\}.$$
Without loss of generality, we assume that the partition is uniform, i.e., $h=t_{i+1}-t_i$, for all $i=0,1, \cdots, N-1$ and $h<1$ (small). We also assume that our processes are all one-dimensional, that is, $d=d_0=m=1$.
\\[4pt]
For $\theta_j:=(x_j,y_j,z_j),j=1,2,$ and for $\varphi=b,\sigma,f$, denote
\begin{align}
    g_1(x_1,x_2)&=1_{x_1\ne x_2} [g(V_T,x_1)-g(V_T,x_2)]/[x_1-x_2];\\
    \varphi_1(t,\theta_1,\theta_2)&=1_{x_1\ne x_2} [\varphi(t,V_t,x_1,y_1,z_1)-\varphi(t,V_t,x_2,y_1,z_1)]/[x_1-x_2];\\
    \varphi_2(t,\theta_1,\theta_2)&=1_{y_1\ne y_2} [\varphi(t,V_t,x_2,y_1,z_1)-\varphi(t,V_t,x_2,y_2,z_1)]/[y_1-y_2];\\
     \varphi_3(t,\theta_1,\theta_2)&=1_{z_1\ne z_2} [\varphi(t,V_t,x_2,y_2,z_1)-\varphi(t,V_t,x_2,y_2,z_2)]/[z_1-z_2].
\end{align}
Then we define 
\begin{align}
    \overline{F}(t,y)&=\esssup \left(\sup_{x_1\neq x_2,y_1\neq y_2,z_1\neq z_2}F(\theta_1,\theta_2;t,y)\right),\\
    \underline{F}(t,y)&=\essinf \left(\inf_{x_1\neq x_2,y_1\neq y_2,z_1\neq z_2}F(\theta_1,\theta_2;t,y)\right),
\end{align}
where
\begin{equation}
    F(\theta_{1},\theta_2;t,y)=f_1+f_2y+y(b_1+b_2y)+\frac{(f_3+b_3y)y(\sigma_1+\sigma_2y)}{1-\sigma_3y}.
\end{equation}

\begin{ass}\label{A1-wellp}
  There are three constants $c_1,c_2,c_3$ satisfying 
  \begin{equation}
      c_1>0,\hspace{4mm}0<c_2<c_3,\hspace{4mm} c_1c_3<1,
  \end{equation}
  and also there exists another constant $\epsilon=\epsilon(T)>0$ such that either one of the following three conditions holds:
  \begin{enumerate}[({Case} I)]
      \item $\abs{\sigma_3}\leq c_1,\abs{g_1}\leq c_2$; and $\overline{F}(t,c_3)\leq \epsilon,\underline{F}(t,-c_3)\geq -\epsilon$.
      \item $\abs{\sigma_3}\geq c_1^{-1},\abs{g_1}\geq c_2^{-1}$, and both $\sigma_3$ and $g_1$ keep the same sign, meaning, the dependence of the forward diffusion on the backward control variable $Z$ (through $\sigma_3$) and the slope of the terminal condition $g_1$ are both uniformly bounded away from zero, and they act in the same direction — either both positive or both negative.
      \item $\sigma_3g_1\leq c_1c_2$, and either $\sigma_3$ or $g_1$ keeps the same sign, meaning, the coefficients $\sigma_3$ and $g_1$ are not required to share the same sign, but at least one remains sign-consistent, and their combined effect is bounded so the system remains stable.
  \end{enumerate}
\end{ass}

\begin{ass}\label{A2-wellp}
    The coefficients $(b,\sigma,f)(\cdot, X,Y,Z)$ are $(\cF_t)_{t\in[0,T]}$-progressively measurable, for each $(X,Y,Z)\in \bR^d\times\bR^{d_0}\times\bR^{d_0\times m}$, and the terminal condition $g(\cdot,X)$ is $\cF_T$-measurable for any fixed $X\in\bR^d$. All coefficient functions may be random, but their randomness is through the dependence on the process $V$. Moreover, the following integrability condition holds:
     \begin{equation}\label{integ_initial_data}
         I^2_{0}:=\bE\bigg[\int_0^T\big(|b|^2+|f|^2+|\sigma|^2\big)(t,V_t,0,0,0)dt+|g(V_T,0)|^2\bigg].
     \end{equation}
\end{ass}

\begin{ass}\label{ass-1}
 There are positive constants $k_b,\, k_f,\,K,\, b_y,\, b_z,\, f_x,\, f_z,\,\sigma_x,\,\sigma_y,\,\sigma_z,$ and $g_x$ such that for all $(x_i,y_i,z_i)\in\bR^d\times\bR^{d_0}\times\bR^{d_0\times m} $, $i=1,2$, it holds that a.s.,
 \begin{enumerate}
 \item [(1)]
\begin{equation}
\left\{\begin{array}{l}\label{assumption1_1}
{[b(t,V_t,x_1,y_1,z_1)-b(t,V_t,x_2,y_1,z_1)]\cdot \triangle x\le k_b|\triangle x|^2},\\
{[f(t,V_t,x_1,y_1,z_1)-f(t,V_t,x_1,y_2,z_1)]\cdot \triangle y\le k_f|\triangle y|^2};
\end{array}\right.
\end{equation}
 \item[(2)]
\begin{equation}
\left\{\begin{array}{l}\label{assumption1_2}
{|b(t,V_t,x_1,y_1,z_1)-b(t,V_t,x_2,y_2,z_2)]|^2\le K|\triangle x|^2+b_y|\triangle y|^2+b_z|\triangle z|^2},\\
{|f(t,V_t,x_1,y_1,z_1)-f(t,V_t,x_2,y_2,z_2)|^2\le f_x|\triangle x|^2+K|\triangle y|^2+f_z|\triangle z|^2},\\
{|\sigma(t,V_t,x_1,y_1,z_1)-\sigma(t,V_t,x_2,y_2,z_2)|^2\le \sigma_x|\triangle x|^2+\sigma_y|\triangle y|^2+\sigma_z|\triangle z|^2},\\
{|g(V_T,x_1)-g(V_T,x_2)|^2\le g_x|\triangle x|^2},\\
\end{array}\right.
\end{equation}
where $(\triangle x, \triangle y, \triangle z)=(x_1-x_2,y_1-y_2,z_1-z_2)$.
\end{enumerate}
\end{ass}

\begin{ass}\label{ass-2}
There exists a continuous and increasing function $\rho:[0,\infty)\rightarrow[0,\infty)$ with $\rho(0)=0$ such that for any $0\leq t_i\leq t_j\leq T$, it holds that
\begin{equation}
    \bE\bigg[|\phi(t_i,V_{t_i},x,y,z)-\phi(t_j,V_{t_j},x,y,z)|^2\bigg]\leq \rho(|t_i-t_j|),
\end{equation}
where $\phi=b,\sigma,f$.
\end{ass}

Here, we present an example of fully coupled forward-backward stochastic differential equations of the form \eqref{maineq}, that satisfy the Assumptions \ref{A1-wellp}-\ref{ass-2}. 
\begin{exam}
    Fix $T>0$ and let $W$ be a 1-dimensional Brownian motion and $V_t$ be the following Ornstein–Uhlenbeck process: 
    \begin{equation}\label{Vt_fbm}
        V_t = v_0 + \widehat{W}_t, 
\end{equation}
where, $\widehat{W}_t$ is the fractional Brownian motion process with Hurst index $H\in(0,\frac{1}{2})$ and is defined as
\begin{equation}
	\widehat{W}_t \coloneqq \int_0^t \mathcal K(t-s) \, d W_s, \quad \mathcal K(r) \coloneqq
	\sqrt{2H} r^{H-1/2}, \quad r > 0.
\end{equation}
Then, with some bounded constants $s_3,k_1\neq 0$, we define the coupled forward-backward stochastic differential equations as:
\begin{equation}\label{example_fbsde}
    \begin{cases}
        dX_t &= \big(-\sin{X_t}+\tanh{Y_t}+Z_t+\cos{V_t}\big)dt + \big(X_t+Y_t+s_3 Z_t+V_t\big)dW_t, \ t\in[0,T],\\
        X_0&=0, \\
        -dY_t &=\big(X_t+\cos{Y_t}+\sin{Z_t}+\sqrt{|V_t|}\big)dt - Z_t dW_t, \ t\in[0,T],\\
        Y_T &= k_1 X_T +\int_0^T \sin{V_t}dt.
    \end{cases}
\end{equation}
\end{exam}

Notice that, here $\sigma_3=s_3\neq 0$ and $g_1=k_1\neq 0$ ensures that coefficient function $\sigma$ is monotone in $Z$ and $g$ is monotone in $X$. And we can find positive constants $c_,c_2,c_3$ in Assumption \ref{A1-wellp} based on $s_3$ and $k_1$. Also, coefficient functions $b,f,\sigma$ and $g$ satisfy corresponding conditions from Assumptions \ref{A2-wellp}-\ref{ass-2}.




\subsection{Solutions to FBSDEs}
The following result about the well-posedness of non-Markovian coupled FBSDEs owes to Ma et al. \cite{JMaZWuDZhangJZhang2015OnWellPossedofFBSDEsAUnifiedAppr}.
\begin{thm}[Wellpossedness of FBSDE]\label{Wellp_thm}
Suppose all the processes involved are one-dimensional and that Assumptions \ref{A1-wellp}  -\ref{ass-1} hold. Then it holds that 
\begin{enumerate}[(1)]
	
	 \item FBSDE \eqref{maineq} admits a unique solution $(X,Y,Z)$, and there exists a constant $C>0$,depending only on $T$, the Lipschitz constants in Assumption \ref{ass-1}, and $c_1,c_2,c_3$, such that
	\begin{equation}
		\bE\bigg[\sup_{0\leq t\leq T} \bigg(|X_t|^2+|Y_t|^2\bigg)+\int_0^T |Z_t|^2 dt\bigg]\leq C\big(\bE|x_0|^2+I_{0}^2\big);
	\end{equation}
    \item FBSDE \eqref{maineq} possesses a decoupling field $u$ such that $Y_t=u(t,X_t)$ and  $u_1:=\frac{u(t,x_1)-u(t,x_2)}{x_1-x_2}$ satisfies the corresponding property of $g_1$ with $c_2$ being replaced by $c_3$ in Assumption \ref{A1-wellp}.

\end{enumerate}
\end{thm}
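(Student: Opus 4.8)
The plan is to follow the unified, decoupling-field approach of Ma et al.\ \cite{JMaZWuDZhangJZhang2015OnWellPossedofFBSDEsAUnifiedAppr}: rather than attacking \eqref{maineq} head-on, I would first establish part (2) — the existence and the sharp spatial Lipschitz bound for the decoupling field $u$ — and then recover part (1) as a consequence. The starting point is the classical small-time result: under Assumption \ref{ass-1} the coefficients are Lipschitz, so on any sufficiently short subinterval $[T-\delta,T]$ a contraction argument (Antonelli \cite{FAntonelli1993BFSDEs}) yields a unique solution together with a Lipschitz decoupling field $u(t,x)$ satisfying $Y_t=u(t,X_t)$. The entire difficulty is to propagate this local solvability to the full interval $[0,T]$ with a step size $\delta$ that does not shrink to zero, and this is precisely what a uniform bound on the difference quotient $u_1=[u(t,x_1)-u(t,x_2)]/[x_1-x_2]$ provides.

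The core step is to derive the scalar backward equation governing $u_1$. Taking two solutions issued from $x_1\neq x_2$ and writing $(\triangle x,\triangle y,\triangle z)$ for their differences, Assumption \ref{ass-1} lets me represent the increments of $b,\sigma,f,g$ through the difference quotients $\varphi_1,\varphi_2,\varphi_3$ and $g_1$ introduced in the preliminaries. Because $\sigma$ depends on $Z$, the relation linking $\triangle z$ to $u_1\,\triangle x$ is implicit, and solving it introduces the factor $1/(1-\sigma_3 u_1)$ — exactly the denominator appearing in $F(\theta_1,\theta_2;t,y)$. Carrying this computation through shows that $u_1$ satisfies a Riccati-type BSDE whose driver is dominated above and below by $\overline{F}(t,\cdot)$ and $\underline{F}(t,\cdot)$. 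The constraint $c_1c_3<1$ is what keeps $1-\sigma_3 u_1$ bounded away from zero (so the equation is non-degenerate) as long as $|u_1|\le c_3$.

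Next I would run a comparison/invariance argument on this scalar equation. At the terminal time $u_1(T,\cdot)=g_1$, which lies in $[-c_2,c_2]$ in Case I (and in the corresponding regimes for Cases II and III). The conditions $\overline{F}(t,c_3)\le\epsilon$ and $\underline{F}(t,-c_3)\ge-\epsilon$ guarantee, via the comparison principle for the driver $F$, that the solution $u_1(t,\cdot)$ cannot exit $[-c_3,c_3]$ as one integrates backward from $T$; the three cases of Assumption \ref{A1-wellp} simply organize the sign and magnitude configurations of $(\sigma_3,g_1)$ under which this invariance holds. This yields the uniform bound $|u_1|\le c_3$ claimed in part (2), with $c_2$ replaced by $c_3$.

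Finally, the uniform Lipschitz bound on $u$ makes the continuation step size $\delta$ uniform, so iterating the local construction covers $[0,T]$ and produces a global decoupling field. Freezing $Y_t=u(t,X_t)$ turns the forward equation into a well-posed SDE (with $Z$ recovered through the non-degenerate relation between $\triangle z$ and $1-\sigma_3 u_1$), and standard SDE/BSDE moment estimates — BDG, Gronwall, and the monotonicity bounds of Assumption \ref{ass-1} — then deliver the a priori estimate of part (1), with a constant depending only on $T$, the Lipschitz constants, and $c_1,c_2,c_3$. I expect the main obstacle to be the derivation and comparison analysis of the equation for $u_1$: handling the implicit $Z$-dependence through $\sigma_3$, keeping the denominator $1-\sigma_3 u_1$ non-degenerate, and verifying the invariance of $[-c_3,c_3]$ uniformly across all three cases is the delicate heart of the argument.
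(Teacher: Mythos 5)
The paper does not actually prove Theorem \ref{Wellp_thm}: it is quoted verbatim from Ma et al.\ \cite{JMaZWuDZhangJZhang2015OnWellPossedofFBSDEsAUnifiedAppr}, so there is no internal proof to compare against. Your sketch is a faithful reconstruction of the strategy of that reference (the ``unified approach''): local solvability by contraction, the characteristic Riccati-type BSDE for the difference quotient $u_1$ with the denominator $1-\sigma_3 u_1$ kept non-degenerate via $c_1c_3<1$, comparison of the driver with $\overline{F}$ and $\underline{F}$, and continuation with a step size made uniform by the a priori bound on $u_1$, after which the moment estimate of part (1) follows from standard BDG/Gronwall arguments. For Case I this outline is essentially the right argument.

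There is, however, one genuine gap: your claim that the comparison argument yields the invariance of $[-c_3,c_3]$ ``uniformly across all three cases'' is wrong, and it misreads what part (2) asserts. The conclusion of part (2) is that $u_1$ inherits the \emph{case-dependent} property of $g_1$ with $c_2$ replaced by $c_3$ (see the paper's Remark following the theorem): in Case I this is indeed $|u_1|\le c_3$, but in Case II it is $|u_1|\ge c_3^{-1}$ with $u_1$ of constant sign, and in Case III it is $\sigma_3 u_1\le c_1c_3$ with a one-sided sign condition. Case II is an \emph{exterior} invariance — the solution must stay bounded away from zero — and cannot be obtained by showing $u_1$ does not exit $[-c_3,c_3]$; in the reference it is handled by a genuinely different device (roughly, a transformation exchanging the roles of the forward and backward components, equivalently working with the reciprocal field), and Case III requires its own one-sided comparison. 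So while your proposal correctly identifies the delicate heart of the proof, the single invariance argument you propose only covers one of the three regimes, and the remaining two need distinct comparison/transformation arguments before the uniform continuation step can be run.
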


\begin{rmk}
    Here, we would like to clarify the property of the decoupling field $u$ in terms of the difference defined as $u_1$. Recall that, for the existence of the unique solution of FBSDE \eqref{maineq}, we need to satisfy only one of the three cases in Assumption \ref{A1-wellp}. And corresponding to each of these cases, the resulting decoupling field will satisfy the following:
    \begin{enumerate}[({Case} I)]
      \item $\abs{u_1}\leq c_3$.
      \item $\abs{u_1}\geq c_3^{-1}$, and $u_1$ keeps the same sign, meaning $u$ is monotone with respect to $X$.
      \item $\sigma_3u_1\leq c_1c_3$, and if $g$ is monotone with respect to $X$ then $u$ is also monotone with respect to $X$.
  \end{enumerate}
\end{rmk}

\noindent
Next, we state and prove the following continuity result.
\begin{thm}[Continuity of Solution]\label{stability_FBSDE}
 Under the assumptions of Theorem \ref{Wellp_thm} we have 
 \begin{align}
     &\sup_i \bE\bigg[ \sup_{t\in[t_i,t_{i+1}]} \bigg(|X_t-X_{t_i}|^2+|Y_t-Y_{t_i}|^2\bigg)\bigg]\nonumber\\
     &\leq C\big(I_{0}^2+\bE|x_0|^2\big)h+C\sup_i\bE\left[\int_{t_i}^{t_{i+1}}|Z_s|^2ds+\int_{t_i}^{t_{i+1}}|\sigma|^2(s,V_s,0,0,0)ds\right].
\end{align}

\end{thm}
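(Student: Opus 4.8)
The plan is to exploit the fact that the solution triple $(X,Y,Z)$ together with its a priori bound from Theorem \ref{Wellp_thm}(1) is already in hand, so that the estimate reduces to controlling the increments of $X$ and $Y$ over a single mesh interval $[t_i,t_{i+1}]$ directly from the integral form of \eqref{maineq}. In particular, no fixed-point or decoupling argument is needed: the forward--backward coupling enters only through the already-established bound $\bE[\sup_t(|X_t|^2+|Y_t|^2)+\int_0^T|Z_t|^2dt]\le C(\bE|x_0|^2+I_0^2)$. For the forward component I would write, for $t\in[t_i,t_{i+1}]$,
\begin{equation*}
X_t-X_{t_i}=\int_{t_i}^t b(s,V_s,X_s,Y_s,Z_s)\,ds+\int_{t_i}^t \sigma(s,V_s,X_s,Y_s,Z_s)\,dW_s,
\end{equation*}
and for the backward component, using $-dY=f\,dt-Z\,dW$,
\begin{equation*}
Y_t-Y_{t_i}=-\int_{t_i}^t f(s,V_s,X_s,Y_s,Z_s)\,ds+\int_{t_i}^t Z_s\,dW_s.
\end{equation*}

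First I would apply $|a+b|^2\le 2|a|^2+2|b|^2$ and treat the Lebesgue and It\^o integrals separately. To the drift terms I would apply Cauchy--Schwarz, which produces an extra factor of the interval length, e.g.\ $\sup_{t}\big|\int_{t_i}^t b\,ds\big|^2\le h\int_{t_i}^{t_{i+1}}|b|^2\,ds$; to the stochastic integrals I would apply the Burkholder--Davis--Gundy inequality to pull the supremum inside, giving $\bE\sup_t\big|\int_{t_i}^t\sigma\,dW_s\big|^2\le C\,\bE\int_{t_i}^{t_{i+1}}|\sigma|^2\,ds$, and likewise for $\int Z\,dW$.

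Next I would linearize the coefficients through Assumption \ref{ass-1}: subtracting the value at $(0,0,0)$ and using the Lipschitz bounds \eqref{assumption1_2} yields $|b|^2\le C\big(|b(s,V_s,0,0,0)|^2+|X_s|^2+|Y_s|^2+|Z_s|^2\big)$, and analogously for $f$ and $\sigma$. Substituting, the resulting terms split into three types: the contributions of the frozen drift coefficients $|b(s,\cdot,0,0,0)|^2$, $|f(s,\cdot,0,0,0)|^2$ and of $|X_s|^2$, $|Y_s|^2$ are integrated over $[t_i,t_{i+1}]$ and, after invoking $I_0^2$ from \eqref{integ_initial_data} and the a priori estimate of Theorem \ref{Wellp_thm}(1), bounded by $Ch\big(I_0^2+\bE|x_0|^2\big)$ (the drift terms carry an extra $h$, harmless since $h<1$); the $|Z_s|^2$ contributions collapse, using $h<1$ to absorb the extra factor, into $C\,\bE\int_{t_i}^{t_{i+1}}|Z_s|^2\,ds$; and the frozen diffusion term survives as $C\,\bE\int_{t_i}^{t_{i+1}}|\sigma|^2(s,V_s,0,0,0)\,ds$. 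Taking $\sup_i$ (all constants being independent of $i$ since the mesh is uniform) then gives the claim.

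The estimate is largely bookkeeping once the a priori bound is available, but two points require care. The first, and the genuine obstacle, is that the frozen diffusion term $|\sigma(s,V_s,0,0,0)|^2$ cannot be absorbed into an $O(h)$ remainder without a time-regularity hypothesis such as Assumption \ref{ass-2}, which is deliberately \emph{not} invoked here; it must therefore be retained explicitly on the right-hand side, which is precisely why it appears there. The second is that the supremum over $t$ is taken \emph{before} the expectation, so that the martingale parts genuinely require BDG rather than the mere It\^o isometry.
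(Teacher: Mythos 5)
Your proposal is correct and takes essentially the same route as the paper's proof: both estimate the increments directly from the integral form of \eqref{maineq}, apply Young/Cauchy--Schwarz to the drift integrals (gaining the factor $h$), use a maximal inequality on the martingale parts before taking expectations (the paper invokes Doob's $L^2$ inequality where you invoke BDG, which is equivalent at this exponent), linearize the coefficients around $(0,0,0)$ via Assumption \ref{ass-1}, and close the estimate with the a priori bound of Theorem \ref{Wellp_thm}(1). Your remarks on why the frozen diffusion term $|\sigma|^2(s,V_s,0,0,0)$ and the $Z$-contribution must be retained explicitly on the right-hand side correspond exactly to how the paper's bound is organized, so there is nothing to add.
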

\begin{proof}
Assume $0\leq v\leq t\leq T$. From FBSDE \eqref{maineq}, we have the following truncated form:
\begin{equation}\label{maineq_trunc}
    \begin{cases}
        X_t &= X_v+\int_v^t b(s,V_s,X_s,Y_s,Z_s)ds + \int_v^t \sigma(s,V_s,X_s,Y_s,Z_s)dW_s, \\
        Y_v &=Y_t+\int_v^t f(s,V_s,X_s,Y_s,Z_s)ds - \int_v^t Z_s dW_s.
    \end{cases}
\end{equation}
For the forward component, we estimate
\begin{equation*}
\begin{split}
    |X_t-X_v|^2 &= \bigg|\int_v^t b(s,V_s,X_s,Y_s,Z_s)ds\bigg|^2 + \bigg|\int_v^t \sigma(s,V_s,X_s,Y_s,Z_s)dW_s\bigg|^2\\
    &+2\bigg[\int_v^t b(s,V_s,X_s,Y_s,Z_s)ds \cdot\int_v^t \sigma(s,V_s,X_s,Y_s,Z_s)dW_s\bigg]
\end{split}
\end{equation*}
Applying Young's inequality gives
\begin{equation}\label{refpt_1}
    |X_t-X_v|^2 \leq 2\bigg|\int_v^t b(s,V_s,X_s,Y_s,Z_s)ds\bigg|^2 + 2\bigg|\int_v^t \sigma(s,V_s,X_s,Y_s,Z_s)dW_s\bigg|^2
\end{equation}
By the Cauchy-Schwarz inequality and the estimates from Theorem \ref{Wellp_thm}, we obtain
\begin{equation*}
    \begin{split}
        \bE\Bigg[\sup_{\tau \in[v, t]}\bigg|\int_v^{\tau} b(s,V_s,X_s,Y_s,Z_s)ds \bigg|^2\Bigg] &\leq \bE\bigg[\int_v^t 1^2 ds \cdot\int_v^t |b(s,V_s,X_s,Y_s,Z_s)|^2ds\bigg]\\
        &\leq (t-v)\bE\int_v^t2\bigg(|b(s,V_s,0,0,0)|^2+K|X_s|^2+b_y|Y_s|^2+b_z|Z_s|^2\bigg)ds\\
        &\leq C\big(I_{0}^2+\bE|x_0|^2\big)(t-v)+C(t-v)\bE\int_v^t|Z_s|^2ds.
    \end{split}
\end{equation*}
For the stochastic integral, Doob's inequality yields
\begin{align*}
        &\bE\Bigg[\sup_{\tau \in[v, t]}\bigg|\int_v^{\tau} \sigma(s,V_s,X_s,Y_s,Z_s)dW_s \bigg|^2 \Bigg]\\
        & \leq C\bE\int_v^t |\sigma(s,V_s,X_s,Y_s,Z_s)|^2ds\\
        &\leq C\bE\int_v^t\bigg(2|\sigma(s,V_s,0,0,0)|^2+\sigma_x|X_s|^2+\sigma_y|Y_s|^2+\sigma_z|Z_s|^2\bigg)ds\\
        &\leq  C\big(I_{0}^2+\bE|x_0|^2\big)(t-v)+C\bE\left[\int_{v}^{t}|Z_s|^2ds+\int_{v}^{t}|\sigma|^2(s,V_s,0,0,0)ds\right].
\end{align*}
 Taking the supremum  over $\tau\in [v,t]$ and then expectation on both sides of  \eqref{refpt_1}, and finally using the above estimates, we conclude that
\begin{equation*}
    \bE\Bigg[\sup_{\tau \in[v, t]} |X_{\tau}-X_v|^2\Bigg] \leq C\big(I_{0}^2+\bE|x_0|^2\big)(t-v)+C\bE\left[\int_{v}^{t}|Z_s|^2ds+\int_{v}^{t}|\sigma|^2(s,V_s,0,0,0)ds\right].
\end{equation*}
In particular, for any $i\in \{0,\cdots, N-1\}$,
\begin{equation*}
     \bE\bigg[\sup_{t\in[t_i,t_{i+1}]}|X_t-X_{t_i}|^2\bigg] \leq C\big(I_{0}^2+\bE|x_0|^2\big)h+C\bE\left[\int_{t_i}^{t_{i+1}}|Z_s|^2ds+\int_{t_i}^{t_{i
     +1}}|\sigma|^2(s,V_s,0,0,0)ds\right],
\end{equation*}
and hence,
\begin{equation}
\sup_{i}\bE\bigg[\sup_{t\in[t_i,t_{i+1}]}|X_t-X_{t_i}|^2\bigg] \leq C\big(I_{0}^2+\bE|x_0|^2\big)h+C\bE\sup_i\left[\int_{t_i}^{t_{i+1}}|Z_s|^2ds+\int_{t_i}^{t_{i+1}}|\sigma|^2(s,V_s,0,0,0)ds\right].
\end{equation}
\\[4pt]
For the backward equation in (\ref{maineq_trunc}), we similarly estimate
\begin{equation*}
\begin{split}
    |Y_t-Y_v|^2 &\leq \bigg|\int_v^t f(s,V_s,X_s,Y_s,Z_s)ds\bigg|^2 + \bigg|\int_v^t Z_sdW_s\bigg|^2\\
    &+2\bigg[\bigg|\int_v^t f(s,V_s,X_s,Y_s,Z_s)ds\bigg|.\bigg|\int_v^t Z_sdW_s\bigg|\bigg]\\
     &\leq 2\bigg|\int_v^t f(s,V_s,X_s,Y_s,Z_s)ds\bigg|^2 + 2\bigg|\int_v^t Z_sdW_s\bigg|^2.
\end{split}
\end{equation*}
Using arguments similar to those for the forward component, we obtain
\begin{equation}
     \sup_{i}\bE\bigg[\sup_{t\in[t_i,t_{i+1}]}|Y_t-Y_{t_i}|^2\bigg] \leq C\big(I_{0}^2+\bE|x_0|^2\big)h+C\sup_i\bE\int_{t_i}^{t_{i+1}}|Z_s|^2ds.
\end{equation}
This completes the proof.
\end{proof}

\noindent
We now consider the Euler discretization scheme for the coupled FBSDE \eqref{maineq}:
\begin{align}\label{disc_fwbw}
	\begin{cases}
		X^\pi_{t_{i+1}} &= X^\pi_{t_i}+b(t_i,V_{t_i},X^\pi_{t_i},Y^\pi_{t_i},Z^\pi_{t_i})\Delta t_i + \sigma(t_i,V_{t_i},X^\pi_{t_i},Y^\pi_{t_i},Z^\pi_{t_i})\Delta W_{t_i},\\
		Y^\pi_{t_{i+1}} &=Y^\pi_{t_i}-f(t_i,V_{t_i},X^\pi_{t_i},Y^\pi_{t_i},Z^\pi_{t_i})\Delta t_i + Z^\pi_{t_i} \Delta W_{t_i},
	\end{cases}
\end{align}
where we use the forward representation for the backward SDE and $\Delta W_{t_i}:=W_{t_{i+1}}-W_{t_i}$. Taking conditional expectations with respect to $\cF_{t_i}$ on both sides of second equation in (\ref{disc_fwbw}), we obtain
\begin{equation*}
	Y^\pi_{t_i}=\bE\left[Y^\pi_{t_{i+1}}+f(t_i,v_{t_i},X^\pi_{t_i},Y^\pi_{t_i},Z^\pi_{t_i})\Delta t_i|\cF_{t_i}\right].
\end{equation*}
Multiplying $\Delta W_{t_i}$ on both sides of the second equation in (\ref{disc_fwbw}) and again taking conditional expectations with respect to $\cF_{t_i}$ again, we have:
\begin{equation*}
	Z_{t_i}^\pi=\frac{1}{\Delta t_i}\bE \left[Y^\pi_{t_{i+1}}\Delta W_{t_i}|\cF_{t_i}\right].
\end{equation*}
Inspired by the above observations and in the spirit of \cite{JZhang2004NumericalSchemeBSDEs}, we propose the following discrete approximation scheme:

\begin{equation}\label{discrete_eq}
	\left\{\begin{array}{l}
		\overline{X}_0^{\pi}=x_0,\\
		\overline{X}_{t_{i+1}}^{\pi}=\overline{X}_{t_{i}}^{\pi}+b\big(t_i,V_{t_i},\overline{X}_{t_{i}}^{\pi},\overline{Y}_{t_{i}}^{\pi},\overline{Z}_{t_{i}}^{\pi}\big)h+\sigma\big(t_i,V_{t_i},\overline{X}_{t_{i}}^{\pi},\overline{Y}_{t_{i}}^{\pi},\overline{Z}_{t_{i}}^{\pi}\big)\Delta W_{t_{i}},\\
		\overline{Y}_{t_N}^{\pi}=g(V_T,\overline{X}_{t_{N}}^{\pi}),\\
		\overline{Z}_{t_{i}}^{\pi}=\frac{1}{h}\bE\big[\overline{Y}_{t_{i+1}}^{\pi}\Delta W_{t_{i}}\big|\mathcal{F}_{t_i}\big],\\
		\overline{Y}_{t_{i}}^{\pi}=\bE\bigg[\overline{Y}_{t_{i+1}}^{\pi}+f\big(t_i,V_{t_i},\overline{X}_{t_{i}}^{\pi},\overline{Y}_{t_{i}}^{\pi},\overline{Z}_{t_{i}}^{\pi}\big)h\big|\mathcal{F}_{t_i}\bigg].
	\end{array}\right.
\end{equation}
\\[4pt]
Our next theorem in this section provides an estimate for the difference between the solution of the continuous FBSDE system (\ref{maineq}) and its discrete approximation (\ref{discrete_eq}). Before stating the theorem, we present two auxiliary lemmas whose proofs, along with the proof of the theorem, are given in the appendix.
\begin{lem}\label{R3_fw_est}
Fix $i$. For $l=1,2$, let
\begin{align*}
    X_{t_{i+1}}^l=X_{t_i}^l+b(t_i,V_{t_i},X_{t_i}^l,Y_{t_i}^l,Z_{t_i}^l)h+\int_{t_i}^{t_{i+1}}\alpha_t^l dt+\sigma(t_i,V_{t_i},X_{t_i}^l,Y_{t_i}^l,Z_{t_i}^l)\Delta W_{t_i}+\int_{t_i}^{t_{i+1}}\beta_t^ldW_t,
\end{align*}
where $X_{t_i}^l,Y_{t_i}^l,Z_{t_i}^l$ are $\mathcal{F}_{t_i}$-measurable and $\alpha^l,\beta^l$ are $(\mathcal{F}_t)_{t\in[0,T]}$-adapted. Then for any $\lambda_j>0,\;j=1,2$, it holds that
\begin{align}
    \bE_{t_i}\big[|\Delta X_{t_{i+1}}|^2\big]&\leq (1+A_1h)|\Delta X_{t_i}|^2+A_2h|\Delta Y_{t_i}|^2+A_3h|\Delta Z_{t_i}|^2\nonumber\\
    &+\bigg(1+\frac{4}{\lambda_1}+\frac{1}{\lambda_2}+\lambda_2\bigg)\bE_{t_i}\bigg[\int_{t_i}^{t_{i+1}}\big(|\Delta\alpha_t|^2+|\Delta\beta_t|^2\big)dt\bigg],
\end{align}
where the differences are defined as
\begin{equation*}
    \Delta X:=X^1-X^2,\hspace{2mm}\Delta Y:=Y^1-Y^2,\hspace{2mm}\Delta Z:=Z^1-Z^2,\hspace{2mm}\Delta \alpha:=\alpha^1-\alpha^2,\hspace{2mm}\Delta \beta:=\beta^1-\beta^2,
\end{equation*}
and the constants $A_1, A_2, A_3$ are given by
\begin{align*}
    A_1&=\lambda_1+\bigg(\frac{2}{\lambda_1}+h+\lambda_2h\bigg)K+(1+\lambda_1)\sigma_x,\\
    A_2&=\bigg(\frac{2}{\lambda_1}+h+\lambda_2h\bigg)b_y+(1+\lambda_1)\sigma_y,\\
    A_3&=\bigg(\frac{2}{\lambda_1}+h+\lambda_2h\bigg)b_z+(1+\lambda_1)\sigma_z.
\end{align*}
\end{lem}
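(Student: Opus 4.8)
The plan is to form the difference $\Delta X_{t_{i+1}} = X^1_{t_{i+1}} - X^2_{t_{i+1}}$ directly from the two defining identities. Writing $\Delta b$ and $\Delta\sigma$ for the differences of the drift and diffusion coefficients evaluated at $(X^l_{t_i},Y^l_{t_i},Z^l_{t_i})$, $l=1,2$, this gives $\Delta X_{t_{i+1}} = \Delta X_{t_i} + h\,\Delta b + \int_{t_i}^{t_{i+1}}\Delta\alpha_t\,dt + \Delta\sigma\,\Delta W_{t_i} + \int_{t_i}^{t_{i+1}}\Delta\beta_t\,dW_t$. The three quantities $\Delta X_{t_i}$, $\Delta b$, $\Delta\sigma$ are $\cF_{t_i}$-measurable, while $\Delta W_{t_i}$ and the It\^o integral of $\Delta\beta$ are centered martingale increments over $[t_i,t_{i+1}]$; this structure is what I will exploit when squaring and conditioning.

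Next I would compute $\bE_{t_i}[|\Delta X_{t_{i+1}}|^2]$ by expanding the square. The cross terms pairing any $\cF_{t_i}$-measurable factor with $\Delta W_{t_i}$ or with $\int\Delta\beta\,dW$ vanish, by the martingale property and $\bE_{t_i}[\Delta W_{t_i}]=0$. It is convenient to merge the two stochastic contributions into the single It\^o integral $\int_{t_i}^{t_{i+1}}(\Delta\sigma+\Delta\beta_t)\,dW_t$, whose conditional second moment is $\bE_{t_i}\int_{t_i}^{t_{i+1}}|\Delta\sigma+\Delta\beta_t|^2\,dt$ by the It\^o isometry (using $\bE_{t_i}[(\Delta W_{t_i})^2]=h$ when the $\Delta\sigma$ part is isolated). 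A Young inequality with parameter $\lambda_1$ applied inside this integral separates $(1+\lambda_1)h|\Delta\sigma|^2$ from $(1+\lambda_1^{-1})\bE_{t_i}\int|\Delta\beta_t|^2\,dt$, which is the source of the $(1+\lambda_1)$ factor on the $\sigma$ terms in $A_1,A_2,A_3$.

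The remaining, genuinely non-orthogonal terms come from the drift error $\int\Delta\alpha\,dt$, whose conditional mean is nonzero and which is correlated with both the $\cF_{t_i}$-drift $\Delta X_{t_i}+h\Delta b$ and the stochastic terms. I would expand $|\Delta X_{t_i}+h\Delta b|^2$ and treat its cross term $2h\,\Delta X_{t_i}\Delta b$ by Young with parameter $\lambda_1 h$, producing exactly the $(1+\lambda_1 h)|\Delta X_{t_i}|^2$ together with an $h^2+\lambda_1^{-1}h$ multiple of $|\Delta b|^2$. The cross terms involving $\int\Delta\alpha\,dt$ are controlled by Cauchy--Schwarz and the bound $\bE_{t_i}|\int_{t_i}^{t_{i+1}}\Delta\alpha_t\,dt|^2 \le h\,\bE_{t_i}\int_{t_i}^{t_{i+1}}|\Delta\alpha_t|^2\,dt$, combined with Young inequalities using the second parameter $\lambda_2$; these supply the $\lambda_2 h$ contribution inside the $b$-coefficients and the $\lambda_2+\lambda_2^{-1}$ together with the further $\lambda_1^{-1}$ pieces of the error constant $1+4\lambda_1^{-1}+\lambda_2^{-1}+\lambda_2$. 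Throughout I would use $h<1$ to fold factors of the form $h/\lambda$ into $1/\lambda$.

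Finally, I would substitute the Lipschitz estimates of Assumption \ref{ass-1}(2), namely $|\Delta b|^2\le K|\Delta X_{t_i}|^2+b_y|\Delta Y_{t_i}|^2+b_z|\Delta Z_{t_i}|^2$ and the analogous bound for $|\Delta\sigma|^2$, and collect the coefficients of $|\Delta X_{t_i}|^2$, $|\Delta Y_{t_i}|^2$, $|\Delta Z_{t_i}|^2$ and of $\bE_{t_i}\int(|\Delta\alpha_t|^2+|\Delta\beta_t|^2)\,dt$ to read off $A_1,A_2,A_3$ and the error constant. I expect the main obstacle to be purely the bookkeeping of the several non-orthogonal cross terms generated by the drift error $\int\Delta\alpha\,dt$: one must assign the right Young parameter ($\lambda_1$ versus $\lambda_2$) to each cross term and track the $h$-powers carefully so that the constants land exactly in the stated form rather than merely up to an absolute multiple.
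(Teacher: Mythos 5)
Your plan is, in substance, the paper's own proof of Lemma \ref{R3_fw_est}: form the one-step difference equation for $\Delta X_{t_{i+1}}$, take $\bE_{t_i}$ of the expanded square so that every cross term pairing an $\cF_{t_i}$-measurable factor with $\Delta W_{t_i}$ or $\int_{t_i}^{t_{i+1}}\Delta\beta_t\,dW_t$ vanishes, control the cross terms generated by the drift error $\int_{t_i}^{t_{i+1}}\Delta\alpha_t\,dt$ by Cauchy--Schwarz (via $\big|\bE_{t_i}\int_{t_i}^{t_{i+1}}\Delta\alpha_t\,dt\big|^2\le h\,\bE_{t_i}\int_{t_i}^{t_{i+1}}|\Delta\alpha_t|^2dt$) and Young inequalities in $\lambda_1,\lambda_2$, and finish by inserting the Lipschitz bounds of Assumption \ref{ass-1}(2) for $|\Delta b|^2$ and $|\Delta\sigma|^2$.

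Two of your concrete parameter allocations, however, would as written not land on the stated constants. (i) You spend the entire budget $\lambda_1 h|\Delta X_{t_i}|^2$ on the cross term $2h\,\Delta X_{t_i}\Delta b$; but the cross term $2\,\Delta X_{t_i}\,\bE_{t_i}\int_{t_i}^{t_{i+1}}\Delta\alpha_t\,dt$ also unavoidably produces a multiple of $|\Delta X_{t_i}|^2$, so your allocation forces either a coefficient $2\lambda_1$ or a standalone $\lambda_2$ into $A_1$, neither of which appears in the stated $A_1=\lambda_1+\big(\tfrac{2}{\lambda_1}+h+\lambda_2h\big)K+(1+\lambda_1)\sigma_x$. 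The paper instead gives each of these two cross terms $\tfrac{\lambda_1}{2}h|\Delta X_{t_i}|^2$, which is also the origin of the factor $\tfrac{2}{\lambda_1}$ (rather than your $\tfrac{1}{\lambda_1}$) multiplying $K$, $b_y$, $b_z$. (ii) Merging $\Delta\sigma\,\Delta W_{t_i}$ and $\int\Delta\beta_t\,dW_t$ into a single It\^o integral is harmless for the quadratic term, but the surviving cross term of the merged integral against $\int\Delta\alpha_t\,dt$ then attaches one Young parameter to $|\Delta\sigma|^2$ and $|\Delta\beta_t|^2$ simultaneously, which obstructs hitting the factor $(1+\lambda_1)h|\Delta\sigma|^2$ and the error constant $1+\tfrac{4}{\lambda_1}+\tfrac{1}{\lambda_2}+\lambda_2$ exactly; the paper keeps the two stochastic terms separate and gives the cross terms $2\Delta\sigma\Delta W_{t_i}\int\Delta\alpha_t\,dt$ and $2\Delta\sigma\Delta W_{t_i}\int\Delta\beta_t\,dW_t$ a weight $\tfrac{\lambda_1}{2}$ each, so that together with the isometry term $h|\Delta\sigma|^2$ one obtains $(1+\lambda_1)h|\Delta\sigma|^2$. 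Both points are repairs of bookkeeping --- the hazard you yourself flagged --- not of method; with the half-splitting of $\lambda_1$ and the stochastic terms kept separate in the $\Delta\alpha$ cross term, the argument closes exactly as you outline.
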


\begin{lem}\label{R4_bw_est}
 Fix $i$. For $l=1,2$, let
\begin{align*}
    Y_{{t_i}}^l=Y_{t_{i+1}}^l+f(t_i,V_{t_i},X_{t_i}^l,Y_{t_i}^l,\hat{Z}_{t_i}^l)h+\int_{t_i}^{t_{i+1}}\gamma_t^l dt-\int_{t_i}^{t_{i+1}}Z_t^ldW_t,
\end{align*}
where $X_{t_i}^l,Y_{t_i}^l,Z_{t_i}^l$ are $\mathcal{F}_{t_i}$-measurable, $\gamma^l$ is $(\mathcal{F}_t)_{t\in[0,T]}$-adapted and 
\begin{equation*}
    \hat{Z}_{t_i}^l:=\frac{1}{h}\bE_{t_i}\big[Y_{t_{i+1}}^l\Delta W_{t_{i}}\big].
\end{equation*}
Then for any $\lambda_j>0,\;j=3,4,5$, it holds that
\begin{align}
    (1-A_6h)|\Delta Y_{{t_i}}|^2+A_7h|\Delta \hat{Z}_{{t_i}}|^2&\leq e^{A_4h}\bE_{t_i}|\Delta Y_{{t_{i+1}}}|^2+A_5h|\Delta X_{t_i}|^2\nonumber\\
    &+\left(1+\frac{1}{\lambda_4h}+\frac{1}{\lambda_3}\right)\bE_{t_i}\bigg|\int_{t_i}^{t_{i+1}}\Delta\gamma_tdt\bigg|^2,
\end{align}
where the differences are defined as
\begin{equation*}
    \Delta X:=X^1-X^2,\hspace{2mm}\Delta Y:=Y^1-Y^2,\hspace{2mm}\Delta \hat{Z}:=\hat{Z}^1-\hat{Z}^2,\hspace{2mm}\Delta \gamma:=\gamma^1-\gamma^2,
\end{equation*}
and the constants $A_4, A_5, A_6, A_7$ are given by
\begin{align*}
    A_4&=\frac{1}{h}\ln(1+\tilde{A}_4h),\\
    \tilde{A}_4&=\lambda_4+\frac{1}{\lambda_5}(1+\lambda_4h),\\
    A_5&=(1+\lambda_4h )f_x(h+\lambda_5),\\
    A_6&=(1+\lambda_4h)K(h+\lambda_5),\\
    A_7&=1-\tilde{A}_7,\\
    \tilde{A}_7&=\lambda_3+(1+\lambda_4h)(h+\lambda_5)f_z.
\end{align*}
\end{lem}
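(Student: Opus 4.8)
The plan is to mirror the forward estimate of Lemma~\ref{R3_fw_est}, but to exploit the implicit, conditional-expectation structure of the backward scheme; as in the forward lemma, I expect to use only the Lipschitz bound of Assumption~\ref{ass-1}(2), not the monotonicity~(1). Writing $\Delta Y_{t_i}:=Y_{t_i}^1-Y_{t_i}^2$ and $\Delta f:=f(t_i,V_{t_i},X_{t_i}^1,Y_{t_i}^1,\hat Z_{t_i}^1)-f(t_i,V_{t_i},X_{t_i}^2,Y_{t_i}^2,\hat Z_{t_i}^2)$, I would first subtract the two defining relations ($l=1,2$) to obtain
\begin{equation*}
\Delta Y_{t_i} = \Delta Y_{t_{i+1}} + h\,\Delta f + \int_{t_i}^{t_{i+1}}\Delta\gamma_t\,dt - \int_{t_i}^{t_{i+1}}\Delta Z_t\,dW_t,
\end{equation*}
and then apply $\bE_{t_i}[\cdot]$. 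Since $\Delta Y_{t_i}$ and $\Delta f$ are $\cF_{t_i}$-measurable (note $\hat Z_{t_i}^l$ is a conditional expectation, hence $\cF_{t_i}$-measurable) and the stochastic integral has zero conditional mean, this gives the key algebraic identity
\begin{equation*}
\Delta Y_{t_i} = E + h\,\Delta f + \bE_{t_i}\Big[\int_{t_i}^{t_{i+1}}\Delta\gamma_t\,dt\Big],\qquad E:=\bE_{t_i}[\Delta Y_{t_{i+1}}],
\end{equation*}
which is what makes the implicitness of the scheme tractable.

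The second ingredient, specific to this lemma, is a conditional-variance bound for $\Delta\hat Z_{t_i}$. From $\Delta\hat Z_{t_i}=\tfrac1h\bE_{t_i}[\Delta Y_{t_{i+1}}\Delta W_{t_i}]$ and $\bE_{t_i}[\Delta W_{t_i}]=0$, I would center $\Delta Y_{t_{i+1}}$ by $E$ and apply the conditional Cauchy--Schwarz inequality together with $\bE_{t_i}|\Delta W_{t_i}|^2=h$ to obtain
\begin{equation*}
|E|^2 + h\,|\Delta\hat Z_{t_i}|^2 \le \bE_{t_i}|\Delta Y_{t_{i+1}}|^2 .
\end{equation*}
The value of this step is that it controls $h|\Delta\hat Z_{t_i}|^2$ purely through $\bE_{t_i}|\Delta Y_{t_{i+1}}|^2$, so the genuine martingale integrand $\Delta Z$ disappears from the final bound.

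I would then substitute $E=\Delta Y_{t_i}-h\Delta f-\bE_{t_i}[\int\Delta\gamma]$ into the variance bound and run a sequence of weighted Young (and reverse-Young) inequalities. The parameter $\lambda_4$ separates the drift-mean term $\bE_{t_i}[\int\Delta\gamma]$, producing the coefficient $\tfrac{1}{\lambda_4 h}$ on the $\gamma$-contribution and the factors $(1+\lambda_4 h)$; $\lambda_5$ peels off the $h\Delta f$ term, producing the factor $(h+\lambda_5)$ together with the remaining part of $\tilde A_4$; and $\lambda_3$ provides the auxiliary split of the $\Delta\hat Z_{t_i}$-versus-$\gamma$ terms, giving the paired $\lambda_3$ (in $\tilde A_7$) and $\tfrac1{\lambda_3}$ (in the $\gamma$-coefficient). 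The parameters are arranged so that the forward-looking term consolidates to exactly $\bE_{t_i}|\Delta Y_{t_{i+1}}|^2$ with coefficient $1+\tilde A_4 h=e^{A_4 h}$. Finally I would invoke $|\Delta f|^2\le f_x|\Delta X_{t_i}|^2+K|\Delta Y_{t_i}|^2+f_z|\Delta\hat Z_{t_i}|^2$ and move the $|\Delta Y_{t_i}|^2$ and $|\Delta\hat Z_{t_i}|^2$ pieces to the left, turning their coefficients into $1-A_6 h$ and $A_7 h=(1-\tilde A_7)h$ while leaving $A_5 h|\Delta X_{t_i}|^2$ on the right; Jensen's inequality bounds $|\bE_{t_i}[\int\Delta\gamma]|^2$ by $\bE_{t_i}|\int\Delta\gamma|^2$.

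The main obstacle is the interaction between the implicitness and the $\hat Z$-estimate. Because $f$ is evaluated at the time-$t_i$ unknowns $(Y_{t_i},\hat Z_{t_i})$, the Lipschitz decomposition forces both $|\Delta Y_{t_i}|^2$ and $|\Delta\hat Z_{t_i}|^2$ onto the left-hand side, so the estimate is only meaningful once $h$ is small enough that $1-A_6 h>0$ and $A_7>0$, and the bookkeeping must be organized so these absorbed coefficients emerge as precisely $1-A_6 h$ and $A_7 h$ rather than with spurious inflation. Threading the three Young parameters simultaneously—holding the $\bE_{t_i}|\Delta Y_{t_{i+1}}|^2$ coefficient at exactly $e^{A_4 h}$ while delivering the stated $A_5,A_6,\tilde A_7$ and the $\gamma$-coefficient $1+\tfrac1{\lambda_4 h}+\tfrac1{\lambda_3}$—is the delicate part; the first two steps, by contrast, are routine martingale algebra.
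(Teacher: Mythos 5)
Your proposal is correct, and it diverges from the paper's proof at exactly one point: the extraction of $|\Delta \hat{Z}_{t_i}|^2$. The paper squares the full difference relation \emph{before} conditioning, so its left-hand side carries the It\^o-isometry term $\bE_{t_i}\int_{t_i}^{t_{i+1}}|\Delta Z_t|^2dt$, which it then lower-bounds via the Cauchy--Schwarz inequality $\bE_{t_i}\int_{t_i}^{t_{i+1}}|\Delta Z_t|^2dt\ge \tfrac1h\big|\bE_{t_i}\int_{t_i}^{t_{i+1}}\Delta Z_t\,dt\big|^2$ together with the identity $\bE_{t_i}\int_{t_i}^{t_{i+1}}\Delta Z_t\,dt=h\Delta\hat{Z}_{t_i}+\bE_{t_i}\big[\Delta W_{t_i}\int_{t_i}^{t_{i+1}}\Delta\gamma_t\,dt\big]$; untangling $\Delta\hat{Z}_{t_i}$ from this $\gamma$-contamination is precisely what forces the paper's $\lambda_3$-Young split, producing the paired $\lambda_3$ in $\tilde{A}_7$ and $\tfrac1{\lambda_3}$ in the $\gamma$-coefficient. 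You instead condition first (killing $\Delta Z$ entirely) and use the centered conditional-variance bound $|E|^2+h|\Delta\hat{Z}_{t_i}|^2\le\bE_{t_i}|\Delta Y_{t_{i+1}}|^2$ with $E:=\bE_{t_i}[\Delta Y_{t_{i+1}}]$, which is the sharp (Pythagorean) form of the paper's estimate, exploiting that $\hat{Z}_{t_i}^l$ is defined directly from $Y^l_{t_{i+1}}$ so no $\gamma$ term intrudes. Carried through with your $\lambda_4$- and $\lambda_5$-splits --- note $(1+\lambda_4h)(1+h/\lambda_5)=1+\tilde{A}_4h=e^{A_4h}$ --- your route yields $(1-A_6h)|\Delta Y_{t_i}|^2+(A_7+\lambda_3)h|\Delta\hat{Z}_{t_i}|^2\le e^{A_4h}\bE_{t_i}|\Delta Y_{t_{i+1}}|^2+A_5h|\Delta X_{t_i}|^2+\big(1+\tfrac{1}{\lambda_4h}\big)\bE_{t_i}\big|\int_{t_i}^{t_{i+1}}\Delta\gamma_t\,dt\big|^2$, which is strictly stronger than the lemma: $\lambda_3$ never enters, and the stated bound follows by discarding $\lambda_3h|\Delta\hat{Z}_{t_i}|^2$ on the left and adding the nonnegative term $\tfrac1{\lambda_3}\bE_{t_i}\big|\int_{t_i}^{t_{i+1}}\Delta\gamma_t\,dt\big|^2$ on the right. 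The one inaccuracy in your narrative is attributing the $\lambda_3$/$\tfrac1{\lambda_3}$ pairing to your own bookkeeping: in your route it is vacuous, an artifact of the paper's weaker $\hat{Z}$-extraction; since your inequality dominates the stated one, this is cosmetic rather than a gap. As for what each approach buys: the paper's version keeps the genuine martingale integrand $\Delta Z_t$ visible (the same mechanism, bounding $\bE|\delta Z_i|^2h$ by $\int\bE|\delta\bar{Z}_t|^2dt$, recurs in the proof of Lemma \ref{Lemma1}), whereas yours is shorter, avoids the identity $\bE_{t_i}\big[\int_{t_i}^{t_{i+1}}\Delta Z_t\,dW_t\cdot\Delta W_{t_i}\big]=\bE_{t_i}\int_{t_i}^{t_{i+1}}\Delta Z_t\,dt$ altogether, and reveals that $\lambda_3$ is dispensable in this lemma.
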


\begin{rmk}
It is worth noting that, for $i=1,2,\cdots,7$, the limit $\overline{A}_i=\lim_{h\rightarrow 0} A_i$ exists and is finite, which will be useful in analyzing the behaviour of the scheme as the time step $h$ tends to zero.
\end{rmk}

\begin{thm}\label{Rmain_convergence}
Suppose that $\big(\{\overline{X}^{\pi}_{t_i}\}_{0\le i\le N},\{\overline{Y}^{\pi}_{t_i}\}_{0\le i\le N},\{\overline{Z}^{\pi}_{t_i}\}_{0\le i\le N-1}\big)$ is a solution of (\ref{discrete_eq}) with $\overline{X}_{t_{i}}^{\pi},\overline{Y}_{t_{i}}^{\pi},\overline{Z}_{t_{i}}^{\pi}\in \mathbb{L}^2(\Omega,\mathcal{F}_{t_i},\bP)$. Also assume that the Assumption \ref{ass-2} and all the assumptions of Theorem \ref{Wellp_thm} hold true and there exist constants $\lambda_i>0,i=1,2,3,4,5$ such that $A_6h<1,\tilde{A}_7<1$ and 
\begin{equation}
    \bigg(\frac{\overline{A}_2(1-e^{-(\overline{A}_1+\overline{A}_4)T})}{\overline{A}_1+\overline{A}_4}+\frac{\overline{A}_3}{\overline{A}_7}\bigg) \bigg(e^{(\overline{A}_1+\overline{A}_4+\overline{A}_6)T}g_x+ \frac{\overline{A}_5(e^{(\overline{A}_1+\overline{A}_4+\overline{A}_6)T}-1)}{\overline{A}_1+\overline{A}_4}\bigg)<1.
\end{equation}
Then we have
\begin{equation}
\begin{split}
     \sup_{t\in[0,T]} \bE\bigg[|X_t-\overline{X}_{t}^{\pi}|^2+|Y_t-\overline{Y}_{t}^{\pi}|^2\bigg]+\bE\int_{0}^{T}|\hat{Z}_t-\overline{Z}_{t}^{\pi}|^2dt&\leq CM,
\end{split}
\end{equation}
where
\begin{align*}
   M=&\big(I_{0}^2+\bE|x_0|^2\big)h+\rho(h)+\frac{1}{h}\sup_{0\le k\le N-1}\bE\int_{t_k}^{t_{k+1}}|Z_t-\tilde{Z}_{t_k}|^2dt\\
   &+\sup_{0\leq k\leq N-1}\bE\left[\int_{t_k}^{t_{k+1}}|Z_t|^2dt+\int_{t_k}^{t_{k+1}}\big(|f|^2+|\sigma|^2\big)(t,V_t,0,0,0)dt\right],
\end{align*}
and $\overline{X}_{t}^{\pi}=\overline{X}_{t_{i}}^{\pi},\overline{Y}_{t}^{\pi}=\overline{Y}_{t_i}^{\pi},\overline{Z}_{t}^{\pi}=\overline{Z}_{t_i}^{\pi},\hat{Z}_t=\hat{Z}_{t_i}$, for $t\in[t_i,t_{i+1})$ and 
    $\tilde{Z}_{t_i}:=\frac{1}{h}\bE\bigg[\int_{t_i}^{t_{i+1}}Z_tdt\big|\mathcal{F}_{t_i}\bigg]$, $\hat{Z}_{t_i}=\frac{1}{h}\bE\bigg[ Y_{t_{i+1}}\Delta W_{t_i}\big|\cF_{t_i}\bigg]$.
\end{thm}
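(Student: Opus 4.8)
The plan is to compare the continuous solution and the discrete scheme \eqref{discrete_eq} by casting both into the common one-step format of Lemmas \ref{R3_fw_est} and \ref{R4_bw_est}, and then chaining the resulting one-step estimates through a coupled forward--backward discrete Grönwall argument. Concretely, I take the continuous triple as the label $l=1$: over each interval $[t_i,t_{i+1}]$ I freeze the coefficients at the point $(t_i,V_{t_i},X_{t_i},Y_{t_i},\hat{Z}_{t_i})$, so that
$$
\alpha^1_t=b(t,V_t,X_t,Y_t,Z_t)-b(t_i,V_{t_i},X_{t_i},Y_{t_i},\hat{Z}_{t_i}),\quad
\beta^1_t=\sigma(t,V_t,X_t,Y_t,Z_t)-\sigma(t_i,V_{t_i},X_{t_i},Y_{t_i},\hat{Z}_{t_i}),
$$
and $\gamma^1_t$ is the analogous residual for $f$; the discrete scheme is the label $l=2$ with $\alpha^2=\beta^2=\gamma^2\equiv 0$. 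Freezing the forward $Z$-slot at $\hat{Z}_{t_i}$ is the key bookkeeping choice: then the frozen value $Z^1_{t_i}=\hat{Z}_{t_i}$ used in Lemma \ref{R3_fw_est} coincides with $\hat{Z}^1_{t_i}=\tfrac1h\bE_{t_i}[Y_{t_{i+1}}\Delta W_{t_i}]$ produced by Lemma \ref{R4_bw_est}, and likewise $Z^2_{t_i}=\hat{Z}^2_{t_i}=\overline{Z}^{\pi}_{t_i}$ by the definition of the scheme, so that $\Delta Z_{t_i}=\Delta\hat{Z}_{t_i}=\hat{Z}_{t_i}-\overline{Z}^{\pi}_{t_i}$ consistently in both lemmas and matches the quantity measured on the left-hand side of the theorem.

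The second step is to bound the summed local residuals $\sum_i\bE\int_{t_i}^{t_{i+1}}(|\Delta\alpha_t|^2+|\Delta\beta_t|^2)\,dt$ and $\sum_i\bE\big|\int_{t_i}^{t_{i+1}}\Delta\gamma_t\,dt\big|^2$ by $CM$. Splitting each residual into a time-oscillation part, a state-oscillation part, and a $Z$-part, the time part contributes $\rho(h)$ after summation through Assumption \ref{ass-2}, while the state part contributes the $X$- and $Y$-continuity moduli of Theorem \ref{stability_FBSDE} (hence the $(I_0^2+\bE|x_0|^2)h$ and the integrated $|Z|^2$ and $|\sigma|^2(\cdot,0,0,0)$ terms of $M$), using the Lipschitz bounds of Assumption \ref{ass-1}. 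The delicate part is $|Z_t-\hat{Z}_{t_i}|^2$: I write $Z_t-\hat{Z}_{t_i}=(Z_t-\tilde{Z}_{t_i})+(\tilde{Z}_{t_i}-\hat{Z}_{t_i})$ and reconcile $\hat{Z}$ with $\tilde{Z}$ by multiplying the backward increment $Y_{t_{i+1}}=Y_{t_i}-\int_{t_i}^{t_{i+1}}f\,ds+\int_{t_i}^{t_{i+1}}Z_s\,dW_s$ by $\Delta W_{t_i}$ and taking $\bE_{t_i}$, which by the Itô isometry yields $\hat{Z}_{t_i}=\tilde{Z}_{t_i}-\tfrac1h\bE_{t_i}[\Delta W_{t_i}\int_{t_i}^{t_{i+1}}f\,ds]$, so that $|\tilde{Z}_{t_i}-\hat{Z}_{t_i}|^2$ is controlled by the integrated $|f|^2$ and path-continuity terms. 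The leftover $(Z_t-\tilde{Z}_{t_i})$-part, together with the factor $(1+\tfrac1{\lambda_4 h}+\tfrac1{\lambda_3})$ from Lemma \ref{R4_bw_est} and the counting of $N=T/h$ subintervals, is exactly what produces the $\tfrac1h\sup_k\bE\int_{t_k}^{t_{k+1}}|Z_t-\tilde{Z}_{t_k}|^2\,dt$ term of $M$.

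The third and central step is to chain the two one-step estimates. Taking expectations in Lemmas \ref{R3_fw_est} and \ref{R4_bw_est} via the tower property gives, with $\bar X_i=\bE|\Delta X_{t_i}|^2$, $\bar Y_i=\bE|\Delta Y_{t_i}|^2$, $\bar Z_i=\bE|\Delta\hat{Z}_{t_i}|^2$, a forward recursion $\bar X_{i+1}\le(1+A_1h)\bar X_i+A_2h\bar Y_i+A_3h\bar Z_i+R^{\mathrm{fw}}_i$ started from $\bar X_0=0$, and a backward recursion $(1-A_6h)\bar Y_i+A_7h\bar Z_i\le e^{A_4h}\bar Y_{i+1}+A_5h\bar X_i+R^{\mathrm{bw}}_i$ with terminal datum $\bar Y_N\le g_x\bar X_N$ coming from the Lipschitz property of $g$. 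Solving the backward recursion by discrete Grönwall expresses $\bar Y_i$ (and the accumulated $\sum_j h\bar Z_j$) in terms of $\bar X_N$ and $\{\bar X_j\}_{j\ge i}$; inserting this into the forward recursion and summing again yields a self-referential bound $\sup_i\bar X_i\le \Theta\,\sup_i\bar X_i+CM$, where, upon passing $h\to0$ so that $A_i\to\overline{A}_i$ and the Riemann sums become the stated exponential integrals, the gain $\Theta$ is precisely the product
$$
\Theta=\Big(\tfrac{\overline{A}_2(1-e^{-(\overline{A}_1+\overline{A}_4)T})}{\overline{A}_1+\overline{A}_4}+\tfrac{\overline{A}_3}{\overline{A}_7}\Big)\Big(e^{(\overline{A}_1+\overline{A}_4+\overline{A}_6)T}g_x+\tfrac{\overline{A}_5(e^{(\overline{A}_1+\overline{A}_4+\overline{A}_6)T}-1)}{\overline{A}_1+\overline{A}_4}\Big)
$$
appearing in the hypothesis. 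The smallness condition $\Theta<1$ then lets me absorb the self-referential term and conclude $\sup_i\bar X_i+\sup_i\bar Y_i+\sum_i h\bar Z_i\le CM$. Finally, I upgrade the grid-point bound to the stated supremum over $t\in[0,T]$: for $t\in[t_i,t_{i+1})$ I write $|X_t-\overline{X}^{\pi}_t|^2\le 2|X_t-X_{t_i}|^2+2|X_{t_i}-\overline{X}^{\pi}_{t_i}|^2$ (and similarly for $Y$), bounding the first term by Theorem \ref{stability_FBSDE}, whose right-hand side already lies inside $M$, and the second by $\bar X_i$. I expect the main obstacle to be this third step: making the coupled forward--backward Grönwall quantitative enough that the accumulated gain reproduces exactly the product $\Theta$ of the hypothesis rather than a cruder, non-sharp constant, while keeping uniform control of all constants as $h\to0$ through $\overline{A}_i=\lim_{h\to0}A_i$ and correctly matching the $\tfrac1h$-weighted path-regularity term in $M$.
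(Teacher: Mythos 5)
Your proposal is correct and follows the paper's own proof essentially step for step: the same casting of the continuous solution (coefficients frozen at $(t_i,V_{t_i},X_{t_i},Y_{t_i},\hat{Z}_{t_i})$, so that $\Delta Z_{t_i}=\Delta\hat{Z}_{t_i}$) and of the discrete scheme into the one-step format of Lemmas \ref{R3_fw_est}--\ref{R4_bw_est}, the same residual bounds via Assumption \ref{ass-2}, Theorem \ref{stability_FBSDE} and the identity $\hat{Z}_{t_i}=\tilde{Z}_{t_i}-\tfrac{1}{h}\bE\big[\Delta W_{t_i}\int_{t_i}^{t_{i+1}}f\,dt\,\big|\,\cF_{t_i}\big]$, and the same coupled forward--backward discrete Gr\"onwall with absorption under $\overline{A}_0<1$ (the paper's exponentially weighted quantities $P$ and $S$), ending with the same interpolation to $\sup_{t\in[0,T]}$. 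The only detail left implicit in your plan is the martingale-representation step used to put $\overline{Y}^{\pi}$ into the form required by Lemma \ref{R4_bw_est}, which is routine.
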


\subsection{Neural Network Approximation}

Deep learning offers a powerful framework for the approximation of high-dimensional functions. In this section, we describe the architecture of fully connected feedforward neural networks and recall the universal approximation theorem that underpins their theoretical capacity.

Consider a fully-connected feedforward neural network with input dimension of $d^i$ and output dimension $d^o$, consisting of $J+1$ layers, where $J+1\in \bN\setminus\{1,2\}$. Let the number of neurons in each layer be denoted by $k_n$ for $ n=0,\ldots,J$, where $k_0=d^i \ \text{and} \ k_N=d^o$. For simplicity, assume that all hidden layers contain the same number of neurons, i.e., $k_n=k$ for all $n=1,\cdots,J-1$. 

The neural network defines a function from $\bR^{d^i}$ to $\bR^{d^o}$, represented as a composition of affine transformations and activation functions:

\begin{equation}\label{nnfunc}
    x\in\bR^{d^i} \mapsto B_J \circ \varrho \circ B_{J-1} \circ \cdots \circ \varrho \circ B_1(x)\in \bR^{d^o}.
\end{equation}
Here, the composition is defined as $f_1\circ f_2(x)=f_1(f_2(x))$. Each $B_n$ denotes an affine transformation corresponding to layer $n$, defined as
\[ B_n(x)=\cW_nx+\beta_n, \]
where $\cW_n$ is the weight matrix and $\beta_n$ is the bias vector of the $n$th layer of the network. Specifically, $B_1:\bR^{d^i}\mapsto \bR^{k}, B_J:\bR^k\mapsto\bR^{d^o}, \ \text{and} \ B_n: \bR^k \mapsto \bR^k, n=2,\ldots,J-1$. The activation function $\varrho$ is applied component-wise to the output of each hidden layer. The final layer does not include a nonlinearity and is taken to be the identity function.  

Let $\theta=(\mathcal{W}_n,\beta_n)_{n=1}^J$ denote the collection of all trainable parameters in the network. For a given architecture specified by $(d^i,d^o,J,k$, the total number of parameters in a network is given by 
$$J_{k}=\sum_{n=0}^{J-1}(k_n+1)k_{n+1}=(d^i+1)k+(k+1)k(J-2)+(k+1)d^o,$$ so that $\theta\in\bR^{J_k}$. Let $\Theta$ be the set of all possible values of $\theta$, and if there are no constraints on parameters, then $\Theta=\bR^{J_k}$. We denote the neural network function defined in (\ref{nnfunc}) by $\mathcal{X}^{\mathcal{N}}(\cdot;\theta)$.   The class of all such neural networks, with a fixed structure $(d^i,d^o,J,k)$ and activation function $\varrho$ is denoted by $\mathcal{NN}_{d^i,d^o,J,k}^{\varrho}(\Theta)$.

Here, we state the generalized universal approximation theorem by Bayer et al. \cite{CBayerJQiuYYao2020PricingOptnRoughVolatBSPDEs}, who extend the universal approximation theorem for the approximation of functions defined on a finite-dimensional space to random functions in an infinite-dimensional space.
\begin{lem}\label{lem-approximation}
    For each $T_0\in(0,T],J\in\bN^+\setminus\{1\},$ and $d^i,d^o\in\bN^+$, the function set 
    \begin{align*}
        \big\{\Phi_k(W_{t_1},\cdots,W_{t_i},x;\theta):\Phi_k(\cdot;\theta)\in \cN\cN^\varrho_{d^i+i,d^o,J,k}(\bR^{J_k}),k,i\in\bN^+,0<t_1<t_2\cdots<t_i\leq T_0\big\}
    \end{align*}
    is dense in $L^2(\Omega\times \bR^{d^i},\cF_{T_0}\otimes\cB(\bR^{d^i}),\bP(\omega)\otimes dx;\bR^{d^o})$, whenever $\varrho$ is continuous and non-constant.
\end{lem}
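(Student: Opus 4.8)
The plan is to reduce the approximation of a random, effectively infinite-dimensional target to the classical universal approximation theorem for deterministic functions on a finite-dimensional Euclidean space, via two successive density reductions followed by a single invocation of the classical result. Throughout, the guiding structural facts are that the measure $\bP\otimes dx$ is a product and that $\cF_{T_0}$ is generated by the continuous Wiener path.

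First I would fix $\xi\in L^2(\Omega\times\bR^{d^i},\cF_{T_0}\otimes\cB(\bR^{d^i}),\bP\otimes dx;\bR^{d^o})$ and $\eps>0$, and exploit the product structure of the measure. Since $L^2(\Omega,\cF_{T_0},\bP)\otimes L^2(\bR^{d^i},dx)$ is dense in the joint $L^2$ space, I can find finitely many $\eta_j\in L^2(\Omega,\cF_{T_0},\bP)$ and $\phi_j\in C_c(\bR^{d^i})$ with $\|\xi-\sum_{j=1}^{n}\eta_j\phi_j\|<\eps/3$; choosing the spatial factors $\phi_j$ to have compact support is precisely what will keep the $x$-tails under control later. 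Next comes the \emph{randomness reduction}: because the augmented filtration $\cF_{T_0}$ is generated by the continuous Wiener path $(W_s)_{s\le T_0}$, cylinder functionals depending on only finitely many coordinates are dense in $L^2(\Omega,\cF_{T_0},\bP)$ (a monotone-class / Doob--Dynkin argument). Hence each $\eta_j$ lies within small $L^2$-distance of some bounded continuous $\psi_j(W_{t^j_1},\ldots)$. Passing to a common refinement of the finitely many time grids, so that every term depends on a single increasing grid $0<t_1<\cdots<t_i\le T_0$, I obtain a bounded continuous $\Psi:\bR^{i}\times\bR^{d^i}\to\bR^{d^o}$ that is compactly supported in the $x$-block and for which $\Psi(W_{t_1},\ldots,W_{t_i},x)$ approximates $\xi$ to within $2\eps/3$.

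Third, I would invoke the classical universal approximation theorem. Writing $\mu$ for the pushforward $\mathrm{Law}(W_{t_1},\ldots,W_{t_i})\otimes dx$ on $\bR^{i+d^i}$, it remains only to approximate the deterministic $\Psi$ by a network $\Phi_k(\cdot;\theta)\in\cN\cN^{\varrho}_{d^i+i,d^o,J,k}(\bR^{J_k})$ in $L^2(\mu)$ to within $\eps/3$, after which composing with $(W_{t_1},\ldots,W_{t_i},x)$ — which pushes $\bP\otimes dx$ forward to $\mu$ — closes the argument. The freedom to enlarge the width $k$ (and the fact that $J\ge 2$ hidden layers is available) supplies the needed approximation power, while the input dimension $d^i+i$ matches the $i$ Wiener coordinates together with $x\in\bR^{d^i}$.

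The main obstacle is exactly this last step. The law of $(W_{t_1},\ldots,W_{t_i})$ is a non-degenerate Gaussian with full support and the $x$-marginal is the infinite Lebesgue measure, so density uniformly on compact sets does not immediately give $L^2(\mu)$-density, and a raw feedforward network does not decay at infinity (so its tails can carry unbounded mass against $dx$). I would resolve this by first multiplying $\Psi$ by a continuous cutoff to make it compactly supported on all of $\bR^{i+d^i}$ — the incurred error is small because $\Psi$ is bounded, the Gaussian tail mass is small, and $\Psi$ already has compact $x$-support — and then applying the classical theorem in its $L^p$-for-finite-measures form, using that for continuous non-constant $\varrho$ the networks in the stated class approximate compactly supported continuous functions in $L^2$-norm, the offending tail contributions cancelling across neurons. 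The remaining work — merging grids, propagating the three $\eps/3$ errors, and verifying measurability of the compositions — is routine.
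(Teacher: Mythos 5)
First, a point of comparison: the paper does not prove this lemma at all --- it is quoted from Bayer, Qiu and Yao \cite{CBayerJQiuYYao2020PricingOptnRoughVolatBSPDEs}, so there is no internal proof to measure your argument against. Judged on its own, your overall architecture is the natural one, and the first two reductions are sound: tensor-product density of $L^2(\Omega)\otimes L^2(\bR^{d^i})$ in the joint space, density of bounded continuous cylinder functionals $\psi(W_{t_1},\dots,W_{t_i})$ in $L^2(\Omega,\cF_{T_0},\bP)$ (augmentation by null sets being harmless), grid merging, and the pushforward identity $\|\Phi(W,\cdot)-\Psi(W,\cdot)\|_{L^2(\bP\otimes dx)}=\|\Phi-\Psi\|_{L^2(\mu)}$ with $\mu=\mathrm{Law}(W_{t_1},\dots,W_{t_i})\otimes dx$ are all correct and routine.

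The genuine gap is precisely where you located the obstacle and then waved it away: the phrase ``the offending tail contributions cancelling across neurons'' is not a theorem, and in general it is false. After your cutoff, the finite-measure universal approximation theorem controls $\|\Phi_k-\Psi_c\|$ only on a region of finite $\mu$-mass; it says nothing about $\int_{\bR^{d^i}\setminus K}|\Phi_k(w,x)|^2\,dx$, and since $dx$ is infinite you need $\Phi_k(W,\cdot)$ itself to have small $L^2(\bP\otimes dx)$-tails for the triangle inequality to close. Whether any network in the class has this property depends on $\varrho$ and the architecture. Concretely, for $\varrho=\sin$ (continuous, bounded, non-constant, hence admissible for the lemma as stated), every network is, for fixed $\omega$, an almost periodic function of $x$, and a nonzero almost periodic function has infinite $L^2(\bR^{d^i},dx)$-norm; no cancellation across neurons can repair this. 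Likewise a one-hidden-layer ReLU network on $\bR^{d^i}$ with $d^i\ge 2$ is a finite sum of ridge functions and cannot vanish at infinity unless it is trivial. So the last step, as written, does not go through for arbitrary continuous non-constant $\varrho$: a correct completion must either exploit specific activation structure (difference-of-shifted-sigmoid bumps, ReLU hat functions, with enough depth to localize jointly in several coordinates) or read the spatial measure as finite or weighted rather than Lebesgue on all of $\bR^{d^i}$ --- the reading under which the cited statement is actually used downstream, where $x$ is evaluated along the law of $X_{t_i}$. A second, smaller defect is inherited from the statement itself: ``continuous and non-constant'' alone does not suffice for the classical $L^p$-density theorem you invoke ($\varrho(x)=x$ is continuous and non-constant, yet yields only affine functions); Hornik's $L^p$ version additionally requires boundedness, and Leshno et al.\ require $\varrho$ to be non-polynomial, so your appeal to ``the classical theorem'' should name one of these hypotheses explicitly.
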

The above universal approximation theorem gives the justification to use neural network approximation to approximate the decoupling random field that appears in non-Markovian coupled FBSDEs.
 As remarked in \cite{CBayerJQiuYYao2020PricingOptnRoughVolatBSPDEs}, the $\sigma$-algebra $\mathcal F_{T_0}$ is generated by $\{W_s:s\in[0,T_0]\}$ where the process $W$ and accordingly the dependence of it in the assertion of Lemma \ref{lem-approximation} may be replaced by any continuous process.

\section{Numerical Scheme and Simulation Error for Non-Markovian Coupled-FBSDE}

 Without loss of generality, we take $d_0=1$ in what follows. Inspired by the approximation results in Lemma \ref{lem-approximation} under the non-Markovian framework, the backward processes $(Y_t,Z_t)$ is to be expressed  a function of the forward process $X_t$ and the historical paths $\{W_s,V_s:0\le s\le t\}$ (or just $\{W_s:0\le s\le t\}$ or $\{V_s:0\le s\le t\}$) because the randomness of the coefficients is from the continuous process $\{V_s:0\le s\le t\}$ or essentially from the Wiener process $\{W_s:0\le s\le t\}$. 
 
 Accordingly, we reformulate the problem of solving the FBSDE (\ref{maineq}) as the following stochastic optimization problem:
 \begin{align}\label{scheme_objective}
 	\inf_{\cY_0,\cZ_i}\quad \widehat{\mathfrak{L}}(\cY_0,\cZ_0,\cdots,\cZ_{N-1}):=\bE|g(V_T,X^\pi_{T})-Y^\pi_T|^2,
 \end{align}
 subject to,
 \begin{equation}\label{scheme_system}
 	\begin{cases}
 		X^{\pi}_0=x_0,\quad Y^\pi_0=\cY_0(x_0),\\
 		Z^{\pi}_{t_i}=\cZ_i\big(X^\pi_{t_i},\{W_s,V_s:0\le s\le t_i\}\big),\\
 		X^{\pi}_{t_{i+1}}=X^{\pi}_{t_i}+b(t_i,V_{t_i},X^{\pi}_{t_i},Y^{\pi}_{t_i},Z^{\pi}_{t_i})\triangle t_i+\sigma(t_i,V_{t_i},X^{\pi}_{t_i},Y^{\pi}_{t_i},Z^{\pi}_{t_i})\triangle W_{t_i},\\
 		Y^{\pi}_{t_{i+1}}=Y^\pi_{t_i}-f(t_i,V_{t_i},X^{\pi}_{t_i},Y^{\pi}_{t_i},Z^{\pi}_{t_i})\Delta t_i+Z^\pi_{t_i} \Delta W_{t_i}.
 	\end{cases}
 \end{equation}
 Our next result, Theorem \ref{Error_Estimate_Num_Scheme}  will show that the error between the solution $(X_t,Y_t,Z_t)$, of the continuous FBSDE (\ref{maineq}) and its discrete counterpart $(X^{\pi}_t,Y^{\pi}_t,Z^{\pi}_t)$, defined in (\ref{scheme_system}), is bounded by the objective function $\bE|g(V_T,X^\pi_{T})-Y^\pi_T|^2$ of the above optimization problem. 
 
 Therefore, if the functions $\cY_0,\cZ_0,\cdots,\cZ_{N-1}$ can be approximated using neural networks such that the objective function is minimized, ideally to zero, of the above optimization problem can be minimized to zero, then the resulting solution of the optimization problem will serve as an approximation to the solution of the original FBSDE (\ref{maineq}). Consequently, the formulation of the stochastic optimization problem in \eqref{scheme_objective}–\eqref{scheme_system}, along with the error estimates provided in the forthcoming theorem, forms the backbone of our proposed numerical scheme for solving non-Markovian, fully coupled-FBSDEs. A detailed implementation of this scheme will be presented in Section \ref{algorithm_Section}. 
 
 Before we proceed to Theorem~\ref{Error_Estimate_Num_Scheme}, we must introduce and prove a key Lemma.
 
 Inspired by the structure of the discretized system (\ref{discrete_eq}), we consider the following system of equations:
\begin{equation}\label{lemma1_system}
	\left\{\begin{array}{l}
		{X^{\pi}_0=x_0} ,\\
		{X^{\pi}_{t_{i+1}}=X^{\pi}_{t_i}+b(t_i,V_{t_i},X^{\pi}_{t_i},Y^{\pi}_{t_i},Z^{\pi}_{t_i})\triangle t_i+\sigma(t_i,V_{t_i},X^{\pi}_{t_i},Y^{\pi}_{t_i},Z^{\pi}_{t_i})\triangle W_{t_i}},\\
		{Z^{\pi}_{t_i}=\frac{1}{h}\bE[Y^{\pi}_{t_{i+1}} \triangle W_{t_i}|\mathcal{F}_{t_i}]},\\
		{Y^{\pi}_{t_i}=\bE[Y^{\pi}_{t_{i+1}}+f(t_i,V_{t_i},X^{\pi}_{t_i},Y^{\pi}_{t_i},Z^{\pi}_{t_i})h|\mathcal{F}_{t_i}]}.
	\end{array}\right.
\end{equation}
Note that the key distinction between (\ref{discrete_eq}) and (\ref{lemma1_system}) lies in the absence of a specified terminal condition in (\ref{lemma1_system}), resulting in the system admitting infinitely many solutions. In particular, both $(X^{\pi},Y^{\pi},Z^{\pi})$, as defined in (\ref{scheme_system}) and $(\overline X^{\pi},\overline Y^{\pi},\overline Z^{\pi})$ from (\ref{discrete_eq}), satisfy the system (\ref{lemma1_system}). 
\begin{lem}\label{Lemma1}
Let $j=1,2$, and suppose that $\big(\{X^{\pi,j}_{t_i}\}_{0\le i\le N},\{Y^{\pi,j}_{t_i}\}_{0\le i\le N},\{Z^{\pi,j}_{t_i}\}_{0\le i\le N-1}\big)$ are two solutions of the system (\ref{lemma1_system}) with $X^{\pi,j}_{t_{i+1}},Y^{\pi,j}_{t_{i+1}}\in \mathbb{L}^2(\Omega,\mathcal{F}_{t_{i+1}},\mathbb{P})$ for all $0\le i\le N-1$, $j=1,2$. Let $\lambda_0,\lambda_1>0, \lambda_2> f_z$, and suppose $h$ is sufficiently small. Define the constants:
\begin{align}
A_1&:=Kh+\sigma_x+2k_b+\lambda_0+\lambda_1,\nonumber\\
A_2&:=(h+\lambda_1^{-1})b_y+\sigma_y,\nonumber\\
A_3&:=(h+\lambda_0^{-1})b_z+\sigma_z,\nonumber\\
A_4&:=-h^{-1}\ln[1-(2k_f+\lambda_2)h],\nonumber\\
A_5&:=f_x\lambda_2^{-1}\big(1-(2k_f+\lambda_2)h\big)^{-1},\nonumber\\
A_6&:=(1-f_z\lambda_2^{-1})\big(1-(2k_f+\lambda_2)h\big)^{-1}.\nonumber
\end{align}
Let $\delta X_{i}=X_{t_{i}}^{\pi, 1}-X_{t_{i}}^{\pi, 2}, \delta Y_{i}=Y_{t_{i}}^{\pi, 1}-Y_{t_{i}}^{\pi, 2}$. Then for any $1 \leq n \leq N$, the following estimates hold:
\begin{align*}
\bE|\delta X_{n}|^{2} \leq A_{2} \sum_{i=0}^{n-1} e^{A_{1}(n-i-1) h} \bE|\delta Y_{i}|^{2}h+A_3 \sum_{i=0}^{n-1} e^{A_{1}(n-i-1) h} \bE|\delta Z_{i}|^{2}h,
\end{align*}
and
\begin{align}
\bE|\delta Y_{n}|^{2}+\sum_{i=n}^{N-1}A_6e^{A_4(i-n) h}\bE|\delta Z_i|^2h \leq e^{A_4(N-n) h} \bE|\delta Y_{N}|^{2}+A_5 \sum_{i=n}^{N-1} e^{A_4(i-n) h} \bE|\delta X_{i}|^{2} h.\nonumber
\end{align}
\end{lem}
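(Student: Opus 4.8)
The plan is to establish two one-step recursions — a forward one bounding $\bE|\delta X_{i+1}|^2$ in terms of the data at time $t_i$, and a backward one bounding $\bE|\delta Y_i|^2$ together with $\bE|\delta Z_i|^2$ in terms of the data at time $t_{i+1}$ — and then iterate the first forward from $\delta X_0=0$ and telescope the second backward from the terminal index $N$. Throughout I write $b^{j}=b(t_i,V_{t_i},X^{\pi,j}_{t_i},Y^{\pi,j}_{t_i},Z^{\pi,j}_{t_i})$, and likewise $\sigma^j,f^j$, and set $\bE_{t_i}[\cdot]=\bE[\cdot\mid\mathcal{F}_{t_i}]$. Since all coefficient values $b^j,\sigma^j,f^j$ and the increments $\delta X_i,\delta Y_i,\delta Z_i$ are $\mathcal{F}_{t_i}$-measurable while $\Delta W_{t_i}$ is independent of $\mathcal{F}_{t_i}$ with $\bE_{t_i}[\Delta W_{t_i}]=0$ and $\bE_{t_i}[(\Delta W_{t_i})^2]=h$, the stochastic cross terms drop out under $\bE_{t_i}$ at every step.

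For the forward estimate I subtract the two forward recursions to get $\delta X_{i+1}=\delta X_i+(b^1-b^2)h+(\sigma^1-\sigma^2)\Delta W_{t_i}$, and conditioning gives $\bE_{t_i}|\delta X_{i+1}|^2=|\delta X_i+(b^1-b^2)h|^2+h|\sigma^1-\sigma^2|^2$. I then expand the first square and split $b^1-b^2$ through the intermediate points $(X^2,Y^1,Z^1)$ and $(X^2,Y^2,Z^1)$: the pure $x$-increment is handled by the monotonicity bound in \eqref{assumption1_1} (producing the $2k_b$ term after the factor $2$ from the cross product), the residual $y$- and $z$-increments by the Lipschitz bounds in \eqref{assumption1_2} together with Young's inequality carrying the free parameters $\lambda_1$ (for $\delta Y$) and $\lambda_0$ (for $\delta Z$), the term $h^2|b^1-b^2|^2$ by \eqref{assumption1_2} directly, and $h|\sigma^1-\sigma^2|^2$ by the $\sigma$-Lipschitz bound. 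Collecting coefficients reproduces exactly $A_1,A_2,A_3$ and yields the one-step inequality $\bE|\delta X_{i+1}|^2\le(1+A_1h)\bE|\delta X_i|^2+A_2h\,\bE|\delta Y_i|^2+A_3h\,\bE|\delta Z_i|^2$; iterating from $\delta X_0=0$ and using $(1+A_1h)^{n-1-i}\le e^{A_1(n-i-1)h}$ gives the stated forward bound.

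For the backward estimate the central device is the orthogonal decomposition of $\delta Y_{i+1}$ relative to $\mathcal{F}_{t_i}$. Writing $\delta Y_{i+1}=\bE_{t_i}[\delta Y_{i+1}]+M_{i+1}$ with $\bE_{t_i}[M_{i+1}]=0$, the $Y$-equation gives $\bE_{t_i}[\delta Y_{i+1}]=\delta Y_i-(f^1-f^2)h$, while the $Z$-equation gives $\delta Z_i=\tfrac1h\bE_{t_i}[M_{i+1}\Delta W_{t_i}]$, so Cauchy–Schwarz yields $h|\delta Z_i|^2\le\bE_{t_i}|M_{i+1}|^2$. Combining with the Pythagorean identity $\bE_{t_i}|\delta Y_{i+1}|^2=|\bE_{t_i}[\delta Y_{i+1}]|^2+\bE_{t_i}|M_{i+1}|^2$ gives $|\delta Y_i-(f^1-f^2)h|^2+h|\delta Z_i|^2\le\bE_{t_i}|\delta Y_{i+1}|^2$. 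Expanding the square, I bound $2h\,\delta Y_i\cdot(f^1-f^2)$ by splitting off the monotone $y$-direction via \eqref{assumption1_1} (the $2k_f$ term) and controlling the residual $x$- and $z$-parts by \eqref{assumption1_2} with Young's inequality carrying $\lambda_2$; dropping the nonnegative $h^2|f^1-f^2|^2$ term and rearranging produces $[1-(2k_f+\lambda_2)h]\,\bE|\delta Y_i|^2+h(1-f_z\lambda_2^{-1})\bE|\delta Z_i|^2\le\bE|\delta Y_{i+1}|^2+h f_x\lambda_2^{-1}\bE|\delta X_i|^2$. Dividing by $1-(2k_f+\lambda_2)h$ (positive for $h$ small) identifies $e^{A_4h},A_5,A_6$; finally multiplying the $i$-th inequality by $e^{A_4(i-n)h}$ and summing for $n\le i\le N-1$ telescopes the $\delta Y$ terms and leaves the claimed backward estimate.

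The main obstacle is the backward step, and specifically the lower bound $h|\delta Z_i|^2\le\bE_{t_i}|M_{i+1}|^2$: this is what lets the $\delta Z$ contribution appear with a usable sign on the left of the recursion rather than as an uncontrolled error, and it is essential that the inequality runs in this direction so that after division the coefficient $A_6=(1-f_z\lambda_2^{-1})(1-(2k_f+\lambda_2)h)^{-1}$ stays positive. This is exactly where the hypotheses $\lambda_2>f_z$ and ``$h$ sufficiently small'' are consumed — the former to keep $A_6>0$ and the latter to keep $1-(2k_f+\lambda_2)h>0$ so that $A_4,A_5,A_6$ are well defined — so care is needed to track that both conditions are used precisely at this juncture. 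The forward estimate, the expansions, and the telescoping are otherwise routine bookkeeping of the Lipschitz and monotonicity constants.
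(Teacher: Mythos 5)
Your proof is correct and follows essentially the same route as the paper's: the same one-step forward recursion with constants $A_1,A_2,A_3$ obtained from the monotonicity of $b$ in $x$ plus Young's inequality, and the same backward recursion yielding $A_4,A_5,A_6$ after dividing by $1-(2k_f+\lambda_2)h$, followed by induction and telescoping from $\delta X_0=0$ and the terminal index $N$. The only (harmless) deviation is that where the paper invokes the martingale representation theorem to write $\delta Y_{i+1}-\bE_{t_i}[\delta Y_{i+1}]$ as a stochastic integral and then applies the It\^o isometry with conditional Jensen, you derive the same key inequality $h|\delta Z_i|^2\le \bE_{t_i}|M_{i+1}|^2$ directly from the conditional Pythagorean identity and conditional Cauchy--Schwarz against $\Delta W_{t_i}$ --- a slightly more elementary route to the identical estimate.
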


\begin{proof}
Define the following differences:
\begin{equation}
\left\{\begin{array}{l}
{\delta Z_i=Z^{\pi,1}_{t_i}-Z^{\pi,2}_{t_i}},\\
{\delta b_i=b(t_i,V_{t_i},X^{\pi,1}_{t_i},Y^{\pi,1}_{t_i},Z^{\pi,1}_{t_i})-b(t_i,V_{t_i},X^{\pi,2}_{t_i},Y^{\pi,2}_{t_i},Z^{\pi,2}_{t_i})},\\
{\delta\sigma_i=\sigma(t_i,V_{t_i},X^{\pi,1}_{t_i},Y^{\pi,1}_{t_i},Z^{\pi,1}_{t_i})-\sigma(t_i,V_{t_i},X^{\pi,2}_{t_i},Y^{\pi,2}_{t_i},Z^{\pi,2}_{t_i})},\\
{\delta f_i=f(t_i,V_{t_i},X^{\pi,1}_{t_i},Y^{\pi,1}_{t_i},Z^{\pi,1}_{t_i})-f(t_i,V_{t_i},X^{\pi,2}_{t_i},Y^{\pi,2}_{t_i},Z^{\pi,2}_{t_i})}.
\end{array}\right.
\end{equation}
From system \eqref{lemma1_system}, we then obtain the update rules:
\begin{equation}
\left\{\begin{array}{l}\label{delta_XYZ}
{\delta X_{i+1}=\delta X_{t_i}+\delta b_i h+\delta\sigma_i\triangle W_{t_i}},\\
{\delta Z_i=\frac{1}{h}\bE[\delta Y_{i+1} \triangle W_{t_i}|\mathcal{F}_{t_i}]},\\
{\delta Y_i=\bE[\delta Y_{i+1}+\delta f_i h|\mathcal{F}_{t_i}]}.
\end{array}\right.
\end{equation}
\textbf{Step 1: Estimate for $\delta X$}\\
Squaring and then taking expectations on both sides of the first equation of \eqref{delta_XYZ}, we get
\begin{align}\label{X_est}
	\bE|\delta X_{i+1}|^2=\bE|\delta X_i+\delta b_i h|^2+h\bE|\delta\sigma_i|^2.
\end{align}

Using Assumptions \ref{ass-1} and \ref{ass-2}, and applying standard inequalities, we find that for any $\lambda_0,\lambda_1>0$,
\begin{align}
&\bE|\delta X_{i+1}|^2
-\bE|\delta X_i|^2 
\nonumber\\
&=\bE|\delta b_i|^2 h^2 + h\bE|\delta\sigma_i|^2+2h\bE\big[\delta X_i \cdot (b(t_i,V_{t_i},X^{\pi,1}_{t_i},Y^{\pi,1}_{t_i},Z^{\pi,1}_{t_i})-b(t_i,V_{t_i},X^{\pi,2}_{t_i},Y^{\pi,2}_{t_i},Z^{\pi,2}_{t_i}))\big]\nonumber\\
&\le 
\big(K\bE|\delta X_i|^2+b_y\bE|\delta Y_i|^2+b_z\bE|\delta Z_i|^2\big)h^2 + \big(\sigma_x\bE|\delta X_i|^2+\sigma_y\bE|\delta Y_i|^2+\sigma_z\bE|\delta Z_i|^2\big)h\nonumber\\
&\;+2hk_b\bE|\delta X_i|^2+2h\bE\bigg[\sqrt{b_y|\delta Y_i|^2+b_z|\delta Z_i|^2}\,\, |\delta X_i |\bigg]\nonumber\\
&\le 
 \big(K\bE|\delta X_i|^2+b_y\bE|\delta Y_i|^2+b_z\bE|\delta Z_i|^2\big)h^2 + \big(\sigma_x\bE|\delta X_i|^2+\sigma_y\bE|\delta Y_i|^2+\sigma_zE|\delta Z_i|^2\big)h\nonumber\\
&\;+2hk_b\bE|\delta X_i|^2+2h\bE\bigg[\sqrt{b_y|\delta Y_i|^2|\delta X_i|^2}+\sqrt{b_z|\delta Z_i|^2|\delta X_i|^2}\bigg]\nonumber\\
&\le 
\big(K\bE|\delta X_i|^2+b_y\bE|\delta Y_i|^2+b_z\bE|\delta Z_i|^2\big)h^2 + \big(\sigma_x\bE|\delta X_i|^2+\sigma_y\bE|\delta Y_i|^2+\sigma_z\bE|\delta Z_i|^2\big)h\nonumber\\
&\;+2hk_b\bE|\delta X_i|^2+h\lambda_1\bE|\delta X_i|^2+h\lambda_1^{-1}b_yE|\delta Y_i|^2+h\lambda_0\bE|\delta X_i|^2+h\lambda_0^{-1}b_zE|\delta Z_i|^2\nonumber\\
&=(Kh+\sigma_x+2k_b+\lambda_1+\lambda_0)h \, \bE|\delta X_i|^2+\big((h+\lambda_1^{-1})b_y+\sigma_y\big)h\bE|\delta Y_i|^2\nonumber\\
&\;+\big((h+\lambda_0^{-1})b_z+\sigma_z\big)h\bE|\delta Z_i|^2.
\end{align}

Denoting
\begin{align*}
    A_1&=Kh+\sigma_x+2k_b+\lambda_1+\lambda_0,\\
    A_2&=(h+\lambda_1^{-1})b_y+\sigma_y,\\
    A_3&=(h+\lambda_0^{-1})b_z+\sigma_z,
\end{align*}
we obtain the recurrence:
\begin{equation*}
    \bE|\delta X_{i+1}|^2 \leq (1+A_1 h)\bE |\delta X_i|^2+A_2h\bE|\delta Y_i|^2+A_3h\bE|\delta Z_i|^2
\end{equation*}

 Then using the fact $\bE|\delta X_0|^2=0$, we obtain by induction that, for $1\le n\le N$,
\begin{align}
\bE|\delta X_n|^2&\le(1+A_1h)^n\bE|\delta X_0|^2+\sum_{i=0}^{n-1}(1+A_1h)^{n-i-1}A_2h\bE|\delta Y_i|^2+\sum_{i=0}^{n-1}(1+A_1h)^{n-i-1}A_3h\bE|\delta Z_i|^2\nonumber\\
&\le\sum_{i=0}^{n-1}e^{A_1h(n-i-1)}A_2h\bE|\delta Y_i|^2+\sum_{i=0}^{n-1}e^{A_1h(n-i-1)}A_3h\bE|\delta Z_i|^2.
\end{align}
\textbf{Step 2: Estimate for $\delta Y$ and $\delta Z$}\\
By the martingale representation theorem there exists an $\left(\mathcal{F}_t\right)_{t\geq 0}-$adapted square-integrable process $\{\delta \bar{Z}_t\}_{t_i\le t\le t_{i+1}}$ such that
\begin{equation}\label{delta_Y}
\delta Y_{i+1}=E[\delta Y_{i+1}|\mathcal{F}_{t_i}]+\int_{t_i}^{t_{i+1}}\delta \bar{Z}_t dW_t=\delta Y_i-\delta f_i h+\int_{t_i}^{t_{i+1}}\delta \bar{Z}_t dW_t.
\end{equation}
Squaring on both sides and taking the expectation yields that
\begin{align}\label{Y_est}
\bE|\delta Y_{i+1}|^2&=\bE|\delta Y_i-\delta f_i h|^2+\int_{t_i}^{t_{i+1}}\bE|\delta \bar{Z}_t|^2dt.
\end{align}

Rewriting this, and estimating the term $\delta f_i$, we obtain for any $\lambda_2>0$:
\begin{align}\label{EY}
&\bE|\delta Y_{i+1}|^2-\bE|\delta Y_i|^2\nonumber\\
&=\bE|\delta Y_i-\delta f_ih|^2-\bE|\delta Y_i|^2+\int_{t_i}^{t_{i+1}}\bE|\delta \bar{Z}_t|^2dt\nonumber\\
&\ge  \int_{t_i}^{t_{i+1}}\bE|\delta \bar{Z}_t|^2dt-2h\bE\Big[(f(t_i,V_{t_i},X^{1,\pi}_{t_i},Y^{1,\pi}_{t_i},Z^{1,\pi}_{t_i})-f(t_i,V_{t_i},X^{1,\pi}_{t_i},Y^{2,\pi}_{t_i},Z^{1,\pi}_{t_i}))\cdot\delta Y_i\Big]\nonumber\\
&\quad
-2h\bE\Big[(f(t_i,V_{t_i},X^{1,\pi}_{t_i},Y^{2,\pi}_{t_i},Z^{1,\pi}_{t_i})-f(t_i,V_{t_i},X^{2,\pi}_{t_i},Y^{2,\pi}_{t_i},Z^{2,\pi}_{t_i}))\cdot \delta Y_i\Big]\nonumber\\
&\ge \int_{t_i}^{t_{i+1}}\bE|\delta \bar{Z}_t|^2dt-2k_fh\bE|\delta Y_i|^2-\Big(\lambda_2\bE|\delta Y_i|^2+\lambda_2^{-1}(f_x\bE|\delta X_i|^2+f_z\bE|\delta Z_i|^2)\Big)\, h.
\end{align}
Using \eqref{delta_Y} into \eqref{delta_XYZ} we can have
$$\delta Z_i=\frac{1}{h}\bE\left[\int_{t_i}^{t_{i+1}}\delta \bar{Z}_tdt\Big|\mathcal{F}_{t_i}\right],$$
which further implies that
\begin{align}
\bE|\delta Z_i|^2 h= \frac{1}{h}\bE\Bigg|\bE\bigg[\int_{t_i}^{t_{i+1}}\delta \bar{Z}_t dt|\mathcal{F}_{t_i}\bigg]\Bigg|^2 \le \int_{t_i}^{t_{i+1}}\bE| \delta \bar{Z}_t |^2dt.
\end{align}
Plugging this into \eqref{EY} gives
$$
\bE|\delta Y_{i+1}|^2\ge \big(1-(2k_f+\lambda_2)h\big)\bE|\delta Y_i|^2+(1-f_z\lambda_2^{-1})h\bE|\delta Z_i|^2-f_x\lambda_2^{-1}h\bE|\delta X_i|^2.
$$
Then for any $\lambda_2> f_z$ and sufficiently small $h$ satisfying $(2k_f+\lambda_2)h<1$, we obtain
\begin{align*}
&\bE|\delta Y_i|^2+(1-f_z\lambda_2^{-1})\big(1-(2k_f+\lambda_2)h\big)^{-1}\bE|\delta Z_i|^2h
\\
&\le \big(1-(2k_f+\lambda_2)h\big)^{-1}\left(\bE|\delta Y_{i+1}|^2+f_x\lambda_2^{-1}\bE|\delta X_i|^2h\right).
\end{align*}
Denoting 
\begin{align*}
    A_4&=-h^{-1}ln[1-(2k_f+\lambda_2)h],\\
    A_5&=f_x\lambda_2^{-1}\big(1-(2k_f+\lambda_2)h\big)^{-1},\\
    A_6&=(1-f_z\lambda_2^{-1})\big(1-(2k_f+\lambda_2)h\big)^{-1},
\end{align*}
 we obtain by induction that, for $1\le n\le N$,
$$
\bE|\delta Y_n|^2+\sum_{i=n}^{N-1}A_6 e^{A_4(i-n) h} \bE|\delta Z_i|^2 h\le e^{A_4h(N-n)}\bE|\delta Y_N|^2+A_5\sum_{i=n}^{N}e^{A_4(i-n)h}\bE|\delta X_i|^2h.
$$
\end{proof}

\begin{rmk}\label{rmk:notation_lemma1}
Note that, using the notations from Lemma \ref{Lemma1}, for $i=1,2,\cdots,6$, the limits $\overline{A}_i=\lim_{h\rightarrow 0} A_i$ exists and are finite.
\end{rmk}
\begin{thm}\label{Error_Estimate_Num_Scheme}
Assume the assumptions of Theorem \ref{Rmain_convergence} hold, and there exists constants $\lambda_i>0,i=0,1,2$ such that $\lambda_2>f_z$, $(2k_f+\lambda_2)h<1$ and $\overline{A}_0<1$ where 

$$\overline{A}_0=\left( \overline{A}_2\frac{1-e^{-(\overline{A}_1+\overline{A}_4)T}}{\overline{A}_1+\overline{A}_4}+\frac{\overline{A}_3}{\overline{A}_6}\right)\cdot
\left(g_x(1+\lambda_3)e^{(\overline{A}_1+\overline{A}_4)T}+\overline{A}_5\frac{e^{(\overline{A}_1+\overline{A}_4)T}-1}{\overline{A}_1+\overline{A}_4}\right),
$$
and the constants $\overline{A}_1,\cdots,\overline{A}_6$ are as defined in Lemma \ref{Lemma1} and Remark \ref{rmk:notation_lemma1}.

Then there exists a constant $C>0$, independent of h, d, and m, such that for sufficiently small h,
$$
\sup_{t \in[0, T]}\big(\bE|X_{t}-\hat{X}_{t}^{\pi}|^{2}+\bE|Y_{t}-\hat{Y}_{t}^{\pi}|^{2}\big)+\int_{0}^{T} \bE|Z_{t}-\hat{Z}_{t}^{\pi}|^{2}dt \leq C\big(M+\bE|g(V_T,X_{T}^{\pi})-Y_{T}^{\pi}|^{2}\big),
$$
where,
\begin{align*}
	M=&\big(I_{0}^2+\bE|x_0|^2\big)h+\rho(h)+\frac{1}{h}\sup_{0\le k\le N-1}\bE\int_{t_k}^{t_{k+1}}|Z_t-\tilde{Z}_{t_k}|^2dt\\
    &+\sup_{0\leq k\leq N-1}\bE\bigg[\int_{t_k}^{t_{k+1}}|Z_t|^2dt+\int_{t_k}^{t_{k+1}}\big(|f|^2+|\sigma|^2\big)(t,V_t,0,0,0)dt\bigg],
\end{align*} 
and piecewise constant approximations $\hat{X}_{t}^{\pi}, \hat{Y}_{t}^{\pi},\hat{Z}_{t}^{\pi} $ are defined as 
\begin{equation*}
    \hat{X}_{t}^{\pi}=X_{t_{i}}^{\pi},\; \hat{Y}_{t}^{\pi}=Y_{t_{i}}^{\pi}, \;\hat{Z}_{t}^{\pi}=Z_{t_i}^{\pi} \quad\text{for} \quad t \in[t_{i}, t_{i+1}),
\end{equation*}
 with $(X^{\pi},Y^{\pi},Z^{\pi})$ satisfying the discrete-time system (\ref{scheme_system}) and $\tilde{Z}_{t_k}:=\frac{1}{h}\bE\left[\int_{t_k}^{t_{k+1}}Z_tdt|\cF_{t_k}\right]$.
\end{thm}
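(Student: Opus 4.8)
The plan is to insert the \emph{idealized} discrete scheme $(\overline{X}^\pi,\overline{Y}^\pi,\overline{Z}^\pi)$ from \eqref{discrete_eq} as an intermediary and split the error by the triangle inequality into a continuous-vs-idealized part and an idealized-vs-numerical part. The structural fact that makes this work, already noted before Lemma \ref{Lemma1}, is that \emph{both} the idealized scheme and the numerical solution $(X^\pi,Y^\pi,Z^\pi)$ of \eqref{scheme_system} solve the terminal-condition-free system \eqref{lemma1_system}: taking $\bE[\,\cdot\,|\cF_{t_i}]$ and $\bE[\,\cdot\,\Delta W_{t_i}|\cF_{t_i}]$ of the forward $Y$-update in \eqref{scheme_system} reproduces the last two lines of \eqref{lemma1_system}, since $f(t_i,V_{t_i},X^\pi_{t_i},Y^\pi_{t_i},Z^\pi_{t_i})$ is $\cF_{t_i}$-measurable and $\Delta W_{t_i}$ is independent of $\cF_{t_i}$ with variance $h$. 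Hence, writing $\delta X:=\overline{X}^\pi-X^\pi$ and similarly for $Y,Z$, the two discrete solutions differ only through their terminal data, so Lemma \ref{Lemma1} applies verbatim to this pair.

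For the continuous-vs-idealized part I would simply invoke the estimates already proved. Theorem \ref{Rmain_convergence} controls $\sup_t\bE(|X_t-\overline{X}^\pi_t|^2+|Y_t-\overline{Y}^\pi_t|^2)+\bE\int_0^T|\hat{Z}_t-\overline{Z}^\pi_t|^2dt$ by $CM$, and the time-regularity needed to pass from the grid values to the piecewise-constant interpolants $\hat{X}^\pi_t,\hat{Y}^\pi_t$ is supplied by Theorem \ref{stability_FBSDE}, whose right-hand side is again absorbed into $M$. For the $Z$-component I would decompose $Z_t-Z^\pi_{t_i}=(Z_t-\hat{Z}_{t_i})+(\hat{Z}_{t_i}-\overline{Z}^\pi_{t_i})+(\overline{Z}^\pi_{t_i}-Z^\pi_{t_i})$: the middle term is handled by Theorem \ref{Rmain_convergence}, the first by the projection term $\tfrac1h\sup_k\bE\int_{t_k}^{t_{k+1}}|Z_t-\tilde{Z}_{t_k}|^2dt$ inside $M$, and the last term $\sum_i\bE|\delta Z_i|^2h$ is deferred to the second part.

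The heart of the proof is the idealized-vs-numerical part, driven by Lemma \ref{Lemma1}. The two solutions share the dynamics \eqref{lemma1_system} but differ at the terminal node: the idealized scheme satisfies $\overline{Y}^\pi_{t_N}=g(V_T,\overline{X}^\pi_{t_N})$, whereas \eqref{scheme_system} enforces no terminal condition. Writing $\delta Y_N=g(V_T,\overline{X}^\pi_{t_N})-Y^\pi_{t_N}$, inserting $\pm g(V_T,X^\pi_{t_N})$, and using the $g$-Lipschitz bound of Assumption \ref{ass-1} gives, for any $\lambda_3>0$,
\[
\bE|\delta Y_N|^2\le (1+\lambda_3)g_x\,\bE|\delta X_N|^2+(1+\lambda_3^{-1})\,\bE|g(V_T,X^\pi_{t_N})-Y^\pi_{t_N}|^2,
\]
so the objective functional enters exactly as the terminal source. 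I would then combine the two inequalities of Lemma \ref{Lemma1}: the backward estimate bounds $\sup_n\bE|\delta Y_n|^2$ and $\sum_i\bE|\delta Z_i|^2h$ (here $\lambda_2>f_z$ guarantees $A_6>0$, making the $Z$-sum extractable) in terms of $\bE|\delta Y_N|^2$ and $\sum_i\bE|\delta X_i|^2h$, while the forward estimate bounds $\bE|\delta X_n|^2$, in particular $\bE|\delta X_N|^2$, in terms of $\sum_i\bE|\delta Y_i|^2h$ and $\sum_i\bE|\delta Z_i|^2h$. Substituting one into the other and collecting the coefficient of $\bE|\delta X_N|^2$, the round-trip feedback factor converges as $h\to0$ to precisely $\overline{A}_0$, whose first factor carries the forward $\overline{A}_2,\overline{A}_3/\overline{A}_6$ contributions and whose second factor carries the $g_x(1+\lambda_3)$ terminal and $\overline{A}_5$ backward contributions. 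Since $\overline{A}_0<1$ by hypothesis, for small $h$ the $\bE|\delta X_N|^2$ term is absorbed on the left, yielding
\[
\sup_n\bE|\delta X_n|^2+\sup_n\bE|\delta Y_n|^2+\sum_i\bE|\delta Z_i|^2h\le C\,\bE|g(V_T,X^\pi_{t_N})-Y^\pi_{t_N}|^2.
\]

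The main obstacle is this absorption step, because the two Lemma \ref{Lemma1} inequalities form a genuinely coupled forward-backward pair. Closing the loop requires carefully tracking the discrete geometric weights $e^{A_1(n-i-1)h}$ and $e^{A_4(i-n)h}$, replacing the Riemann sums $\sum_ie^{\,\cdot\,}\bE|\,\cdot\,|^2h$ by their integral limits $\tfrac{1-e^{-(\overline{A}_1+\overline{A}_4)T}}{\overline{A}_1+\overline{A}_4}$ and $\tfrac{e^{(\overline{A}_1+\overline{A}_4)T}-1}{\overline{A}_1+\overline{A}_4}$, and showing the composite feedback coefficient equals $\overline{A}_0+o(1)$ uniformly in $n$ (using $\overline{A}_i=\lim_{h\to0}A_i$ from Remark \ref{rmk:notation_lemma1}), so that the strict inequality $\overline{A}_0<1$ survives for $h$ small. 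Assembling the three contributions through the triangle inequality then delivers the claimed bound $C\big(M+\bE|g(V_T,X^\pi_{T})-Y^\pi_{T}|^2\big)$, with $C$ independent of $h$.
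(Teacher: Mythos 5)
Your proposal is correct and follows essentially the same route as the paper: you use the idealized scheme \eqref{discrete_eq} as intermediary, observe that it and the numerical solution of \eqref{scheme_system} both solve the terminal-condition-free system \eqref{lemma1_system}, split the terminal mismatch with the $(1+\lambda_3)$/$(1+\lambda_3^{-1})$ trick so the objective $\bE|g(V_T,X^\pi_T)-Y^\pi_T|^2$ enters as the source, close the coupled forward--backward loop from Lemma \ref{Lemma1} via the weighted maxima whose feedback coefficient tends to $\overline{A}_0<1$ as $h\to 0$, and finally combine with Theorem \ref{Rmain_convergence} for the continuous-vs-idealized part. The only cosmetic difference is that the paper absorbs the full weighted quantities $P$ and $S$ through the product bound $A(h)<1$ rather than isolating the coefficient of $\bE|\delta X_N|^2$, but your closing paragraph makes clear you intend the same uniform-in-$n$ absorption, so the arguments coincide.
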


\begin{proof}
We use the same notation as Lemma \ref{Lemma1}. Let $$X^{\pi,1}_{t_i}=X^{\pi}_{t_i},\;Y^{\pi,1}_{t_i}=Y^{\pi}_{t_i},\;Z^{\pi,1}_{t_i}=Z^{\pi}_{t_i},$$ and $$X^{\pi,2}_{t_i}=\overline{X}^{\pi}_{t_i},\;Y^{\pi,2}_{t_i}=\overline{Y}^{\pi}_{t_i},\;Z^{\pi,2}_{t_i}=\overline{Z}^{\pi}_{t_i},$$ 
  where $(X^{\pi},Y^{\pi},Z^{\pi})$ is defined in (\ref{scheme_system}) and $(\overline X^{\pi},\overline Y^{\pi},\overline Z^{\pi})$ is defined in (\ref{discrete_eq}). Both $(X^{\pi},Y^{\pi},Z^{\pi})$ and $(\overline X^{\pi},\overline Y^{\pi},\overline Z^{\pi})$ satisfy the system (\ref{lemma1_system}).
  
By the terminal condition, we have for any
$\lambda_3>0$,
\begin{equation}\label{est:delY_N}
	\bE|\delta Y_N|^2=\bE|g(V_T,\overline{X}^{\pi}_T)-Y^{\pi}_T|^2\le(1+\lambda_3^{-1})\bE|g(V_T,X^{\pi}_T)-Y^{\pi}_T|^2+g_x(1+\lambda_3)\bE|\delta X_N|^2.
\end{equation}
From Lemma \ref{Lemma1}, we also have the estimates:
\begin{align}\label{est:t321}
e^{-A_1nh}\bE|\delta X_n|^2\le&\sum_{i=0}^{n-1}e^{-A_1h(i+1)}A_2h\bE|\delta Y_i|^2+\sum_{i=0}^{n-1}e^{-A_1h(i+1)}A_3h\bE|\delta Z_i|^2,
\end{align}
and
\begin{align}\label{est:t322}
	e^{A_4nh}\bE|\delta Y_n|^2
	+\sum_{i=n}^{N-1}e^{A_4ih}A_6\bE|\delta Z_i|^2h\le&e^{A_4T}\bE|\delta Y_N|^2+A_5\sum_{i=n}^{N-1}e^{A_4ih}\bE|\delta X_i|^2h,
\end{align}
Now, we define:
\begin{align}
	P&:=\max_{0\le n\le N}e^{-A_1nh}\bE|\delta X_n|^2,
\end{align}
and 
\begin{align}
		S&:=\max\left\{\max_{0\le n\le N}(e^{A_4nh}\bE|\delta Y_n|^2),\,\,
	\sum_{i=0}^{N-1}e^{A_4ih}A_6h\bE|\delta Z_i|^2\right\}.
\end{align}
Then by combining \eqref{est:t321} and \eqref{est:t322}, and using estimate from \eqref{est:delY_N} we get
\begin{align}\label{P_le_S1}
P&\le\bigg(A_2he^{-A_1h}\frac{e^{-(A_1+A_4)T}-1}{e^{-(A_1+A_4)h}-1}+\frac{A_3}{A_6}e^{-A_1h}\bigg)S,
\end{align}
and
\begin{align}\label{S_le_P1}
	S&\le e^{A_4T}(1+\lambda_3^{-1})\bE|g(V_T,X^{\pi}_T)-Y^{\pi}_T|^2
	+\bigg(g_x(1+\lambda_3)e^{(A_1+A_4)T}+A_5h\frac{e^{(A_1+A_4)T}-1}
	{e^{(A_1+A_4)h}-1}\bigg)P.
\end{align}
Set $$A(h)=\left( A_2he^{-A_1h}\frac{e^{-(A_1+A_4)T}-1}{e^{-(A_1+A_4)h}-1}+\frac{A_3}{A_6}e^{-A_1h}\right)
\left(g_x(1+\lambda_3)e^{(A_1+A_4)T}+A_5h\frac{e^{(A_1+A_4)T}-1}{e^{(A_1+A_4)h}-1}\right). $$
If $A(h)<1$, then we can have by combining estimates from \eqref{P_le_S1} and \eqref{S_le_P1} that
\begin{align}
P&\le\big(1-A(h)\big)^{-1}e^{A_4T}(1+\lambda_3^{-1})\bE|g(V_T,X^{\pi}_T)-Y^{\pi}_T|^2
\left( A_2he^{-A_1h}\frac{e^{-(A_1+A_4)T}-1}{e^{-(A_1+A_4)h}-1}+\frac{A_3}{A_6}e^{-A_1h}\right),
\end{align}
and
\begin{align}
	S&\le\big(1-A(h)\big)^{-1}e^{A_4T}(1+\lambda_3^{-1})\bE|g(V_T,X^{\pi}_T)-Y^{\pi}_T|^2.
\end{align}
Now, define
\begin{align}
\overline{P}&:=\max_{0\le n\le N}e^{-\overline{A}_1nh}\bE|\delta X_n|^2, \nonumber\\
\overline{S}&:=\max\left\{ \max_{0\le n\le N}\big(e^{\overline{A}_4nh}\bE|\delta Y_n|^2\big),\,\,
			\sum_{i=0}^{N-1}e^{\overline{A}_4ih}\overline{A}_6h\bE|\delta Z_i|^2\right\}.\nonumber
\end{align}
Recall that $A_i\rightarrow \overline{A}_i$ as $h\to 0$ for $i=1,\cdots,6,$ and that
$$\lim_{h\to 0}A(h)=\left( \overline{A}_2\frac{1-e^{-(\overline{A}_1+\overline{A}_4)T}}{\overline{A}_1+\overline{A}_4}+\frac{\overline{A}_3}{\overline{A}_6}\right)
\left(g_x(1+\lambda_3)e^{(\overline{A}_1+\overline{A}_4)T}+\overline{A}_5\frac{e^{(\overline{A}_1+\overline{A}_4)T}-1}{\overline{A}_1+\overline{A}_4}\right).
$$
When $\overline{A}_0<1$, comparing $\lim_{h\to 0} A(h)$ and $\overline{A_0}$, we know that, for any $\epsilon>0$, there exists $\lambda_3>0$ and sufficiently small $h$ such that
\begin{align}
\overline{P}\le&(1+\epsilon)\big(1-\overline{A}_0\big)^{-1}e^{\overline{A}_4T}(1+\lambda_3^{-1})\bigg(\overline{A}_2\frac{1-e^{-(\overline{A}_1+\overline{A}_4)T}}{\overline{A}_1+\overline{A}_4}+\frac{\overline{A}_3}{\overline{A}_6}\bigg)\bE|g(V_T,X^{\pi}_T)-Y^{\pi}_T|^2,
\end{align}
and 
\begin{align}
	\overline{S}\le&(1+\epsilon)\big(1-\overline{A}_0\big)^{-1}e^{\overline{A}_4T}(1+\lambda_3^{-1})\bE|g(V_T,X^{\pi}_T)-Y^{\pi}_T|^2.
\end{align}
By setting $\epsilon=1$ and choosing suitable $\lambda_3$, we obtain estimates of $\bE|\delta X_n|^2, \bE|\delta Y_n|^2$ and $\bE|\delta Z_n|^2$ as
\begin{equation*}
\max_{0\le n\le N}\bE|\delta X_n|^2\le e^{\overline{A}_1T\vee 0}\overline{P}\le C(\lambda_1,\lambda_2)\bE|g(V_T,X^{\pi}_T)-Y^{\pi}_T|^2,
\end{equation*}
\begin{equation*}
\max\left\{ \max_{0\le n\le N}\bE|\delta Y_n|^2,\,\,
\sum_{i=0}^{N-1}\bE|\delta Z_i|^2h\right\}
\le e^{-\overline{A}_4T\vee 0}(1+\overline{A}_6^{-1})\overline{S}
\le C(\lambda_1,\lambda_2)\bE|g(V_T,X^{\pi}_T)-Y^{\pi}_T|^2.
\end{equation*}
Finally, combining these bounds with the result of Theorem \ref{Rmain_convergence}, we complete the proof.
\end{proof}

\section{A Convergence Analysis}

In this section, we establish an estimate for the minimized objective function associated with the stochastic optimization problem (\ref{scheme_objective}-\ref{scheme_system}). This result serves as a theoretical justification for employing our numerical scheme in the context of non-Markovian forward-backward stochastic differential equations (FBSDEs).

\begin{thm}\label{convergence_objective_function}
    Suppose all assumptions of Theorem \ref{stability_FBSDE} hold. Let $X^{\pi}_T, Y^{\pi}_0,Y^{\pi}_T, \{Z^{\pi}_{t_i}\}_{0\le i\le N-1}$ be defined as in \eqref{scheme_system}. Then, there exists a constant $C>0$, independent of $h,d$ and $m$, such that for sufficiently small $h$,

\begin{align*}
	\inf _{\cY_0,\cZ_i \in \mathcal{NN}(\cdot)}\bE&|g(V_T,X^{\pi}_T)-Y^{\pi}_T|^2\le C\Bigg[\big(I_{0}^2+\bE |x_0|^2\big)h+\rho(h)+ h \bE \int_{0}^{T} |Z_s|^2ds+\sum_{i=0}^{N-1}E^i_z\\& 
    +\inf _{\cY_0,\cZ_i \in \mathcal{NN}(\cdot)}\left(\sum_{i=0}^{N-1}\bE|\tilde{Z}_{t_i}-\cZ_i\big(X^\pi_{t_i},\{W_s,V_s:0\le s\le t_i\}\big)|^2h+\bE|Y_0-\cY_0(x_0)|^2\right)\Bigg],
\end{align*}

where
\begin{align*}
    \tilde{Z}_{t_i}&=h^{-1}\bE\Big[\int_{t_i}^{t_{i+1}}Z_t dt|\mathcal{F}_{t_i}\Big],\\
    E^i_z&=\int_{t_i}^{t_{i+1}}\bE|Z_t-\tilde{Z}_{t_i}|^2dt.
\end{align*}
\end{thm}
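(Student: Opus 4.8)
The strategy is to prove the bound for an arbitrary fixed pair $(\cY_0,\{\cZ_i\})$ of admissible network functions and then pass to the infimum on both sides, which is legitimate because the time-discretization contribution on the right is independent of the networks. Fixing such a pair, I run the scheme \eqref{scheme_system} to obtain $(X^\pi,Y^\pi,Z^\pi)$, let $(X,Y,Z)$ denote the true solution of \eqref{maineq}, and set $\cE X_i:=X_{t_i}-X^\pi_{t_i}$, $\cE Y_i:=Y_{t_i}-Y^\pi_{t_i}$, and the source quantity $\cE Z_i:=\tilde Z_{t_i}-Z^\pi_{t_i}$. Since $Y_T=g(V_T,X_T)$ and $g(V_T,\cdot)$ is $\sqrt{g_x}$-Lipschitz by Assumption \ref{ass-1}, the objective splits as
\[ \bE|g(V_T,X^\pi_T)-Y^\pi_T|^2\le 2g_x\,\bE|\cE X_N|^2+2\,\bE|\cE Y_N|^2, \]
so it suffices to control the two terminal forward errors.

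First I would derive one-step recursions for $\bE|\cE X_{i+1}|^2$ and $\bE|\cE Y_{i+1}|^2$. Each coefficient increment is written as $[\,\varphi(s,V_s,X_s,Y_s,Z_s)-\varphi(t_i,V_{t_i},X_{t_i},Y_{t_i},\tilde Z_{t_i})\,]+[\,\varphi(t_i,V_{t_i},X_{t_i},Y_{t_i},\tilde Z_{t_i})-\varphi(t_i,V_{t_i},X^\pi_{t_i},Y^\pi_{t_i},Z^\pi_{t_i})\,]$ for $\varphi=b,\sigma,f$; the second bracket is handled by the Lipschitz bounds of Assumption \ref{ass-1} and contributes $Ch(\bE|\cE X_i|^2+\bE|\cE Y_i|^2)+Ch\,\bE|\cE Z_i|^2$, while the first bracket is a pure time-discretization residual. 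The forward diffusion difference and the backward martingale are treated by the It\^{o} isometry; the defining identity $\bE[\int_{t_i}^{t_{i+1}}(Z_s-\tilde Z_{t_i})\,ds\mid\cF_{t_i}]=0$ forces the cross term between the $Z$-projection residual and $\cE Z_i$ to vanish, so $\int_{t_i}^{t_{i+1}}(Z_s-Z^\pi_{t_i})\,dW_s$ contributes exactly $E^i_z+\bE|\cE Z_i|^2h$. The $\cF_{t_i}$-measurability of $\cE X_i,\cE Y_i$ annihilates the remaining martingale cross terms, and the cross term with the non-adapted drift residual is absorbed by Young's inequality.

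I would then estimate the time-discretization residuals by splitting the frozen-coefficient difference via Assumption \ref{ass-2} (the modulus $\rho(h)$ at frozen spatial arguments), the Lipschitz bounds applied to the path oscillations $|X_s-X_{t_i}|^2+|Y_s-Y_{t_i}|^2$ (controlled by Theorem \ref{stability_FBSDE}, which after summation yields the $(I_{0}^2+\bE|x_0|^2)h$ and $h\,\bE\int_0^T|Z_s|^2ds$ contributions), and the averaging term $\bE\int_{t_i}^{t_{i+1}}|Z_s-\tilde Z_{t_i}|^2ds=E^i_z$. Because $\cE Z_i$ is a frozen source rather than a propagated error, adding the two recursions gives $p_{i+1}+q_{i+1}\le(1+Ch)(p_i+q_i)+s_i$ with $p_i=\bE|\cE X_i|^2$, $q_i=\bE|\cE Y_i|^2$, $p_0=0$, $q_0=\bE|Y_0-\cY_0(x_0)|^2$, and $s_i$ the aggregated discretization plus $C\,\bE|\cE Z_i|^2h$; discrete Gronwall bounds $p_N+q_N$ by $e^{CT}$ times $q_0+\sum_i s_i$, and taking the infimum over $(\cY_0,\{\cZ_i\})$ gives the claim. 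The main obstacle is exactly this backward bookkeeping: keeping $\cE Z_i$ on the source side (the loss already carries $\sum_i\bE|\cE Z_i|^2h$) and exploiting the projection property of $\tilde Z_{t_i}$ so that the martingale term delivers precisely $E^i_z$ and not $h^{-1}E^i_z$. Note that, since $\cE Z$ is never propagated, only the Lipschitz constants of Assumption \ref{ass-1}, and not the monotonicity of Assumption \ref{A1-wellp}, enter the Gronwall constant; the structural conditions are used only indirectly, through the a priori and continuity estimates of Theorems \ref{Wellp_thm}--\ref{stability_FBSDE}.
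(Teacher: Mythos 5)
Your proposal is correct and follows essentially the same route as the paper's proof: one-step error recursions built from the split of each coefficient increment into a time-regularity residual (controlled by $\rho(h)$, Theorem \ref{stability_FBSDE}, and $E^i_z$) plus Lipschitz differences in $(\cE X_i,\cE Y_i,\cE Z_i)$, with $\cE Z_i$ kept as a source term, followed by discrete Gr\"onwall, the Lipschitz decomposition of the terminal objective, and passage to the infimum over the network classes. The one small refinement is your use of the exact orthogonality $\bE\big[\int_{t_i}^{t_{i+1}}(Z_s-\tilde Z_{t_i})\,ds\,\big|\,\cF_{t_i}\big]=0$ to get the Pythagorean identity $\bE\int_{t_i}^{t_{i+1}}|\delta Z_s|^2ds=E^i_z+\bE|\cE Z_i|^2h$, where the paper instead loses an $\epsilon_3$-factor via Young's inequality (equation \eqref{est:t41X6} and the surrounding estimate); this is harmless either way since only a generic constant $C$ is claimed.
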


\begin{proof}
Consider the piecewise constant Euler-Maruyama approximation defined on each interval $[t_i,t_{i+1})$. Define the numerical processes $X^{\pi}_t$ and $Y^{\pi}_t$ by:
\begin{equation}\label{th41:XY_disc}
\left\{\begin{array}{l}
{X^{\pi}_t=X^{\pi}_{t_i}+b(t_i,V_{t_i},X^{\pi}_{t_i},Y^{\pi}_{t_i},Z^{\pi}_{t_i})(t-t_i)+\sigma(t_i,V_{t_i},X^{\pi}_{t_i},Y^{\pi}_{t_i},Z^{\pi}_{t_i})(W_t-W_{t_i})},\\
{Y^{\pi}_t=Y^{\pi}_{t_i}-f(t_i,V_{t_i},X^{\pi}_{t_i},Y^{\pi}_{t_i},Z^{\pi}_{t_i})(t-t_i)+Z^{\pi}_{t_i}(W_t-W_{t_i})}.
\end{array}\right.
\end{equation}
Now, define the error processes by
$\delta X_t=X_t-X^{\pi}_t, \delta Y_t=Y_t-Y^{\pi}_t, \delta Z_t=Z_t-Z^{\pi}_{t_i} \text{ for }t\in[t_i,t_{i+1})$ and the coefficient differences by
\begin{equation}
\left\{\begin{array}{l}
{\delta b_t=b(t,V_t,X_t,Y_t,Z_t)-b(t_i,V_{t_i},X^{\pi}_{t_i},Y^{\pi}_{t_i},Z^{\pi}_{t_i})},\\
{\delta \sigma_t=\sigma(t,V_t,X_t,Y_t,Z_t)-\sigma(t_i,V_{t_i},X^{\pi}_{t_i},Y^{\pi}_{t_i},Z^{\pi}_{t_i})},\\
{\delta f_t=f(t,V_t,X_t,Y_t,Z_t)-f(t_i,V_{t_i},X^{\pi}_{t_i},Y^{\pi}_{t_i},Z^{\pi}_{t_i})}.
\end{array}\right.
\end{equation}
Then we have from \eqref{th41:XY_disc},
\begin{equation}
\left\{\begin{array}{l}
{d(\delta X_t)=\delta b_t dt+\delta\sigma_t dW_t},\\
{d(\delta Y_t)=-\delta f_t dt+\delta Z_t dW_t}.
\end{array}\right.
\end{equation}
Next, use of It$\hat{o}$'s formula yields that
\begin{equation}
\left\{\begin{array}{l}
{d|\delta X_t|^2=\big(2|\delta b_t|\cdot|\delta X_t|+|\delta \sigma_t|^2\big)dt+2|\delta X_t|\cdot|\delta\sigma_t| dW_t},\\
{d|\delta Y_t|^2=\big(-2|\delta f_t|\cdot|\delta Y_t|+|\delta Z_t|^2\big) dt+2|\delta Y_t|\cdot|\delta Z_t| dW_t}.
\end{array}\right.
\end{equation}
Now by taking expectations on both sides we get
\begin{equation}\label{diff_eqn}
\left\{\begin{array}{l}
{\bE|\delta X_t|^2=\bE|\delta X_{t_i}|^2+\bE\int_{t_i}^{t} \big(2|\delta b_s|\cdot|\delta X_s|+|\delta \sigma_s|^2\big)ds},\\
{\bE|\delta Y_t|^2=\bE|\delta Y_{t_i}|^2+\bE\int_{t_i}^{t}\big(-2|\delta f_s|\cdot|\delta Y_s|+|\delta Z_s|^2\big)ds}.
\end{array}\right.
\end{equation}
Then using Assumption \ref{ass-1}, for any $\lambda_5,\lambda_6>0$, we have from the 1st equation of (\ref{diff_eqn}) that
\begin{align}\label{delta_X}
\bE|\delta X_t|^2\le \;& \bE|\delta X_{t_i}|^2+\int_{t_i}^{t}\big(\lambda_5\bE|\delta X_s|^2+\lambda_5^{-1}\bE|\delta b_s|^2+\bE|\delta \sigma_s|^2\big)ds\nonumber\\
\le\; &\bE|\delta X_{t_i}|^2+\lambda_5\int_{t_i}^{t}\bE|\delta X_s|^2ds+\int_{t_i}^{t}(\lambda_5^{-1}+1)\rho(|s-t_i|)ds\nonumber\\
&+\int_{t_i}^{t}\Big((K\lambda_5^{-1}+\sigma_x)\bE|X_s-X^{\pi}_{t_i}|^2+(b_y\lambda_5^{-1}+\sigma_y)\bE|Y_s-Y^{\pi}_{t_i}|^2\nonumber\\
&+(b_z\lambda_5^{-1}+\sigma_z)\bE|Z_s-Z^{\pi}_{t_i}|^2\Big)ds.
\end{align}
Note that, for any $\epsilon_1,\epsilon_2>0$,
\begin{equation}\label{est:t41XY}
\left \{\begin{array}{l}
{\bE| X_s-X^{\pi}_{t_i}|^2\le(1+\epsilon_1)\bE|\delta X_{t_i}|^2+(1+\epsilon_1^{-1})\bE|X_s-X_{t_i}|^2},\\
{\bE| Y_s-Y^{\pi}_{t_i}|^2\le(1+\epsilon_2)\bE|\delta Y_{t_i}|^2+(1+\epsilon_2^{-1})\bE|Y_s-Y_{t_i}|^2},
\end{array}\right.
\end{equation}
and we choose $\epsilon_1=\lambda_6(K\lambda_5^{-1}+\sigma_x)^{-1}, \epsilon_2=\lambda_6(b_y\lambda_5^{-1}+\sigma_y)^{-1}$.\\
Also note, Theorem \ref{stability_FBSDE} yields
\begin{align}\label{est:t41XY2}
	\sup_{s\in[t_i,t_{i+1})}&\Big(\bE|X_s-X_{t_i}|^2+\bE|Y_s-Y_{t_i}|^2\Big)\nonumber\\
    &\le C\big(I_{0}^2+\bE |x_0|^2\big)h + C \bE\left[ \int_{t_i}^{t_{i+1}} |Z_s|^2ds+ \int_{t_i}^{t_{i+1}} |\sigma|^2(s,V_s,0,0,0)ds \right].
\end{align}
Now, first using the estimates from \eqref{est:t41XY} into \eqref{delta_X}, and then using the estimate from \eqref{est:t41XY2}, we get
\begin{align}\label{est:t41X5}
\bE|\delta X_t|^2\le\;&\bE|\delta X_{t_i}|^2+\lambda_5\int_{t_i}^{t}\bE|\delta X_s|^2ds+\int_{t_i}^{t}(\lambda_5^{-1}+1)\rho(|s-t_i|)ds\nonumber\\
&+\int_{t_i}^{t}\Big((K\lambda_5^{-1}+\sigma_x)(1+\epsilon_1)\bE|\delta X_{t_i}|^2+(K\lambda_5^{-1}+\sigma_x)(1+\epsilon_1^{-1})\bE|X_s-X_{t_i}|^2\nonumber\\
&+(b_y\lambda_5^{-1}+\sigma_y)(1+\epsilon_2)\bE|\delta Y_{t_i}|^2+(b_y\lambda_5^{-1}+\sigma_y)(1+\epsilon_2^{-1})\bE|Y_s-Y_{t_i}|^2\nonumber\\
&+(b_z\lambda_5^{-1}+\sigma_z)\bE|\delta Z_s|^2\Big)ds\nonumber\\
\le\;&\big(1+(\lambda_6+K\lambda_5^{-1}+\sigma_x)h\big)\bE|\delta X_{t_i}|^2+(\lambda_6+b_y\lambda_5^{-1}+\sigma_y)h\bE|\delta Y_{t_i}|^2\nonumber\\
&+\int_{t_i}^{t}(b_z\lambda_5^{-1}+\sigma_z)\bE|\delta Z_s|^2ds +\lambda_5\int_{t_i}^{t}\bE|\delta X_s|^2ds+C\Big(\big(I_{0}^2+\bE |x_0|^2\big)h+\rho(h)\Big)h\nonumber\\
& + Ch \bE \left[\int_{t_i}^{t_{i+1}} |Z_s|^2ds+\int_{t_i}^{t_{i+1}}|\sigma|^2(s,V_s,0,0,0)ds\right].
\end{align}
Then applying  Gr$\ddot{o}$nwall inequality into \eqref{est:t41X5} and using the estimate 
$$\int_{t_i}^{t_{i+1}}\bE|\delta Z_t|^2dt\le(1+\epsilon_3)\bE|\delta\tilde{Z}_{t_i}|^2h+(1+\epsilon_3^{-1})E^i_z,$$
with $\epsilon_3=(b_z\lambda_5^{-1}+\sigma_z)^{-1}\lambda_6$, we obtain
\begin{align}\label{est:t41X6}
\bE&|\delta X_{t_{i+1}}|^2\nonumber\\
&\le 
	e^{\lambda_5h}\bigg(\big(1+(\lambda_6+K\lambda_5^{-1}+\sigma_x)h\big)\bE|\delta X_{t_i}|^2+(\lambda_6+b_y\lambda_5^{-1}+\sigma_y)h\bE|\delta Y_{t_i}|^2\nonumber\\
&+\int_{t_i}^{t_{i+1}}(b_z\lambda_5^{-1}+\sigma_z)\bE|\delta Z_s|^2ds
	+C\Big(\big(I_{0}^2+\bE |x_0|^2\big)h+\rho(h)\Big)h\nonumber\\
    &+ Ch \bE \left[\int_{t_i}^{t_{i+1}} |Z_s|^2ds+\int_{t_i}^{t_{i+1}}|\sigma|^2(s,V_s,0,0,0)ds\right]\bigg)
	\nonumber\\
	&\le e^{(\lambda_5+\lambda_6+K\lambda_5^{-1}+\sigma_x)h}\bE|\delta X_{t_i}|^2+e^{\lambda_5h}(b_y\lambda_5^{-1}+\lambda_6+\sigma_y)h\bE|\delta Y_{t_i}|^2+e^{\lambda_5h}(b_z\lambda_5^{-1}+\lambda_6+\sigma_z)\bE|\delta\tilde{Z}_{t_i}|^2h
	\nonumber\\
&
+CE^i_z+C\Big(\big(I_{0}^2+\bE |x_0|^2\big)h+\rho(h)\Big)h+ Ch \bE \left[\int_{t_i}^{t_{i+1}} |Z_s|^2ds+\int_{t_i}^{t_{i+1}}|\sigma|^2(s,V_s,0,0,0)ds\right]
\nonumber\\
&\le e^{A_7h}\bE|\delta X_{t_i}|^2+A_8\bE|\delta Y_{t_i}|^2h+A_{9}\bE|\delta \tilde{Z}_{t_i}|^2h+CE^i_z+C\Big(\big(I_{0}^2+\bE |x_0|^2\big)h+\rho(h)\Big)h\nonumber\\
&+ Ch \bE \left[\int_{t_i}^{t_{i+1}} |Z_s|^2ds+\int_{t_i}^{t_{i+1}}|\sigma|^2(s,V_s,0,0,0)ds\right] ,
\end{align}
where $A_7=K\lambda_5^{-1}+\sigma_x+\lambda_5+\lambda_6, A_8=b_y\lambda_5^{-1}+2\lambda_6+\sigma_y, A_{9}=b_z\lambda_5^{-1}+2\lambda_6+\sigma_z,$ and $h$ is sufficiently small.\\

Similarly, for any $\lambda_5,\lambda_6>0$, we have from the 2nd equation of (\ref{diff_eqn}) that
\begin{align}\label{est:t41Y5}
\bE|\delta Y_t|^2\le\;&\bE|\delta Y_{t_i}|^2+\int_{t_i}^{t}\Big(\lambda_5\bE|\delta Y_s|^2+\lambda_5^{-1}\bE|\delta f_s|^2+\bE|\delta Z_s|^2\Big)ds\nonumber\\
\le\;&\bE|\delta Y_{t_i}|^2+\lambda_5\int_{t_i}^{t}\bE|\delta Y_s|^2ds+\int_{t_i}^{t}\lambda_5^{-1}\Big(f_x\bE|X_s-X^{\pi}_{t_i}|^2+K\bE|Y_s-Y^{\pi}_{t_i}|^2\Big)ds\nonumber\\
&+\int_{t_i}^{t}\lambda_5^{-1}\rho(|s-t_i|)ds+(1+f_z\lambda_5^{-1})\int_{t_i}^{t} \bE|\delta Z_s|^2ds\nonumber\\
\le&\big(1+(K\lambda_5^{-1}+\lambda_6)h\big)\bE|\delta Y_{t_i}|^2+\lambda_5\int_{t_i}^{t}\bE|\delta Y_s|^2ds+(f_x\lambda_5^{-1}+\lambda_6)\bE|\delta X_{t_i}|^2h\nonumber\\
&+(1+f_z\lambda_5^{-1})\int_{t_i}^{t} \bE|\delta Z_s|^2ds+\lambda_5^{-1}\rho(h)h+C\int_{t_i}^t\bE\bigg[|Y_s-Y_{t_i}|^2+|X_s-X_{t_i}|^2\bigg]ds\nonumber\\
\le\; & \big(1+(K\lambda_5^{-1}+\lambda_6)h\big)\bE|\delta Y_{t_i}|^2+\lambda_5\int_{t_i}^{t}\bE|\delta Y_s|^2ds+(f_x\lambda_5^{-1}+\lambda_6)\bE|\delta X_{t_i}|^2h\nonumber\\
&+(1+f_z\lambda_5^{-1})\int_{t_i}^{t} \bE|\delta Z_s|^2ds+C\Big(\big(I_{0}^2+\bE |x_0|^2\big)h+\rho(h)\Big)h\nonumber\\
&+ Ch \bE \left[\int_{t_i}^{t_{i+1}} |Z_s|^2ds+\int_{t_i}^{t_{i+1}}|\sigma|^2(s,V_s,0,0,0)ds\right].
\end{align}
Then once again using Gr$\ddot{o}$nwall inequality, for sufficiently small $h$, we have from \eqref{est:t41Y5} that
\begin{align}\label{est:t41Y6}
\bE|\delta Y_{t_{i+1}}|^2\le &e^{A_{10}h}\bE|\delta Y_{t_i}|^2+A_{11}\bE|\delta X_{t_i}|^2h+A_{12}\bE|\delta\tilde{Z}_{t_i}|^2h+C\Big(\big(I_{0}^2+\bE |x_0|^2\big)h+\rho(h)\Big)h\nonumber\\
&+ Ch \bE \left[\int_{t_i}^{t_{i+1}} |Z_s|^2ds+\int_{t_i}^{t_{i+1}}|\sigma|^2(s,V_s,0,0,0)ds\right],
\end{align}
where $A_{10}=\lambda_5+K\lambda_5^{-1}+\lambda_6, A_{11}=f_x\lambda_5^{-1}+2\lambda_6, A_{12}=1+2\lambda_6+f_z\lambda_5^{-1}$. \\
Now define $M_i=max\{\bE|\delta X_i|^2, \bE|\delta Y_i|^2\}, 0\le i\le N,$ and then from estimates \eqref{est:t41X6} and \eqref{est:t41Y6} we have
\begin{align}
M_{i+1}
&\le \bigg(e^{max\{A_7,A_{10}\}h}+max\{A_8,A_{11}\}h\bigg)M_i+max\{A_{9},A_{12}\}\bE|\delta\tilde{Z}_{t_i}|^2h +CE^i_z\nonumber\\
	&\quad
	+C\Big(\big(I_{0}^2+\bE |x_0|^2\big)h+\rho(h)\Big)h+ Ch \bE \left[\int_{t_i}^{t_{i+1}} |Z_s|^2ds+\int_{t_i}^{t_{i+1}}|\sigma|^2(s,V_s,0,0,0)ds\right]
\nonumber\\
&\le 
	e^{(max\{A_7,A_{10}\}+max\{A_8,A_{11}\})h}M_i+max\{A_{9},A_{12}\}\bE|\delta\tilde{Z}_{t_i}|^2h+CE^i_z
\nonumber\\
&\quad
+C\Big(\big(I_{0}^2+\bE |x_0|^2\big)h+\rho(h)\Big)h+ Ch \bE \left[\int_{t_i}^{t_{i+1}} |Z_s|^2ds+\int_{t_i}^{t_{i+1}}|\sigma|^2(s,V_s,0,0,0)ds\right].
\end{align}
By letting $A_{13}=max\{A_7,A_{10}\}+max\{A_8,A_{11}\}, A_{14}=max\{A_{9},A_{12}\}$ we further have
\begin{align}
M_{i+1}\le & e^{A_{13}h}M_i+A_{14}\bE|\delta\tilde{Z}_{t_i}|^2h+CE^i_z+C\Big(\big(I_{0}^2+\bE |x_0|^2\big)h+\rho(h)\Big)h\nonumber\\
&+ Ch \bE \left[\int_{t_i}^{t_{i+1}} |Z_s|^2ds+\int_{t_i}^{t_{i+1}}|\sigma|^2(s,V_s,0,0,0)ds\right].
\end{align}
Since $M_0=\bE|Y_0-Y^{\pi}_0|^2$, we can also have
\begin{align}
M_N\le & A_{14}e^{A_{13}T}\sum_{i=0}^{N-1}\bE|\delta\tilde{Z}_{t_i}|^2h\nonumber\\
&+C(T)\bigg(\big(I_{0}^2+\bE |x_0|^2\big)h+\rho(h)+ \bE|Y_0-Y^{\pi}_0|^2
+h \bE \int_{0}^{T} |Z_s|^2ds+\sum_{i=0}^{N-1}E^i_z\bigg).
\end{align}
Note that, for $\bar{K}=\max\{K,f_x,f_z,b_y,b_z,\sigma_x,\sigma_y,\sigma_z\}$:
$$A_{14}\le 1+\bar{K}\lambda_5^{-1}+\bar{K}+2\lambda_6,$$
$$A_{13}\le 2\bar{K}+2\bar{K}\lambda_5^{-1}+\lambda_5+3\lambda_6.$$
Given any $\lambda_4>0$, we can choose $\lambda_6$ small enough such that
$$A_{14}e^{A_{13}T}\le(1+\lambda_4)(\bar{K}\lambda_5^{-1}+\bar{K}+1)e^{(2\bar{K}\lambda_5^{-1}+2\bar{K}+\lambda_5)T}.$$
Thus
\begin{align}\label{est:M_N}
M_N\le&(1+\lambda_4)(\bar{K}\lambda_5^{-1}+\bar{K}+1)e^{(2\bar{K}\lambda_5^{-1}+2\bar{K}+\lambda_5)T}\sum_{i=0}^{N-1}\bE|\delta\tilde{Z}_{t_i}|^2h\nonumber\\
&+C(T)\bigg(\big(I_{0}^2+\bE |x_0|^2\big)h+\rho(h)+\bE|Y_0-Y^{\pi}_0|^2+ h \bE \int_{0}^{T} |Z_s|^2ds+\sum_{i=0}^{N-1}E^i_z\bigg),
\end{align}
Now, from the decomposition of the objective function, we have
\begin{align}\label{est:obj_T}
\bE|g(V_T, X^{\pi}_T)-Y^{\pi}_T|^2=&\bE|g(V_T, X^{\pi}_T)-g(V_T,X_T)+Y_T-Y^{\pi}_T|^2\nonumber\\
\le&\big(1+(\sqrt{g_x})^{-1}\big)\bE|g(V_T,X^{\pi}_T)-g(V_T,X_T)|^2+(1+\sqrt{g_x})\bE|\delta Y_N|^2\nonumber\\
\le&(g_x+\sqrt{g_x})\bE|\delta X_N|^2+(1+\sqrt{g_x})\bE|\delta Y_N|^2\nonumber\\
\le&(1+\sqrt{g_x})^2M_N.
\end{align}
Then, by substituting the estimate for $M_N$ from \eqref{est:M_N} into \eqref{est:obj_T} and choosing $\lambda_5=argmin_{\lambda\in\mathbb{R}^{+}}H(\lambda)$, where
$$H(\lambda)=(1+\sqrt{g_x})^2e^{(2\bar{K}\lambda^{-1}+2\bar{K}+\lambda)T}(\bar{K}\lambda^{-1}+\bar{K}+1),$$
we can have the following estimate
\begin{align}\label{est_final_sec4}
	\bE&|g(V_T,X^{\pi}_T)-Y^{\pi}_T|^2\le (1+\lambda_4)H_{min}\sum_{i=0}^{N-1}\bE|\delta \tilde{Z}_{t_i}|^2h\nonumber\\
	&+C(T)\bigg(\big(I_{0}^2+\bE |x_0|^2\big)h+\rho(h)+\bE|Y_0-Y^{\pi}_0|^2+ h \bE \int_{0}^{T} |Z_s|^2ds+\sum_{i=0}^{N-1}E^i_z\bigg),
\end{align}
with $$H_{min}=\min_{\lambda\in\mathbb{R}^{+}}H(\lambda).$$
Finally the proof is complete by using the fact $Y^{\pi}_0=\cY_0(x_0)$ and the representation $Z_{t_i}^{\pi}=\cZ_i\big(X^{\pi}_{t_i},\{W_s,V_s;0\le s\le t_i\}\big)$, and then taking the infimum on both sides of the equation \eqref{est_final_sec4} with respect to $\cY_0$ and $\cZ_i,i=0,\cdots,N-1$, within neural network function classes of appropriate structure.

\end{proof}


\section{Algorithm for Coupled non-Markovian FBSDEs} \label{algorithm_Section}

We consider the coupled forward-backward stochastic differential equation (FBSDE) given in \eqref{maineq}, explicitly written as
\begin{equation}
	\begin{cases}
		dX_t &= b(t,V_t,X_t,Y_t,Z_t)dt + \sigma(t,V_t,X_t,Y_t,Z_t)dW_t, \ t\in[0,T],\\
		X_0&=x_0,\\
		-dY_t &=f(t,V_t,X_t,Y_t,Z_t)dt - Z_t dW_t, \ t\in[0,T], \\
		Y_T &= g(V_T,X_T).
	\end{cases}
\end{equation}
Under the Markovian framework, deep learning-based algorithms for high-dimensional coupled FBSDEs (or FBSPDEs) have already demonstrated remarkable efficiency, as we discussed in Section 1. Here, we extend the setting to the non-Markovian framework. 

Inspired by the fact that, in a non-Markovian setting, the backward processes $(Y_t,Z_t)$ can be represented as random functionals of the forward process $X_t$, we aim to approximate $(Y_t,Z_t)$ as functions of both $X_t$ and and the path information $\{W_s,V_s:0\le s\le t\}$. More precisely, we assume
$$Y_t=\mathcal{Y}(t,X_t,\{W_s,V_s:0\le s\le t\}),\;Z_t=\mathcal{Z}(t,X_t,\{W_s,V_s:0\le s\le t\}),$$ 
for some deterministic functionals $\mathcal{Y}$ and $\mathcal{Z}$. This representation forms a key component of our deep learning-based approximation scheme for the backward processes.

In practice, deep neural networks are employed to approximate the unknown functionals $\mathcal{Y},\mathcal{Z}$ by reformulating the original FBSDE problem as a stochastic optimization problem. Here, we present two deep learning-based algorithms for solving the coupled non-Markovian FBSDE, both inspired by the framework of \cite{SJiSPengYPengXZhang2020ThreeAlgorithmsforSolvHighDimCoupledFBSDEsDeepLearn}.

To describe these algorithms, we consider a uniform partition $\pi:{t_0=0<t_1<\dots<t_N=T}$ of the time interval $[0,T]$. We denote by $X^{\pi}$ and $Y^{\pi}$ the discrete-time approximations of the processes $X$ and $Y$, respectively, obtained via the Euler scheme on the time grid $\pi$. We also define the Brownian motion increments as $\Delta W_{t_j}:=W_{t_{j+1}}-W_{t_j}$ for $j=0,1,\cdots,N-1$. 

The forward representation of the backward equation in \eqref{maineq} is given by:
\begin{equation}\label{fwrepbw}
Y_t=Y_{0}-\int_{0}^{t}f(s,V_s,X_s,Y_s,Z_s)\mathrm{d}s+\int_{0}^{t}Z_s\mathrm{d}W_s,\quad 0\le t\le T.
\end{equation}
This representation forms the foundation for the following two numerical algorithms designed to approximate the solution of coupled non-Markovian FBSDEs. Now, we describe the two algorithms.

\subsection{Algorithm 1}

The steps are as follows:
\begin{itemize}
\item Initialization starts with estimations $\cY_0$ and $\cZ_0$ of $Y_{0}$ and 
$Z_{0}$ respectively and then calculate $X_{t_1}^{\pi}$ and $Y_{t_1}^{\pi}$ by using Euler scheme as
\begin{equation*}
	\begin{split}
		X_{t_1}^{\pi}&=X_{0}+b(t_0,V_{t_0},X_{0},\cY_0,\cZ_0 )\Delta t_0+\sigma(t_0,V_{t_0},X_{0},\cY_0,\cZ_0)\Delta W_{t_0},\\
		Y_{t_1}^{\pi}&=\cY_0-f(t_0,V_{t_0},X_{0},\cY_0,\cZ_0 )\Delta t_0+\cZ_0\Delta W_{t_0}.
	\end{split}
\end{equation*}

\item For each $j=1,2,\cdots,N-1$, a neural network $\cZ_j^{\mathcal{N}}(\cdot;\theta_{j})$ approximate $Z_{t_j}=\cZ\big(t_j,X_{t_j}^{\pi},\{W_{t_i},V_{t_i}\}_{i=0}^j\big)$ and Euler scheme calculates $X_{t_{j+1}}^{\pi}$ and $Y_{t_{j+1}}^{\pi}$ as
\begin{equation*}
	\begin{split}
		X_{t_{j+1}}^{\pi}&=X_{t_j}^{\pi}+b\bigg(t_j,V_{t_j},X_{t_j}^{\pi},Y_{t_j}^{\pi},\cZ_j^{\mathcal{N}}\big(X_{t_j}^{\pi},\{W_{t_i},V_{t_i}\}_{i=0}^j;\theta_{j}\big) \bigg)\Delta t_j\\
		&+\sigma\bigg(t_j,V_{t_j},X_{t_j}^{\pi},Y_{t_j}^{\pi},\cZ_j^{\mathcal{N}}\big(X_{t_j}^{\pi},\{W_{t_i},V_{t_i}\}_{i=0}^j;\theta_{j}\big)\bigg)\Delta W_{t_j},\\
		Y_{t_{j+1}}^{\pi}&=Y_{t_{j}}^{\pi}-f\bigg(t_j,V_{t_j},X_{t_j}^{\pi},Y_{t_j}^{\pi},\cZ_j^{\mathcal{N}}\big(	X_{t_j}^{\pi},\{W_{t_i},V_{t_i}\}_{i=0}^j;\theta_{j}\big) \bigg)\Delta t_j+\cZ_j^{\mathcal{N}}\big(	X_{t_j}^{\pi},\{W_{t_i},V_{t_i}\}_{i=0}^j;\theta_{j}\big)\Delta W_{t_j}.
	\end{split}
\end{equation*}

\item Finally, the scheme is to optimize the loss function
\begin{equation*}
	\widehat{\mathfrak{L}}\big(\cY_0,\cZ_0,\theta_{1},\theta_{2},\cdots,\theta_{N-1}\big):=\bE\bigg|Y_{t_{N}}^{\pi}-g(V_{t_N},X_{t_{N}}^{\pi})\bigg|^2,
\end{equation*}
over all $\theta=(\cY_0,\cZ_0,\theta_{1},\theta_{2},\cdots,\theta_{N-1})$, and if 
\begin{equation*}
	\theta^*=(\cY^*_0,\cZ^*_0,\theta^*_{1},\theta^*_{2},\cdots,\theta^*_{N-1})\in \arg \min_{\theta\in\Theta} \widehat{\mathfrak{L}}(\theta),
\end{equation*}
then $\cY_0^*$ is the desired approximation of $Y_{0}$ by this method.
\end{itemize}


\subsection{Algorithm 2}

The steps are as follows:

\begin{itemize}
\item Initialize with estimations $\cY_0$ and $\cZ_0$ of $Y_{0}$ and 
$Z_{0}$ respectively and calculate $X_{t_1}^{\pi}$ and $Y_{t_1}^{\pi}$ by using Euler scheme as
\begin{equation*}
	\begin{split}
		X_{t_1}^{\pi}&=X_{0}+b(t_0,V_{t_0},X_{0},\cY_0,\cZ_0 )\Delta t_0+\sigma(t_0,V_{t_0},X_{0},\cY_0,\cZ_0)\Delta W_{t_0},\\
		Y_{t_1}^{\pi}&=\cY_0-f(t_0,V_{t_0},X_{0},\cY_0,\cZ_0 )\Delta t_0+\cZ_0\Delta W_{t_0}.
	\end{split}
\end{equation*}
\item For each $j=1,2,\cdots,N-1$, a neural network $\cX_j^{\cN}(\cdot;\theta_j)=\big(\cY_j^{\mathcal{N}}(\cdot;\theta_{j}), \cZ_j^{\mathcal{N}}(\cdot;\theta_{j})\big)$ approximates $Y_{t_j}=\cY\big(t_j,X_{t_j}^{\pi},\{W_{t_i},V_{t_i}\}_{i=0}^j\big)$ and $Z_{t_j}=\cZ\big(t_j,X_{t_j}^{\pi},\{W_{t_i},V_{t_i}\}_{i=0}^j\big)$ 
with the associated local loss function, which is defined as
\begin{equation*}
	\mathfrak{L}_j:=\Delta t_{j-1}\cdot\bE\bigg|Y_{t_j}^{\pi}-\cY_j^{\mathcal{N}}\big(X_{t_j}^{\pi},\{W_{t_i},V_{t_i}\}_{i=0}^j;\theta_{j}\big)\bigg|^2.
\end{equation*}

Then Euler scheme calculates $X_{t_{j+1}}^{\pi}$ and $Y_{t_{j+1}}^{\pi}$ as
\begin{equation*}
	\begin{split}
		X_{t_{j+1}}^{\pi}&=X_{t_j}^{\pi}+b\bigg(t_j,V_{t_j},X_{t_j}^{\pi},\cY_j^{\mathcal{N}}\big(	X_{t_j}^{\pi},\{W_{t_i},V_{t_i}\}_{i=0}^j;\theta_{j}\big),\cZ_j^{\mathcal{N}}\big(X_{t_j}^{\pi},\{W_{t_i},V_{t_i}\}_{i=0}^j;\theta_{j}\big) \bigg)\Delta t_j\\
		&+\sigma\bigg(t_j,V_{t_j},X_{t_j}^{\pi},\cY_j^{\mathcal{N}}\big(	X_{t_j}^{\pi},\{W_{t_i},V_{t_i}\}_{i=0}^j;\theta_{j}\big),\cZ_j^{\mathcal{N}}\big(X_{t_j}^{\pi},\{W_{t_i},V_{t_i}\}_{i=0}^j;\theta_{j}\big)\bigg)\Delta W_{t_j},\\
		Y_{t_{j+1}}^{\pi}&=Y_{t_{j}}^{\pi}-f\bigg(t_j,V_{t_j},X_{t_j}^{\pi},\cY_j^{\mathcal{N}}\big(	X_{t_j}^{\pi},\{W_{t_i},V_{t_i}\}_{i=0}^j;\theta_{j}\big),\cZ_j^{\mathcal{N}}\big(	X_{t_j}^{\pi},\{W_{t_i},V_{t_i}\}_{i=0}^j;\theta_{j}\big) \bigg)\Delta t_j\\
		&+\cZ_j^{\mathcal{N}}\big(	X_{t_j}^{\pi},\{W_{t_i},V_{t_i}\}_{i=0}^j;\theta_{j}\big)\Delta W_{t_j}.
	\end{split}
\end{equation*}

\item Finally, the scheme is to optimize the global loss function
\begin{equation*}
	\widehat{\mathfrak{L}}(\cY_0,\cZ_0,\theta_{1},\theta_{2},\cdots,\theta_{N-1}):=\sum_{j=1}^ {N-1}\mathfrak{L}_j +\bE\bigg|Y_{t_{N}}^{\pi}-g(V_{t_N}X_{t_{N}}^{\pi})\bigg|^2,
\end{equation*}
over all $\theta=(\cY_0,\cZ_0,\theta_{1},\theta_{2},\cdots,\theta_{N-1})$, and if 
\begin{equation*}
	\theta^*=(\cY^*_0,\cZ^*_0,\theta^*_{1},\theta^*_{2},\cdots,\theta^*_{N-1})\in \arg \min_{\theta\in\Theta} \widehat{\mathfrak{L}}(\theta),
\end{equation*}
then $\cY_0^*$ is the desired approximation of $Y^{\pi}_{t_0}$ by this method. 
\end{itemize}
In both algorithms, the input dimension of the neural network varies with time, while the output dimension remains constant across all time steps. Specifically, at time step $j$, the neural network receives an input of dimension $d+(j+1)(m_0+m)$, reflecting the inclusion of the state variable $X_t$, and the historical path information $\{W_s,V_s: 0\leq s\leq t_j\}$ encoded over $j+1$ time steps. 

For Algorithm 1, the output dimension of the neural network is $md_0$, and for Algorithm 2, it is $(1+m)d_0$. In both cases, the output dimension remains unchanged across time steps and corresponds to the approximated values of the backward processes at that step.


\section{Numerical Examples with Applications}

For the stochastic variance $V_t$, we consider rough Bergomi model (see~\cite{CBayerPFrizJHGath2016PricingUnderRoughVolat}), given by
\[ V_t=\xi_t\cE(\eta \widehat{W}_t),  \]
where $\xi_t$ is the forward variance curve (typically computed from the implied volatility surface), $\cE$ denotes the Wick exponential, defined by
\[\cE(Z)=\exp(Z-\frac{1}{2}\var(Z)),\] for a zero-mean normal random variable $Z$, and $\eta\geq0$ is a model parameter. The process $\widehat{W}$ is a fractional Brownian motion of Riemann-Liouville type with Hurst index $H\in (0,1/2)$, defined by

\begin{equation}
	\label{eq:RL-fbm}
	\widehat{W}_t \coloneqq \int_0^t \mathcal K(t-s) \, d\widetilde W_s, \quad \mathcal K(r) \coloneqq
	\sqrt{2H} r^{H-1/2}, \quad r > 0,
\end{equation}
where $\widetilde{W}$ is a standard Brownian motion.

\subsection{Example 1}
Our first example is based on Example 2 from Han and Long \cite{JHanJLong2020ConvgDeepBSDEforCoupledFBSDEs}, where we consider the following coupled FBSDE:
\begin{align}
\begin{cases}
    X_t &= X_0 + \int_{0}^{t}\bigg[ \mu \bigg(\sigma \rho Y_s e^{-r(T-s)} \lambda \cos\Big(X_s + \int_{0}^{s} \sin{(V_r)} dr\Big) - Z_s\bigg)\bigg]ds \\
    &\quad+ \int_{0}^{t} \sigma Y_s\Big(\rho dW_s + \sqrt{1-\rho^2}dB_s\Big) ,\\
    X_0 &= \frac{\pi}{2} ,\\
    -dY_t &= \bigg[-rY_t + \frac{1}{2}e^{-3r(T-t)}\sigma^2 \lambda^3 \sin^3\Big(X_t + \int_{0}^{t} \sin(V_s)ds\Big)\\
    &\quad- \sin(V_t)e^{-r(T-t)}\lambda \cos\Big(X_t + \int_{0}^{t} \sin(V_s)ds\Big)\bigg]dt - Z_tdW_t - \Tilde{Z_t}dB_t, \\
    Y_T &= \lambda \sin\Big( X_T + \int_{0}^{T} \sin{(V_s)}ds \Big).
\end{cases}
\end{align}
The analytical solution of the above coupled  FBSDE is known to be: $$Y_t = e^{-r(T-t)}\lambda \sin{\Big(X_t + \int_{0}^{t} \sin{(V_s)}ds\Big)}.$$ 
In particular, at $t=0$ and $X_0 = \frac{\pi}{2}$, this yields the exact value:
 $$Y_0 \approx 7.6098354.$$

For numerical simulations, we set the parameters as follows: $T=1$, $\rho = -0.9$, $r = 0.05$, $\lambda = 8$, $\mu = 0.1$, $\sigma = 0.2$, $H = 0.07$, $\eta=1.9$, and $\xi_t = 0.09$. We apply both of our algorithms using various time discretizations with $N\in\{5,10,20,30\}$.

We train the model using the Adam optimizer on 64000 samples for 20 epochs.  Performance is evaluated after each epoch on a separate test set of 16,000 samples. A batch size of 64 is used for both training and testing, implying 1,000 training updates per epoch. The neural networks consist of 3 hidden layers with ReLU activation and include batch normalization before the hidden layers. The neural network approximation at time step $j$ is denoted by
$$\cX_j^{\cN}(\cdot;\theta_j)=\big(\cY_j^{\mathcal{N}}(\cdot;\theta_{j}), \cZ_j^{\mathcal{N}}(\cdot;\theta_{j})\big)$$
where the network input is taken as $\big(X_{t_j}^{\pi},\{V_{t_i}\}_{i=0}^j\big)$.

\textbf{Algorithm 1 Results: }
Table~\ref{tab:Alg_1Ex_1} shows the results from Algorithm 1, reporting the mean approximation of $Y_0$, relative error with respect to the true solution, and the empirical variance across 20 independent runs. The approximation improves with increasing $N$, demonstrating the algorithm’s effectiveness and consistency.
\begin{table}[ht]
\caption{Value of $Y_0$ from Algorithm 1 (Example 1)} 
\centering 
\begin{tabular}{c c c c}
\hline\hline 
N & Approximation & Relative Error & Variance  \\ [0.5ex] 
\hline 
5 & 7.79166896 & 2.389 \% &  0.000098\\ 
10 & 7.67836123 & 0.900\% & 0.000054\\
20 & 7.64265652 & 0.431\% & 0.000063 \\
30 & 7.63218625 & 0.294\% & 0.000069\\ [1ex] 
\hline 
\end{tabular}
\label{tab:Alg_1Ex_1} 
\end{table}

\textbf{Algorithm 2 Results: }Table~\ref{tab:Alg_2Ex_1} presents results from Algorithm 2 under the same setup. Although the variance remains low and the estimates are consistent, the accuracy is slightly inferior to Algorithm 1, with a persistent relative error around 1.8–2.0\%.

\begin{table}[ht]
\caption{Value of $Y_0$ from Algorithm 2 (Example 1)} 
\centering 
\begin{tabular}{c c c c} 
\hline\hline 
N & Approximation & Relative Error & Variance\\ [0.5ex] 
\hline 
5 & 7.76642227 & 2.058\% &  0.000091\\ 
10 & 7.75322528 & 1.884\%  &  0.000146\\
20 & 7.76028776 & 1.977\%  &  0.000070 \\
30 & 7.74844878 & 1.822\%  &  0.000080\\ [1ex] 
\hline 
\end{tabular}
\label{tab:Alg_2Ex_1} 
\end{table}


\subsection{Example 2: Utility Maximization with Fixed Consumption Rate}

In this example, we consider a utility maximization problem for an agent (or investor) who allocates their wealth between risky and risk-free assets, with a consumption rate influenced by an external process. The risky asset $S_t$ follows a stochastic volatility model driven by a two-dimensional Weiner process $W =  (\widetilde{W}, B)$:
\[
\begin{cases}
	dS_t &= (r + \theta V_t)S_tdt + S_t \sqrt{V_t}(\rho d\widetilde{W}_t +\sqrt{1-\rho^2}dB_t), \\
	S_0 &= s_0,
\end{cases}\]
where $\rho \in [-1,1]$ is the correlation coefficient, $r>0$ is the risk-free interest rate, and $\theta \sqrt{V_t};\theta\neq 0$ is the market price of risk (risk premium).

Let $\pi_t$ denote the proportion of the wealth invested in the risky asset, and define $u_t := \sqrt{V_t}\pi_t$. We assume the consumption rate is given by $c_t\sqrt{V_t}X_t$, where $c_t\ge 0$ is a bounded deterministic function representing the agent's personal consumption preference. Then, the wealth process $X_t$ evolves according to the SDE:
\begin{align}
	dX_t=\left(
	r+\theta \sqrt{V_t}u_t-c_t\sqrt{V_t}\right)X_t\,dt
	+u_tX_t\,\left( \rho \,d\widetilde W_t + \sqrt{1-\rho^2} \,dB_t  \right), \quad X_0=x_0>0.
	\label{wealth-SDE}
\end{align}

The agent aims to maximize their expected utility over the interval $[0,T]$, with the utility functional:
\begin{equation}\label{utlity_max_power_1}
	U(0,x_0)=\max_{u} \bE\bigg[ e^{-rT}X^{\gamma}_T+\int_0^T F(t,X_t,u_t)dt\bigg],\quad X_0=x_0,\; 0<\gamma<1,
\end{equation}
where the running utility function $F$ is given by:
\begin{align}
	F(t,X_t,u_t)&=e^{-r t}\bigg(a_t\sqrt{V_t}\Big(1-\pi_t\Big)X_t+\Big(c_t\sqrt{V_t}X_t\Big)^{\gamma}\bigg)\label{running_cost_1}\\
	&=e^{-r t}\bigg(a_t\Big(\sqrt{V_t}-u_t\Big)X_t+\Big(c_t\sqrt{V_t}X_t\Big)^{\gamma}\bigg)\label{running_cost}.
\end{align}
Here $a_t\ge 0$ rewards deviations from investment in the risky asset. When $a_t=0$, the problem reduces to the classical power utility maximization with consumption.

By It\^o-Doeblin formula, we obtain
\begin{equation}
	\bE \bigg[ X^{\gamma}_T\bigg]= x^{\gamma}_0+\bE \int_0^T \gamma X^{\gamma}_t \bigg(r+\theta \sqrt{V_t}u_t-c_t\sqrt{V_t}+\frac{\gamma-1}{2} u_t^2 \bigg)dt.
\end{equation}

Thus, the agent's utility maximization problem in (\ref{utlity_max_power_1}) is equivalent to the following optimization problem:
\begin{equation}
	\max_u \bE\bigg[ \int_0^T \bigg( e^{-rT}\gamma X^{\gamma}_t \Big(r+\theta \sqrt{V_t}u_t-c_t\sqrt{V_t}+\frac{\gamma-1}{2} u_t^2 \Big)+e^{-r t}\bigg(a_t\Big(\sqrt{V_t}-u_t\Big)X_t+\Big(c_t\sqrt{V_t}X_t\Big)^{\gamma}\bigg)dt\bigg].
\end{equation}
Let $u_t^*$ (or equivalently $\pi^*_t=\frac{u^*_t}{\sqrt{V_t}}$) denote the optimal control. Applying Pontryagin's maximum principle, we arrive at the following coupled FBSDE system for $t\in[0,T]$:

\begin{equation}\label{utlity_max_FBSDE}
	\begin{cases}
		dX_t &= (r + \theta \sqrt{V_t}u_t^*-c_t\sqrt{V_t})X_tdt + u_t^*X_t(\rho d\widetilde{W}_t + \sqrt{1-\rho^2}dB_t, \\
		X_0 &= x_0, \\
		-dY_t &= \Bigg(\big(r+ \theta \sqrt{V_t}u_t^*-c_t\sqrt{V_t}\big)Y_t + u_t^*\big(\rho \widetilde{Z}_t + \sqrt{1- \rho^2}Z_t\big)\\
		&\quad+ e^{-rT}\gamma^2 X_t^{\gamma-1}\Big(r_t + \theta \sqrt{V_t}u_t^*-c_t\sqrt{V_t}+ \frac{\gamma-1}{2} \abs{u_t^*}^2\Big)\\
		&\quad+e^{-r t}\Big(a_t(\sqrt{V_t}-u_t)+\Big(c_t\sqrt{V_t}\Big)^{\gamma}\gamma X_t^{\gamma-1}\Big)\Bigg)dt - Z_tdB_t - \widetilde{Z}_td\widetilde{W}_t,\\
		Y_T &= 0 ,\\
		u_t^* &= \frac{ \theta \sqrt{V_t}(X_t^{1-\gamma}Y_t+e^{-rT}\gamma)+X_t^{1-\gamma}\big(\sqrt{1- \rho^2}Z_t + \rho \widetilde{Z}_t\big)-e^{-r t}a_tX_t^{1-\gamma} }{\gamma(1-\gamma)e^{-rT}}.
	\end{cases}
\end{equation}

To solve for the optimal strategy, optimal wealth, and the optimal expected utility, we must numerically solve the coupled FBSDE in (\ref{utlity_max_FBSDE}). For this purpose, we set $c=0.02$ and use the same parameter values as in Example 1, namely, $T=1$, $r = 0.05$, $\rho=-0.9$, $H = 0.07$, $\eta=1.9$, and $\xi_t = 0.09$. We then employ Algorithm 1 to compute the optimal strategy $\pi^*_t$, the optimal wealth process $X_t$, and consequently the optimal expected utility.

To improve generalization and avoid overfitting, we reduce the complexity of our neural network by using only two hidden layers and applying $l_1$-regularization. Additionally, we incorporate gradient clipping to prevent exploding gradients. The model is trained over 30 epochs.

Figure \ref{fig:Ex2_loss} illustrates the average training and testing loss over epochs for $N=100$ with parameters $\gamma=0.8,\theta=0.3,a=0.025,x_0=1$, and $s_0=1$. The results indicate that the neural network performs consistently well on both training and testing data.

\begin{figure}[h]
	\centering
	\includegraphics[width=0.7\linewidth]{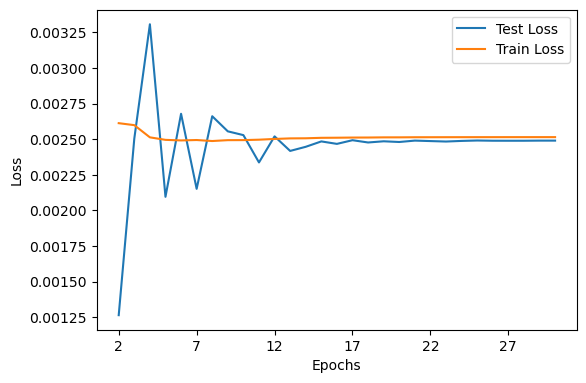}
	\caption{Train and Test Loss During Epochs ($\gamma=0.8, \theta=0.3,a=0.025$)}
	\label{fig:Ex2_loss}
\end{figure}

Figure \ref{fig:sample_paths_ex2N_gam_08_theta_03_a_0}-\ref{fig:sample_paths_ex2N_gam_08_theta_01_a_0025} presents sample paths of the processes $S_t,\pi_t^*$ and $X_t$ under different choices of parameters $\gamma,\theta$ and $a$, while all cases share the same rough volatility process $V_t$. Finally, Figure \ref{fig:consumption rates} displays the corresponding consumption rates for all four parameter configurations.

\begin{figure}
	\begin{subfigure}[b]{0.43\textwidth}
		\includegraphics[width=1.1\linewidth]{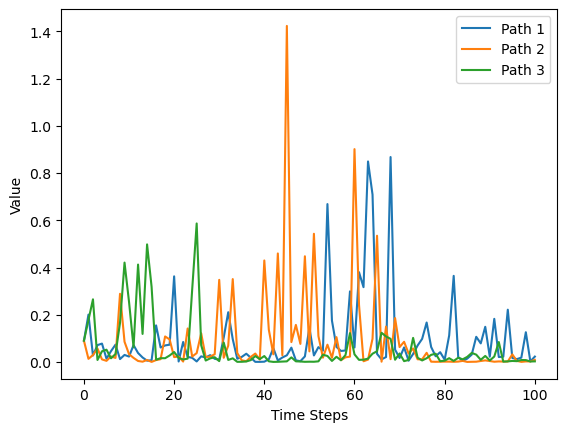}
		\caption{Rough Volatility Process $V_t$}
	\end{subfigure}
	\hfill
	\begin{subfigure}[b]{0.43\textwidth}
		\includegraphics[width=1.1\linewidth]{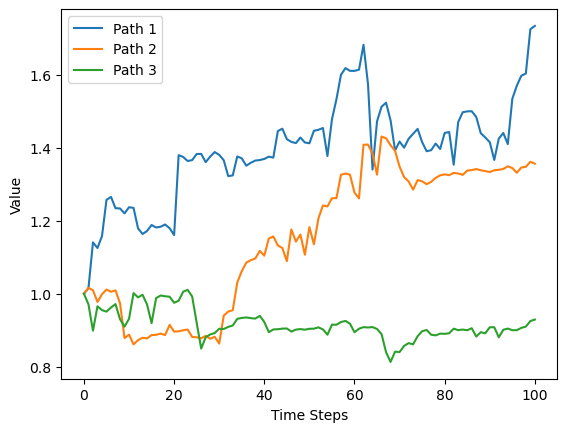}
		\caption{Stock Price Process $S_t$}
	\end{subfigure}
	\begin{subfigure}[b]{0.43\textwidth}
		\includegraphics[width=1.1\linewidth]{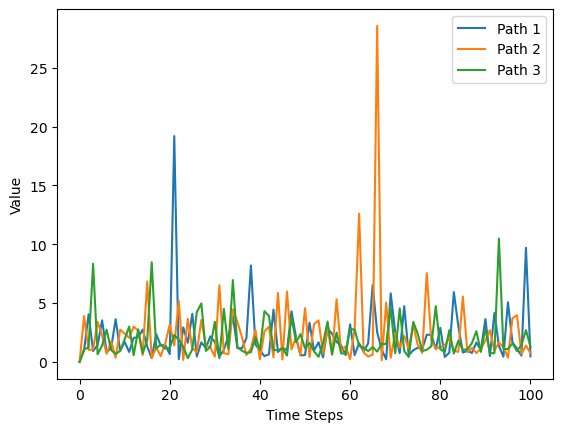}
		\caption{Optimal Strategy $\pi^*_t$}
	\end{subfigure}
	\hfill
	\begin{subfigure}[b]{0.43\textwidth}
		\includegraphics[width=1.1\linewidth]{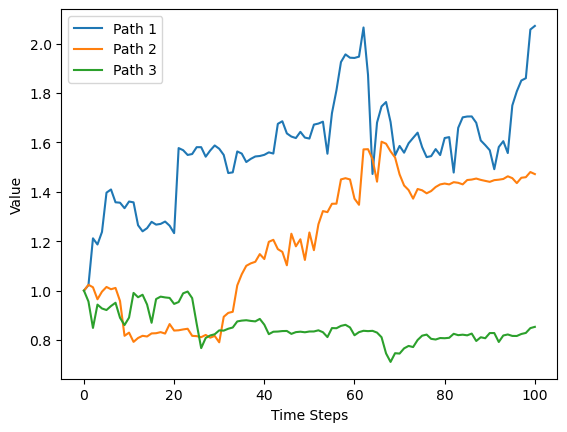}
		\caption{Optimal Wealth $X_t$}
	\end{subfigure}
	
	\caption{Sample Paths with $\gamma=0.8, \theta=0.3,a=0.0$}
	\label{fig:sample_paths_ex2N_gam_08_theta_03_a_0}
\end{figure}

\begin{figure}
	\begin{subfigure}[b]{0.43\textwidth}
		\includegraphics[width=1.1\linewidth]{ex2N_V_N_100_gam_08_theta_03_a_0025_c002_X0_1.png}
		\caption{Rough Volatility Process $V_t$}
	\end{subfigure}
	\hfill
	\begin{subfigure}[b]{0.43\textwidth}
		\includegraphics[width=1.1\linewidth]{ex2N_S_N_100_gam_08_theta_03_a_0025_c002_X0_1.png}
		\caption{Stock Price Process $S_t$}
	\end{subfigure}
	\begin{subfigure}[b]{0.43\textwidth}
		\includegraphics[width=1.1\linewidth]{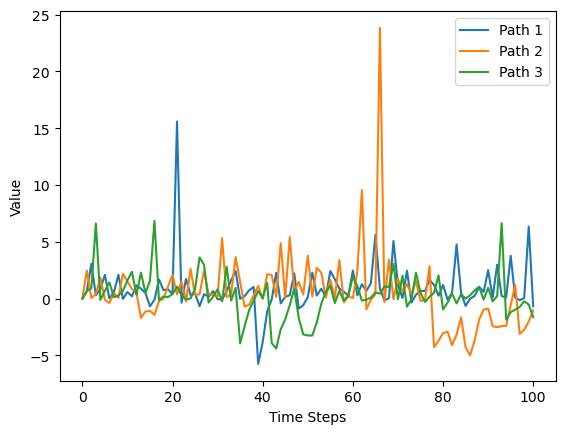}
		\caption{Optimal Strategy $\pi^*_t$}
	\end{subfigure}
	\hfill
	\begin{subfigure}[b]{0.43\textwidth}
		\includegraphics[width=1.1\linewidth]{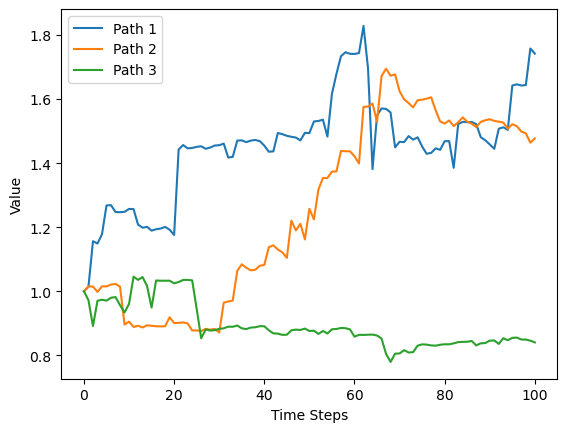}
		\caption{Optimal Wealth $X_t$}
	\end{subfigure}
	
	\caption{Sample Paths with $\gamma=0.8, \theta=0.3,a=0.025$}
	\label{fig:sample_paths_ex2N_gam_08_theta_03_a_0025}
\end{figure}

\begin{figure}
	\begin{subfigure}[b]{0.43\textwidth}
		\includegraphics[width=1.1\linewidth]{ex2N_V_N_100_gam_08_theta_03_a_0025_c002_X0_1.png}
		\caption{Rough Volatility Process $V_t$}
	\end{subfigure}
	\hfill
	\begin{subfigure}[b]{0.43\textwidth}
		\includegraphics[width=1.1\linewidth]{ex2N_S_N_100_gam_08_theta_03_a_0025_c002_X0_1.png}
		\caption{Stock Price Process $S_t$}
	\end{subfigure}
	\begin{subfigure}[b]{0.43\textwidth}
		\includegraphics[width=1.1\linewidth]{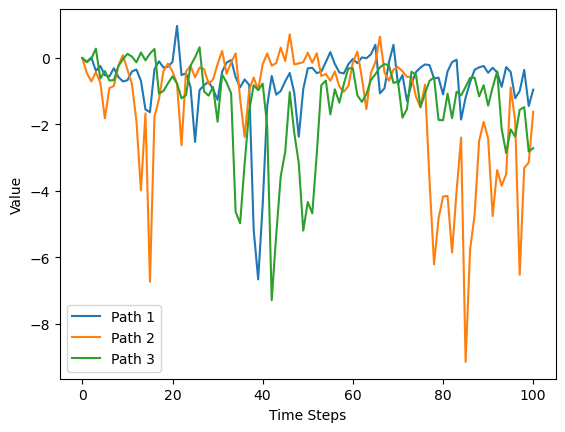}
		\caption{Optimal Strategy $\pi^*_t$}
	\end{subfigure}
	\hfill
	\begin{subfigure}[b]{0.43\textwidth}
		\includegraphics[width=1.1\linewidth]{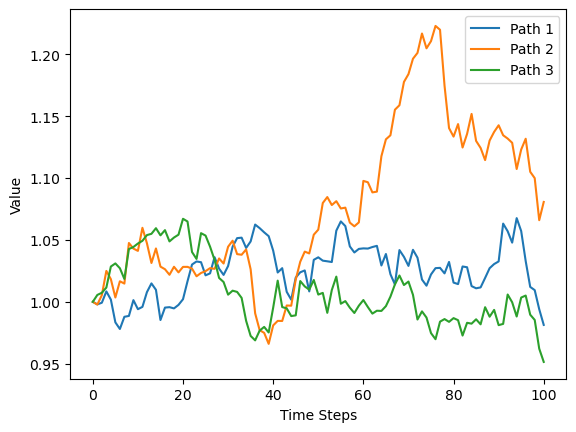}
		\caption{Optimal Wealth $X_t$}
	\end{subfigure}
	
	\caption{Sample Paths with $\gamma=0.2, \theta=0.3,a=0.025$}
		\label{fig:sample_paths_ex2N_gam_02_theta_03_a_0025}
\end{figure}

\begin{figure}
	\begin{subfigure}[b]{0.43\textwidth}
		\includegraphics[width=1.1\linewidth]{ex2N_V_N_100_gam_08_theta_03_a_0025_c002_X0_1.png}
		\caption{Rough Volatility Process $V_t$}
	\end{subfigure}
	\hfill
	\begin{subfigure}[b]{0.43\textwidth}
		\includegraphics[width=1.1\linewidth]{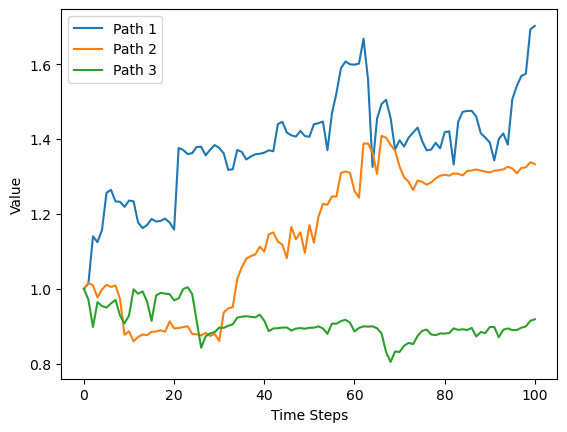}
		\caption{Stock Price Process $S_t$}
	\end{subfigure}
	\begin{subfigure}[b]{0.43\textwidth}
		\includegraphics[width=1.1\linewidth]{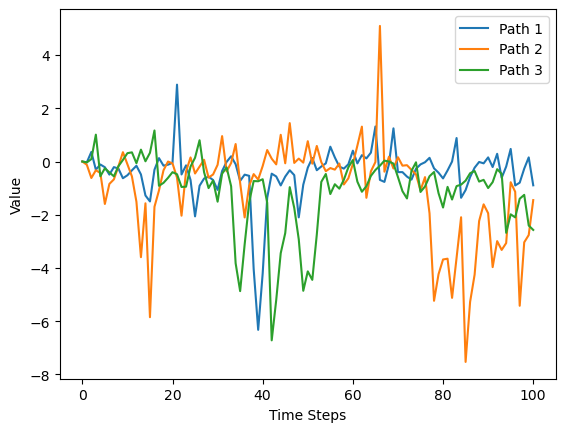}
		\caption{Optimal Strategy $\pi^*_t$}
	\end{subfigure}
	\hfill
	\begin{subfigure}[b]{0.43\textwidth}
		\includegraphics[width=1.1\linewidth]{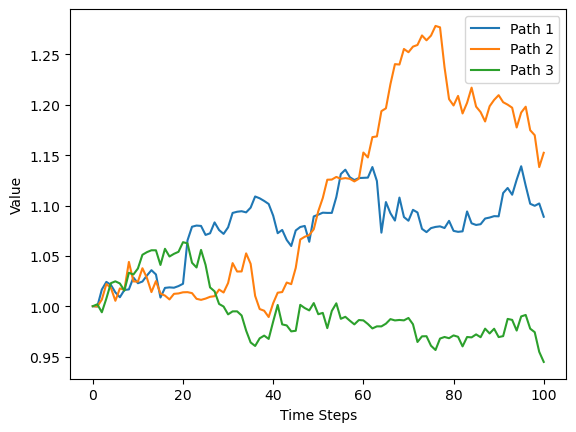}
		\caption{Optimal Wealth $X_t$}
	\end{subfigure}
	
	\caption{Sample Paths with $\gamma=0.8, \theta=0.1,a=0.025$}
		\label{fig:sample_paths_ex2N_gam_08_theta_01_a_0025}
\end{figure}


\begin{figure}
	\begin{subfigure}[b]{0.43\textwidth}
		\includegraphics[width=1.1\linewidth]{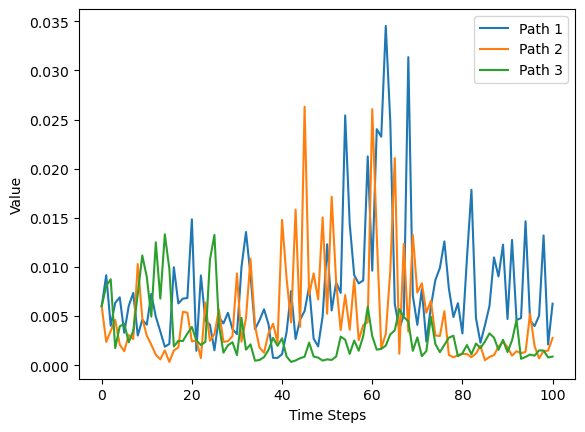}
		\caption{$\gamma=0.8, \theta=0.3,a=0.0$}
	\end{subfigure}
	\hfill
	\begin{subfigure}[b]{0.43\textwidth}
		\includegraphics[width=1.1\linewidth]{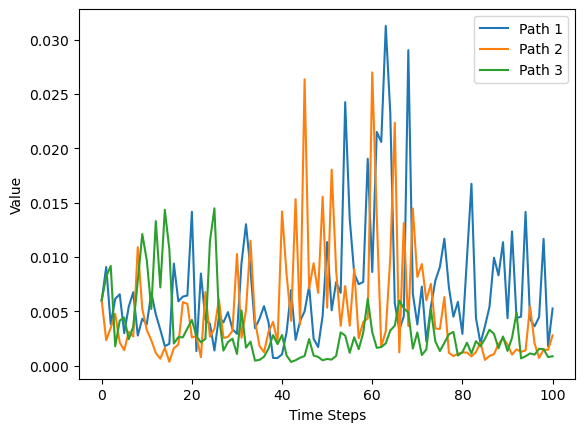}
		\caption{$\gamma=0.8, \theta=0.3,a=0.025$}
	\end{subfigure}
	\begin{subfigure}[b]{0.43\textwidth}
		\includegraphics[width=1.1\linewidth]{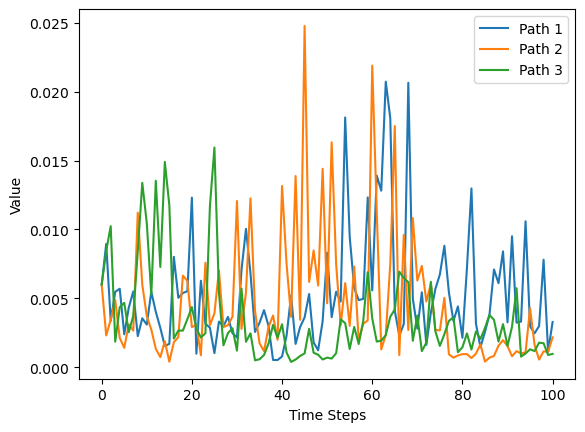}
		\caption{$\gamma=0.8, \theta=0.1,a=0.025$}
	\end{subfigure}
	\hfill
	\begin{subfigure}[b]{0.43\textwidth}
		\includegraphics[width=1.1\linewidth]{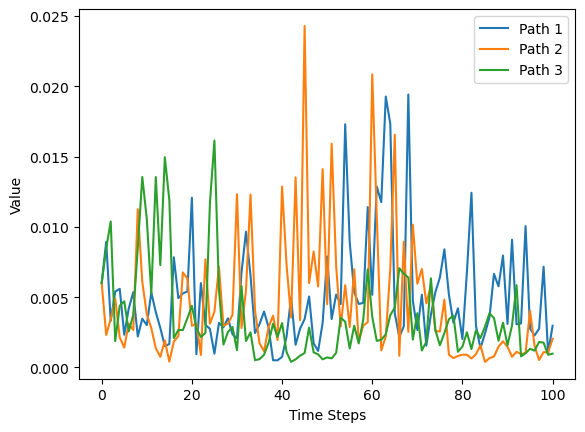}
		\caption{$\gamma=0.2, \theta=0.3,a=0.025$}
	\end{subfigure}
	
	\caption{Sample Paths of Consumption Rate}
	\label{fig:consumption rates}
\end{figure}

\section{Conclusion}
We developed a deep learning-based framework for fully coupled non-Markovian FBSDEs and established error bounds and convergence results for this class of equations. Our analysis extends the scope of Deep BSDE methods by allowing the coefficients of the forward process to be random and depending on all the backward components. The effectiveness of the scheme was demonstrated through utility maximization problems under rough volatility, illustrating its accuracy and applicability to complex, non-Markovian models.

\addcontentsline{toc}{section}{References}
\bibliography{2021summerref}
\bibliographystyle{alpha}

\appendix
\section{Proof of Lemma \ref{R3_fw_est}}
\begin{proof}
	For $\phi=b,\sigma$, define the differences:
	\begin{equation}
		\Delta\phi:=\phi(t_i,V_{t_i},X_{t_i}^1,Y_{t_i}^1,Z_{t_i}^1)-\phi(t_i,V_{t_i},X_{t_i}^2,Y_{t_i}^2,Z_{t_i}^2).
	\end{equation}
	Then the forward process difference satisfies:
	\begin{equation}
		\Delta X_{t_{i+1}}=\Delta X_{t_i}+\Delta bh+\int_{t_i}^{t_{i+1}}\Delta\alpha_t dt+\Delta\sigma\Delta W_{{t_i}}+\int_{t_i}^{t_{i+1}}\Delta\beta_t dW_t.
	\end{equation}
Rearranging terms and squaring on both sides yields that
	\begin{equation*}
	|\Delta X_{t_{i+1}}-\Delta X_{t_i}|^2=\bigg|\Delta bh+\Delta\sigma\Delta W_{t_i}+\int_{t_i}^{t_{i+1}}\Delta\alpha_t dt+\int_{t_i}^{t_{i+1}}\Delta\beta_t dW_t\bigg|^2.
\end{equation*}
Hence, expanding $|\Delta X_{t_{i+1}}|^2$:
	\begin{align*}
	&|\Delta X_{t_{i+1}}|^2 \\
	&=-|\Delta X_{t_i}|^2+2\Delta X_{t_i} \Delta X_{t_{i+1}}+\bigg|\Delta bh+\Delta\sigma\Delta W_{t_i}+\int_{t_i}^{t_{i+1}}\Delta\alpha_t dt+\int_{t_i}^{t_{i+1}}\Delta\beta_t dW_t\bigg|^2 \\
	&=-|\Delta X_{t_i}|^2+2\Delta X_{t_i} \bigg(\Delta X_{t_i}+\Delta bh+\Delta\sigma\Delta W_{{t_i}}+\int_{t_i}^{t_{i+1}}\Delta\alpha_t dt+\int_{t_i}^{t_{i+1}}\Delta\beta_t dW_t\bigg)\\
	& \quad +|\Delta b|^2 h^2 +|\Delta \sigma|^2 |\Delta W_{t_i}|^2 + \bigg| \int_{t_i}^{t_{i+1}}\Delta\alpha_t dt\bigg|^2 + \bigg| \int_{t_i}^{t_{i+1}}\Delta\beta_t dW_t\bigg|^2 \\
	& \quad +2h \Delta b\Delta \sigma \Delta W_{t_i}+ 2h \Delta b \int_{t_i}^{t_{i+1}}\Delta\alpha_t dt + 2h\Delta b \int_{t_i}^{t_{i+1}}\Delta\beta_t dW_t +2\Delta\sigma \Delta W_{t_i} \int_{t_i}^{t_{i+1}}\Delta\alpha_t dt\\
	& \quad+ 2 \Delta \sigma\Delta W_{t_i} \int_{t_i}^{t_{i+1}}\Delta\beta_t dW_t+ 2 \int_{t_i}^{t_{i+1}}\Delta\alpha_t dt \int_{t_i}^{t_{i+1}}\Delta\beta_t dW_t. \\
\end{align*}
Taking conditional expectation $\bE_{t_i}[\cdot]=\bE[\cdot|\cF_{t_i}]$ on both sides and using the fact that $X_{t_i}^l,Y_{t_i}^l,Z_{t_i}^l$ are $\mathcal{F}_{t_i}$-measurable and so are $\Delta b, \Delta\sigma$, gives

\begin{align*}
	&\bE_{t_i}|\Delta X_{t_{i+1}}|^2 \\
	 &\le|\Delta X_{t_i}|^2+2\Delta X_{t_i} \bigg(\Delta bh+\bE_{t_i}\int_{t_i}^{t_{i+1}}\Delta\alpha_t dt\bigg)\\
	 & \quad +|\Delta b|^2 h^2 +|\Delta \sigma|^2 h +h \bE_{t_i} \int_{t_i}^{t_{i+1}}|\Delta\alpha_t |^2dt + \bE_{t_i} \int_{t_i}^{t_{i+1}}|\Delta\beta_t|^2 dt \\
	 & \quad + 2h \Delta b \bE_{t_i}\int_{t_i}^{t_{i+1}}\Delta\alpha_t dt + 2\bE_{t_i}\bigg[\Delta\sigma \Delta W_{t_i} \int_{t_i}^{t_{i+1}}\Delta\alpha_t dt\bigg]\\
	 & \quad+ 2\bE_{t_i} \bigg[ \Delta \sigma\Delta W_{t_i} \int_{t_i}^{t_{i+1}}\Delta\beta_t dW_t\bigg]+ 2 \bE_{t_i}\bigg[\int_{t_i}^{t_{i+1}}\Delta\alpha_t dt \int_{t_i}^{t_{i+1}}\Delta\beta_t dW_t\bigg] \\
	&\le |\Delta X_{t_i}|^2+\frac{\lambda_1}{2} h |\Delta X_{t_i}|^2 +\frac{2h}{\lambda_1} |\Delta b|^2+\frac{\lambda_1}{2} h|\Delta X_{t_i}|^2 +  \frac{2}{\lambda_1 h}\bigg|\bE_{t_i}\int_{t_i}^{t_{i+1}}\Delta\alpha_t dt\bigg|^2\\
	& \quad +|\Delta b|^2 h^2 +|\Delta \sigma|^2h + h \bE_{t_i} \int_{t_i}^{t_{i+1}}|\Delta\alpha_t |^2dt + \bE_{t_i} \int_{t_i}^{t_{i+1}}|\Delta\beta_t|^2 dt \\
	& \quad + \lambda_2 h^2 |\Delta b|^2 +\frac{h}{\lambda_2} \bE_{t_i}\int_{t_i}^{t_{i+1}}|\Delta\alpha_t|^2 dt +\frac{\lambda_1}{2}\bE_{t_i} |\Delta \sigma \Delta W_{t_i}|^2 + \frac{2h}{\lambda_1}\bE_{t_i} \int_{t_i}^{t_{i+1}}|\Delta\alpha_t|^2 dt\\
	& \quad + \frac{\lambda_1 }{2}\bE_{t_i}|\Delta \sigma \Delta W_{t_i}|^2+ \frac{2}{\lambda_1} \bE_{t_i}\int_{t_i}^{t_{i+1}}|\Delta\beta_t|^2 dt + \lambda_2h\bE_{t_i} \int_{t_i}^{t_{i+1}}|\Delta\alpha_t|^2 dt +\frac{1}{\lambda_2} \bE_{t_i}\int_{t_i}^{t_{i+1}}|\Delta\beta_t|^2 dt\\
	&\le(1+\lambda_1h) |\Delta X_{t_i}|^2 +\bigg( \frac{2h}{\lambda_1}+h^2+\lambda_2h^2\bigg) |\Delta b|^2+(h+\lambda_1 h) |\Delta \sigma|^2\\
	& \quad   +\bigg(\frac{2}{\lambda_1 }+h+\frac{h}{\lambda_2}+\frac{2h}{\lambda_1}+\lambda_2 h\bigg)\bE_{t_i}\int_{t_i}^{t_{i+1}}|\Delta\alpha_t|^2 dt + \bigg( 1+\frac{2}{\lambda_1}+\frac{1}{\lambda_2}\bigg)\bE_{t_i}  \int_{t_i}^{t_{i+1}}|\Delta\beta_t|^2 dt \\
	&\le(1+\lambda_1h) |\Delta X_{t_i}|^2 +\bigg( \frac{2h}{\lambda_1}+h^2+\lambda_2h^2\bigg) \big(K|\Delta X_{t_i}|^2+b_y|\Delta Y_{t_i}|^2+b_z|\Delta Z_{t_i}|^2\big)\\
	&\quad +(h+\lambda_1 h) \bigg(\sigma_x|\Delta X_{t_i}|^2+\sigma_y|\Delta Y_{t_i}|^2+\sigma_z|\Delta Z_{t_i}|^2\bigg)\\
	& \quad   +\bigg(\frac{2}{\lambda_1 }+h+\frac{h}{\lambda_2}+\frac{2h}{\lambda_1}+\lambda_2 h\bigg)\bE_{t_i}\int_{t_i}^{t_{i+1}}|\Delta\alpha_t|^2 dt + \bigg( 1+\frac{2}{\lambda_1}+\frac{1}{\lambda_2}\bigg)\bE_{t_i}  \int_{t_i}^{t_{i+1}}|\Delta\beta_t|^2 dt \\
	&\le(1+\lambda_1h) |\Delta X_{t_i}|^2 +\bigg( \frac{2h}{\lambda_1}+h^2+\lambda_2h^2\bigg) \big(K|\Delta X_{t_i}|^2+b_y|\Delta Y_{t_i}|^2+b_z|\Delta Z_{t_i}|^2\big)\\
	&\quad +(h+\lambda_1 h) \bigg(\sigma_x|\Delta X_{t_i}|^2+\sigma_y|\Delta Y_{t_i}|^2+\sigma_z|\Delta Z_{t_i}|^2\bigg)  \\
	& \quad+\bigg(1+\frac{4}{\lambda_1 }+\frac{1}{\lambda_2}+\lambda_2 \bigg)\bE_{t_i}\int_{t_i}^{t_{i+1}}\bigg(|\Delta\alpha_t|^2 +|\Delta\beta_t|^2 \bigg) dt,
\end{align*}

	for $\lambda_1, \lambda_2>0$, which completes the proof.
\end{proof}

\section{Proof of Lemma \ref{R4_bw_est}}
\begin{proof}
	 We define the difference
	\begin{equation}
		\Delta f:=f(t_i,V_{t_i},X_{t_i}^1,Y_{t_i}^1,\hat{Z}_{t_i}^1)-f(t_i,V_{t_i},X_{t_i}^2,Y_{t_i}^2,\hat{Z}_{t_i}^2).
	\end{equation}
	Then the backward process difference satisfies
	\begin{equation}
		\Delta Y_{t_i}+\int_{t_i}^{t_{i+1}}\Delta Z_tdW_t=\Delta Y_{t_{i+1}}+\Delta fh+\int_{t_i}^{t_{i+1}}\Delta \gamma_t dt.
	\end{equation}
	Squaring both sides yields
	\begin{align*}
	&	|\Delta Y_{t_i}|^2+\bigg|\int_{t_i}^{t_{i+1}}\Delta Z_tdW_t\bigg|^2+2\Delta Y_{t_i}\cdot \int_{t_i}^{t_{i+1}}\Delta Z_tdW_t\\
		&=\big|\Delta Y_{t_{i+1}}+\Delta fh\big|^2 +2(\Delta Y_{t_{i+1}}+\Delta fh)\int_{t_i}^{t_{i+1}}\Delta \gamma_t dt+\bigg|\int_{t_i}^{t_{i+1}}\Delta \gamma_t dt\bigg|^2\\
		& \leq (1+\lambda_4h)\big|\Delta Y_{t_{i+1}}+\Delta fh\big|^2+\left(1+\frac{1}{\lambda_4 h}\right)\bigg|\int_{t_i}^{t_{i+1}}\Delta \gamma_t dt\bigg|^2.
	\end{align*}
	Then, taking conditional expectations on both sides, we obtain
	\begin{align}\label{ref2}
	&	|\Delta Y_{t_i}|^2+\bE_{t_i}\int_{t_i}^{t_{i+1}}|\Delta Z_t|^2dt\nonumber\\
		& \leq (1+\lambda_4h)\bE_{t_i}\bigg[|\Delta Y_{t_{i+1}}|^2+|\Delta fh|^2+2\Delta Y_{t_{i+1}}\Delta fh\bigg]+\bigg(1+\frac{1}{\lambda_4 h}\bigg)\bE_{t_i}\bigg|\int_{t_i}^{t_{i+1}}\Delta \gamma_t dt\bigg|^2.
	\end{align}
	Note that,
	\begin{align*}
		\bE_{t_i}\int_{t_i}^{t_{i+1}}|\Delta Z_t|^2dt&\geq \frac{1}{h} \bigg|\bE_{t_i}\int_{t_i}^{t_{i+1}}\Delta Z_t dt\bigg|^2\\
		&=\frac{1}{h}\bigg|\bE_{t_i}\bigg[\bigg(\Delta Y_{t_{i+1}}-\Delta Y_{t_i}+\Delta fh+\int_{t_i}^{t_{i+1}}\Delta\gamma_tdt \bigg)\Delta W_{t_i}\bigg] \bigg|^2\\
		&=\frac{1}{h}\bigg|h\Delta \hat{Z}_{t_i}+\bE_{t_i}\bigg[\Delta W_{t_i}\int_{t_i}^{t_{i+1}}\Delta\gamma_tdt \bigg]\bigg|^2\\
		&\geq \frac{1}{h}\bigg(h^2|\Delta\hat{Z}_{t_i}|^2-2h\bigg|\bE_{t_i}\bigg[\Delta\hat{Z}_{t_i}\Delta W_{t_i} \int_{t_i}^{t_{i+1}}\Delta\gamma_tdt\bigg]\bigg|\bigg)\\
		&=h|\Delta\hat{Z}_{t_i}|^2-2\bigg|\bE_{t_i}\bigg[\Delta\hat{Z}_{t_i}\Delta W_{t_i} \int_{t_i}^{t_{i+1}}\Delta\gamma_tdt\bigg]\bigg|\\
		&\geq (1-\lambda_3)h|\Delta\hat{Z}_{t_i}|^2-\frac{1}{\lambda_3}\bE_{t_i}\bigg|\int_{t_i}^{t_{i+1}}\Delta\gamma_tdt\bigg|^2\\
		&\geq (1-\lambda_3)h|\Delta\hat{Z}_{t_i}|^2-\frac{1}{\lambda_3}\bE_{t_i}\bigg|\int_{t_i}^{t_{i+1}}\Delta\gamma_tdt\bigg|^2,\\
	\end{align*}
and 
\begin{align*}
	2h\Delta Y_{t_{i+1}}\Delta f&=2h\Delta Y_{t_{i+1}}\big|f(t_i,V_{t_i},X_{t_i}^1,Y_{t_i}^1,\hat{Z}_{t_i}^1)-f(t_i,V_{t_i},X_{t_i}^2,Y_{t_i}^2,\hat{Z}_{t_i}^2)\big|\\
	&\leq \frac{h}{\lambda_5}\big|\Delta Y_{t_{i+1}}\big|^2+ h\lambda_5\big|f(t_i,V_{t_i},X_{t_i}^1,Y_{t_i}^1,\hat{Z}_{t_i}^1)-f(t_i,V_{t_i},X_{t_i}^2,Y_{t_i}^2,\hat{Z}_{t_i}^2)\big|^2\\
	&\leq\frac{h}{\lambda_5}|\Delta Y_{t_{i+1}}|^2+\lambda_5h\bigg(f_x|\Delta X_{t_i}|^2+K|\Delta Y_{t_i}|^2+f_z|\Delta \hat{Z}_{t_i}|^2\bigg).
\end{align*}
	Using above two estimates in (\ref{ref2}), we finally get
	\begin{align}\label{ref3}
		|\Delta Y_{t_i}|^2+(1-\lambda_3)h|\Delta \hat{Z}_{t_i}|^2 &\leq (1+\lambda_4h)\bE_{t_i}\bigg[|\Delta Y_{t_{i+1}}|^2+|\Delta fh|^2+2\Delta Y_{t_{i+1}}\Delta fh\bigg]\nonumber\\
		&+\left(1+\frac{1}{\lambda_4h }\right)\bE_{t_i}\bigg|\int_{t_i}^{t_{i+1}}\Delta \gamma_t dt\bigg|^2+\frac{1}{\lambda_3}\bE_{t_i}\bigg|\int_{t_i}^{t_{i+1}}\Delta\gamma_tdt\bigg|^2\nonumber\\
 &\leq (1+\lambda_4h)\bE_{t_i}\bigg[|\Delta Y_{t_{i+1}}|^2+h^2\big(f_x|\Delta X_{t_i}|^2+K|\Delta Y_{t_i}|^2+f_z|\Delta \hat{Z}_{t_i}|^2\big)\nonumber\\
		&+\frac{h}{\lambda_5}\bE_{t_i}|\Delta Y_{t_{i+1}}|^2+\lambda_5h\bigg(f_x|\Delta X_{t_i}|^2+K|\Delta Y_{t_i}|^2+f_z|\Delta \hat{Z}_{t_i}|^2\bigg)\bigg]\nonumber\\
		&+\left(1+\frac{1}{\lambda_4h }+\frac{1}{\lambda_3}\right)\bE_{t_i}\bigg|\int_{t_i}^{t_{i+1}}\Delta \gamma_t dt\bigg|^2,
	\end{align}
	which completes the proof.
\end{proof}

\section{Proof of Theorem \ref{Rmain_convergence}}
 \begin{proof}
 	\textbf{Step 1: }From the forward equation of (\ref{maineq}) we can have
 	\begin{align}\label{fw_c}
 		X_{t_{i+1}}&=X_{t_i}+b(t_i,V_{t_i},X_{t_i},Y_{t_i},\hat{Z}_{t_i})h+\int_{t_i}^{t_{i+1}}\Big(b(t,V_t,X_t,Y_t,Z_t)-b(t_i,V_{t_i},X_{t_i},Y_{t_i},\hat{Z}_{t_i})\Big)dt\nonumber\\
 		&+ \sigma(t_i,V_{t_i},X_{t_i},Y_{t_i},\hat{Z}_{t_i})\Delta W_{t_i}+\int_{t_i}^{t_{i+1}}\Big(\sigma(t,V_t,X_t,Y_t,Z_t)-\sigma(t_i,V_{t_i},X_{t_i},Y_{t_i},\hat{Z}_{t_i})\Big)dW_t,
 	\end{align}
 	with $\hat{Z}_{t_i}=\frac{1}{h}\bE_{t_i}\big[ Y_{t_{i+1}}\Delta W_{t_i}\big]$.
 	And from (\ref{discrete_eq}) we have,
 	\begin{align}\label{fw_d}
 		\overline{X}_{t_{i+1}}^{\pi}=\overline{X}_{t_{i}}^{\pi}+b\big(t_i,V_{t_i}\overline{X}_{t_{i}}^{\pi},\overline{Y}_{t_{i}}^{\pi},\overline{Z}_{t_{i}}^{\pi}\big)h+\sigma\big(t_i,V_{t_i}\overline{X}_{t_{i}}^{\pi},\overline{Y}_{t_{i}}^{\pi},\overline{Z}_{t_{i}}^{\pi}\big)\Delta W_{t_{i}}.
 	\end{align}
 	Now applying Lemma \ref{R3_fw_est} on $X$ and $\overline{X}^{\pi}$ from (\ref{fw_c}) and (\ref{fw_d}) respectively, we have
 	\begin{align}\label{fw_err_1}
 		&\bE\big[|X_{t_{i+1}}-\overline{X}_{t_{i+1}}^{\pi}|^2\big]\nonumber\\
 		&\leq (1+A_1h)\bE|X_{t_{i}}-\overline{X}_{t_{i}}^{\pi}|^2+A_2h\bE|Y_{t_{i}}-\overline{Y}_{t_{i}}^{\pi}|^2+A_3h\bE|\hat{Z}_{t_{i}}-\overline{Z}_{t_{i}}^{\pi}|^2\nonumber\\
 		&+C\bE\int_{t_i}^{t_{i+1}}\big|b(t,V_t,X_t,Y_t,Z_t)-b(t_i,V_{t_i},X_{t_i},Y_{t_i},\hat{Z}_{t_i})\big|^2dt   \nonumber\\
 		&+C\bE\int_{t_i}^{t_{i+1}}\big|\sigma(t,V_t,X_t,Y_t,Z_t)-\sigma(t_i,V_{t_i},X_{t_i},Y_{t_i},\hat{Z}_{t_i})\big|^2dt\nonumber\\
 		&\leq (1+A_1h)\bE|X_{t_{i}}-\overline{X}_{t_{i}}^{\pi}|^2+A_2h\bE|Y_{t_{i}}-\overline{Y}_{t_{i}}^{\pi}|^2+A_3h\bE|\hat{Z}_{t_{i}}-\overline{Z}_{t_{i}}^{\pi}|^2\nonumber\\
 		&+C\bE\int_{t_i}^{t_{i+1}}\bigg(\rho(h)+|X_t-X_{t_i}|^2+|Y_t-Y_{t_i}|^2+|Z_t-\hat{Z}_{t_i}|^2\bigg)dt.
 	\end{align}
 	Now since,
 	\begin{align*}
 		&h^2\bE|\tilde{Z}_{t_i}-\hat{Z}_{t_i}|^2\\
 		&=\bE\bigg[\bigg|\bE_{t_i}\bigg[\int_{t_i}^{t_{i+1}}Z_tdt\bigg]-\bE_{t_i}\bigg[\bigg(Y_{t_i}-\int_{t_i}^{t_{i+1}}f(t,V_t,X_t,Y_t,Z_t)dt+\int_{t_i}^{t_{i+1}}Z_tdW_t\bigg)\Delta W_{t_i}\bigg]\bigg|^2\bigg]\\
 		&=\bE\bigg[\bigg|\bE_{t_i}\bigg[\bigg(\int_{t_i}^{t_{i+1}}f(t,V_t,X_t,Y_t,Z_t)dt\bigg)\Delta W_{t_i}\bigg]\bigg|^2\bigg]\\
 		&\leq\bE\left[\bE_{t_i}\bigg|\int_{t_i}^{t_{i+1}}f(t,V_t,X_t,Y_t,Z_t)dt\bigg|^2\cdot\bE_{t_i}\big|\Delta W_{t_i}\big|^2\right]\\
 		&\leq h^2\bE\int_{t_i}^{t_{i+1}}\bigg|f(t,V_t,X_t,Y_t,Z_t)\bigg|^2dt\\
 		&\leq Ch^2\bE\bigg[\int_{t_i}^{t_{i+1}}\Big(|f(t,V_t,0,0,0)|^2+f_x|X_t|^2+K|Y_t|^2+f_z|Z_t|^2\Big)dt\bigg]\\
 		&\leq C\big(I_{0}^2+\bE|x_0|^2\big)h^3 +Ch^2\bE\left[\int_{t_i}^{t_{i+1}}|Z_s|^2ds+\int_{t_i}^{t_{i+1}}|f|^2(s,V_s,0,0,0)ds\right],
 	\end{align*}
 	we can deduce that,
 	\begin{align}\label{est_Zhat}
 		\bE&\int_{t_i}^{t_{i+1}}|Z_t-\hat{Z}_{t_i}|^2dt\leq 2\bE\int_{t_i}^{t_{i+1}}\Big(|Z_t-\tilde{Z}_{t_i}|^2+|\tilde{Z}_{t_i}-\hat{Z}_{t_i}|^2\Big)dt\nonumber\\
 		& =2\bE\int_{t_i}^{t_{i+1}}|Z_t-\tilde{Z}_{t_i}|^2dt +h\bE|\tilde{Z}_{t_i}-\hat{Z}_{t_i}|^2\nonumber\\
 		& \leq C\bE\int_{t_i}^{t_{i+1}}|Z_t-\tilde{Z}_{t_i}|^2dt+\frac{C}{h}\left(\big(I_{0}^2+\bE|x_0|^2\big)h^3 +h^2\bE\bigg[\int_{t_i}^{t_{i+1}}|Z_t|^2dt+\int_{t_i}^{t_{i+1}}|f|^2(t,V_t,0,0,0)dt\bigg]\right)\nonumber\\
 		&\leq C\big(I_{0}^2+\bE|x_0|^2\big)h^2 +C\bE\int_{t_i}^{t_{i+1}}|Z_t-\tilde{Z}_{t_i}|^2dt+Ch\bE\bigg[\int_{t_i}^{t_{i+1}}|Z_t|^2dt+\int_{t_i}^{t_{i+1}}|f|^2(t,V_t,0,0,0)dt\bigg].
 	\end{align}
 	Then, using estimate from Theorem \ref{stability_FBSDE} and equation (\ref{est_Zhat}) into (\ref{fw_err_1}) we have,
 	\begin{align}
 		&\bE\Big[|X_{t_{i+1}}-\overline{X}_{t_{i+1}}^{\pi}|^2\Big]\nonumber\\
 		&\leq \hspace{2mm}(1+A_1h)\bE|X_{t_{i}}-\overline{X}_{t_{i}}^{\pi}|^2+A_2h\bE|Y_{t_{i}}-\overline{Y}_{t_{i}}^{\pi}|^2+A_3h\bE|\hat{Z}_{t_{i}}-\overline{Z}_{t_{i}}^{\pi}|^2\nonumber\\
 		& + C\Big(\big(I_{0}^2+\bE|x_0|^2\big)h+\rho(h)\Big)h+C\bE\int_{t_i}^{t_{i+1}}|Z_t-\tilde{Z}_{t_i}|^2dt\nonumber\\
        &+Ch\bE\bigg[\int_{t_i}^{t_{i+1}}|Z_t|^2dt+\int_{t_i}^{t_{i+1}}\big(|f|^2+|\sigma|^2\big)(t,V_t,0,0,0)dt\bigg].
 	\end{align}
 	Finally, by induction we obtain that, for $1\leq n\leq N$,
 	\begin{align}\label{fw_err}
 		&\bE|X_{t_{n}}-\overline{X}_{t_{n}}^{\pi}|^2\nonumber\\
 		&\leq \hspace{2mm}\sum_{i=0}^{n-1}e^{A_1h(n-i-1)}A_2h\bE|Y_{t_{i}}-\overline{Y}_{t_{i}}^{\pi}|^2+\sum_{i=0}^{n-1}e^{A_1h(n-i-1)}A_3h\bE|\hat{Z}_{t_{i}}-\overline{Z}_{t_{i}}^{\pi}|^2\nonumber\\
 		& + C\sum_{k=0}^{n-1}e^{A_1h(n-k-1)}\bigg(\Big(\big(I_{0}^2+\bE|x_0|^2\big)h+\rho(h)\Big)h+\sup_{0\leq i\leq n}\bE\int_{t_i}^{t_{i+1}}|Z_t-\tilde{Z}_{t_i}|^2dt\nonumber\\
        &+h\sup_{0\leq i\leq n}\bE\bigg[\int_{t_i}^{t_{i+1}}|Z_t|^2dt+\int_{t_i}^{t_{i+1}}\big(|f|^2+|\sigma|^2\big)(t,V_t,0,0,0)dt\bigg]\bigg).
 	\end{align}
 	\textbf{Step 2: }From the backward equation of (\ref{maineq}) we can have
 	\begin{align}\label{bw_c}
 		Y_{t_{i}}=&Y_{t_{i+1}}+f(t_i,V_{t_i},X_{t_i},Y_{t_i},\hat{Z}_{t_i})h-\int_{t_i}^{t_{i+1}}Z_tdW_t\nonumber\\
 		& +\int_{t_i}^{t_{i+1}}\Big(f(t,V_t,X_t,Y_t,Z_t)-f(t_i,V_{t_i},X_{t_i},Y_{t_i},\hat{Z}_{t_i})\Big)dt,
 	\end{align}
 	with $\hat{Z}_{t_i}=\frac{1}{h}\bE_{t_i}\big[ Y_{t_{i+1}}\Delta W_{t_i}\big]$.
 	And from (\ref{discrete_eq}) we have,
 	\begin{align*}
 		\overline{Y}_{t_{i}}^{\pi}=\bE\bigg[\overline{Y}_{t_{i+1}}^{\pi}+f\big(t_i,V_{t_i},\overline{X}_{t_{i}}^{\pi},\overline{Y}_{t_{i}}^{\pi},\overline{Z}_{t_{i}}^{\pi}\big)h\big|\mathcal{F}_{t_i}\bigg].
 	\end{align*}
 	By the martingale representation theorem there exists an $\mathcal{F}_t$-adapted square-integrable process $\{\Bar{Z}_t\}_{t_i\leq t\leq t_{i+1}}$ such that 
 	\begin{align}\label{bw_d}
 		\overline{Y}_{t_{i+1}}^{\pi}&=\bE_{t_i}[\overline{Y}_{t_{i+1}}^{\pi}]+\int_{t_i}^{t_{i+1}}\Bar{Z}_tdW_t=\overline{Y}_{t_{i}}^{\pi}-f\big(t_i,V_{t_i},\overline{X}_{t_{i}}^{\pi},\overline{Y}_{t_{i}}^{\pi},\overline{Z}_{t_{i}}^{\pi}\big)h+\int_{t_i}^{t_{i+1}}\Bar{Z}_tdW_t,\nonumber\\
 		\implies \overline{Y}_{t_i}^{\pi}&=\overline{Y}_{t_{i+1}}^{\pi}+f\big(t_i,V_{t_i},\overline{X}_{t_{i}}^{\pi},\overline{Y}_{t_{i}}^{\pi},\overline{Z}_{t_{i}}^{\pi}\big)h-\int_{t_i}^{t_{i+1}}\Bar{Z}_tdW_t.
 	\end{align}
 	Now applying Lemma \ref{R4_bw_est} on $Y$ and $\overline{Y}^{\pi}$ from (\ref{bw_c}) and (\ref{bw_d}) respectively, we have
 	\begin{align}
 	&	(1-A_6h)\bE|Y_{t_i}-\overline{Y}_{t_i}^{\pi}|^2+A_7h\bE| \hat{Z}_{t_i}-\overline{Z}_{t_i}^{\pi}|^2\nonumber\\
 		&\leq  e^{A_4h}\bE| Y_{t_{i+1}}-\overline{Y}_{t_{i+1}}^{\pi}|^2+A_5h\bE| X_{t_i}-\overline{X}_{t_i}^{\pi}|^2\nonumber\\
 		&+\left(1+\frac{1}{\lambda_4h}+\frac{1}{\lambda_3}\right)\bE\bigg|\int_{t_i}^{t_{i+1}}\Big(f(t,V_t,X_t,Y_t,Z_t)-f(t_i,V_{t_i},X_{t_i},Y_{t_i},\hat{Z}_{t_i})\Big)dt\bigg|^2\nonumber\\
 &\leq  e^{A_4h}\bE| Y_{t_{i+1}}-\overline{Y}_{t_{i+1}}^{\pi}|^2+A_5h\bE| X_{t_i}-\overline{X}_{t_i}^{\pi}|^2\nonumber\\
 &+\left(1+\frac{1}{\lambda_4h}+\frac{1}{\lambda_3}\right)h\bE\int_{t_i}^{t_{i+1}}\big|f(t,V_t,X_t,Y_t,Z_t)-f(t_i,V_{t_i},X_{t_i},Y_{t_i},\hat{Z}_{t_i})\big|^2dt\nonumber\\
 &\leq  e^{A_4h}\bE| Y_{t_{i+1}}-\overline{Y}_{t_{i+1}}^{\pi}|^2+A_5h\bE| X_{t_i}-\overline{X}_{t_i}^{\pi}|^2\nonumber\\
 &+C\bE\int_{t_i}^{t_{i+1}}\Big(\rho(h)+|X_t-X_{t_i}|^2+|Y_t-Y_{t_i}|^2+|Z_t-\hat{Z}_{t_i}|^2\Big)dt\nonumber\\
 &\leq  e^{A_4h}\bE| Y_{t_{i+1}}-\overline{Y}_{t_{i+1}}^{\pi}|^2+A_5h\bE| X_{t_i}-\overline{X}_{t_i}^{\pi}|^2+C\Big(\big(I_{0}^2+\bE|x_0|^2\big)h+\rho(h)\Big)h\nonumber\\
&+C\bE\int_{t_i}^{t_{i+1}}|Z_t-\hat{Z}_{t_i}|^2dt+Ch\bE\bigg[\int_{t_i}^{t_{i+1}}|Z_t|^2dt+\int_{t_i}^{t_{i+1}}|\sigma|^2(t,V_t,0,0,0)dt\bigg].
	\end{align}
 	Now using estimate from (\ref{est_Zhat}) we further have
 	\begin{align}
 		&(1-A_6h)\bE|Y_{t_i}-\overline{Y}_{t_i}^{\pi}|^2 +A_7h\bE| \hat{Z}_{t_i}-\overline{Z}_{t_i}^{\pi}|^2\nonumber\\
 		&\leq  e^{A_4h}\bE| Y_{t_{i+1}}-\overline{Y}_{t_{i+1}}^{\pi}|^2+A_5h\bE| X_{t_i}-\overline{X}_{t_i}^{\pi}|^2+C\Big(\big(I_{0}^2+\bE|x_0|^2\big)h+\rho(h)\Big)h\nonumber\\
 		&+C\bE\int_{t_i}^{t_{i+1}}|Z_t-\tilde{Z}_{t_i}|^2dt+Ch\bE\bigg[\int_{t_i}^{t_{i+1}}|Z_t|^2dt+\int_{t_i}^{t_{i+1}}\big(|f|^2+|\sigma|^2\big)(t,V_t,0,0,0)dt\bigg].
 	\end{align}
 	Then, by induction we obtain that, for $0\leq n\leq N-1$,
 	\begin{align}\label{bw_err}
 		&\bE|Y_{t_n}-\overline{Y}_{t_n}^{\pi}|^2 +\sum_{i=n}^{N-1}e^{A_4(i-n)h}A_7h(1-A_6h)^{-(i+1)}\bE| \hat{Z}_{t_i}-\overline{Z}_{t_i}^{\pi}|^2\nonumber\\
 		&\leq  e^{A_4(N-n)h}(1-A_6h)^{-N}\bE| Y_{t_N}-\overline{Y}_{t_{N}}^{\pi}|^2+\sum_{i=n}^{N-1}e^{A_4(i-n)h}A_5h(1-A_6h)^{-(i+1)}\bE| X_{t_i}-\overline{X}_{t_i}^{\pi}|^2\nonumber\\
 		&+C\sum_{i=n}^{N-1}e^{A_4(i-n)h}(1-A_6h)^{-(i+1)}hM_n^{N-1},
 	\end{align}
 	where,
 	\begin{align}
 		M_n^{N-1}=&\big(I_{0}^2+\bE|x_0|^2\big)h+\rho(h)+\frac{1}{h}\sup_{n\leq k\leq N-1}\bE\int_{t_k}^{t_{k+1}}|Z_t-\tilde{Z}_{t_k}|^2dt\nonumber\\
        &+\sup_{n\leq k\leq N-1}\bE\bigg[\int_{t_k}^{t_{k+1}}|Z_t|^2dt+\int_{t_k}^{t_{k+1}}\big(|f|^2+|\sigma|^2\big)(t,V_t,0,0,0)dt\bigg].
 	\end{align}
 	\textbf{Step 3: } Note that $\bE| Y_{t_N}-\overline{Y}_{t_{N}}^{\pi}|^2 = \bE|g(V_T,X_{T})-g(V_T,\overline{X}^{\pi}_{T})|^2\leq g_x|X_T-\overline{X}_T^{\pi}|^2$.\\
 	Now first denote
 	\begin{equation*}
 		\Delta X:=X-\overline{X}^{\pi},\hspace{2mm}\Delta Y:=Y-\overline{Y}^{\pi},\hspace{2mm}\Delta Z:=\hat{Z}-\overline{Z}^{\pi},
 	\end{equation*}
 	and then
 	\begin{align*}
 		P&:=\max_{0\leq n\leq N} e^{-A_1nh}\hspace{2mm} \bE|\Delta X_{t_n}|^2,\\
 		S&:=\max \bigg\{\max_{0\leq n\leq N} e^{A_4nh}\hspace{2mm} \bE|\Delta Y_{t_n}|^2, \sum_{i=0}^{N-1} e^{A_4ih}A_7h(1-A_6h)^{-(i+1)} \hspace{2mm} \bE|\Delta Z_{t_i}|^2\bigg\}.
 	\end{align*}
 	Then from (\ref{fw_err}), we have for $1\leq n\leq N$,
 	\begin{align*}
 		e^{-A_1nh}\bE|\Delta X_{t_n}|^2
 		&\leq \hspace{2mm}\sum_{i=0}^{n-1}e^{-A_1h(i+1)}A_2h\bE|\Delta Y_{t_i}|^2+\sum_{i=0}^{n-1}e^{-A_1h(i+1)}A_3h\bE|\Delta Z_{t_i}|^2+ C\sum_{i=0}^{n-1}e^{-A_1h(i+1)}hM_0^n\nonumber\\
 		& \leq \bigg(\sum_{i=0}^{n-1}e^{-A_1(i+1)h-A_4ih}A_2h+\frac{A_3}{A_7}e^{-A_1h}\bigg)S+C\sum_{i=0}^{n-1}e^{-A_1h(i+1)}hM_0^n\nonumber\\
 		& \leq \bigg(e^{-A_1h}A_2h\frac{1-e^{-(A_1+A_4)T}}{1-e^{-(A_1+A_4)h}}+\frac{A_3}{A_7}e^{-A_1h}\bigg)S+C\sum_{i=0}^{n-1}e^{-A_1h(i+1)}hM_0^n\nonumber\\
 		& = \tilde{A}_1(h)S +C\sum_{i=0}^{n-1}e^{-A_1h(i+1)}hM_0^n,
 	\end{align*}
where
 \begin{align}
 	M_0^{n}=&\big(I_{0}^2+\bE|x_0|^2\big)h+\rho(h)+\frac{1}{h}\sup_{0\leq k\leq n}\bE\int_{t_k}^{t_{k+1}}|Z_t-\tilde{Z}_{t_k}|^2dt\nonumber\\
    &+\sup_{0\leq k\leq n}\bE\bigg[\int_{t_k}^{t_{k+1}}|Z_t|^2dt+\int_{t_k}^{t_{k+1}}\big(|f|^2+|\sigma|^2\big)(t,V_t,0,0,0)dt\bigg],
 \end{align}
and
\begin{equation}
	\tilde{A}_1(h)= e^{-A_1h}A_2h\frac{1-e^{-(A_1+A_4)T}}{1-e^{-(A_1+A_4)h}}+\frac{A_3}{A_7}e^{-A_1h}.
\end{equation}
 	Thus we have,
 	\begin{align}\label{est:p<s}
 		P&\leq  \tilde{A}_1(h)S+C\sum_{i=0}^{N-1}e^{-A_1h(i+1)}hM_0^{N-1}.
 	\end{align}
 	Similarly from (\ref{bw_err}), we have for $0\leq n\leq N-1$,
 	\begin{align}
 		&e^{A_4nh}\bE|\Delta Y_{t_n}|^2 +\sum_{i=n}^{N-1}e^{A_4ih}A_7h(1-A_6h)^{-(i+1)}\bE| \Delta Z_{t_i}|^2\nonumber\\
 		&\leq  e^{A_4Nh}(1-A_6h)^{-N}\bE| \Delta Y_{t_N}|^2+\sum_{i=n}^{N-1}\bigg(e^{A_4ih}A_5h\bE| \Delta X_{t_i}|^2+Ce^{A_4ih}hM_n^{N-1}\bigg)(1-A_6h)^{-(i+1)}\nonumber\\
 		&\leq  e^{A_4T}(1-A_6h)^{-N}g_x\bE| \Delta X_{t_N}|^2+\sum_{i=n}^{N-1}\bigg(e^{A_4ih}A_5h\bE| \Delta X_{t_i}|^2+Ce^{A_4ih}hM_n^{N-1}\bigg)(1-A_6h)^{-(i+1)}\nonumber\\
 		&\leq  e^{A_4T+A_1T}(1-A_6h)^{-N}g_xe^{-A_1T}\bE| \Delta X_{t_N}|^2\nonumber\\
 		&\quad +\sum_{i=n}^{N-1}\bigg(e^{(A_4+A_1)ih}A_5he^{-A_1ih}\bE| \Delta X_{t_i}|^2+Ce^{A_4ih}hM_n^{N-1}\bigg)(1-A_6h)^{-(i+1)}\nonumber\\
 		&\leq  \bigg(e^{A_4T+A_1T}(1-A_6h)^{-N}g_x+\sum_{i=n}^{N-1}e^{(A_4+A_1)ih}A_5h(1-A_6h)^{-(i+1)}\bigg)P\nonumber\\
 		&\quad+C\sum_{i=n}^{N-1}e^{A_4ih}(1-A_6h)^{-(i+1)}hM_n^{N-1}\nonumber \\
 		&\leq  \bigg(e^{(A_4+A_1)T}(1-A_6h)^{-N}g_x+A_5h(1-A_6h)^{-N}\frac{e^{(A_4+A_1)T}-1}{e^{(A_4+A_1)h}-1}\bigg)P\nonumber\\
 		& \quad +(1-A_6h)^{-N}C\sum_{i=0}^{N-1}e^{A_4ih}hM_0^{N-1}.
 	\end{align}
 	And thus we have,
 	\begin{align}\label{est:s<p}
 		S\leq \tilde{A}_2(h)P+(1-A_6h)^{-N}C\sum_{i=0}^{N-1}e^{A_4ih}hM_0^{N-1},
 	\end{align}
 	where,
 	\begin{align}
 		\tilde{A}_2(h)=e^{(A_4+A_1)T}(1-A_6h)^{-N}g_x+A_5h(1-A_6h)^{-N}\frac{e^{(A_4+A_1)T}-1}{e^{(A_4+A_1)h}-1}.
 	\end{align}
 	Now combining inequalities \eqref{est:p<s} and \eqref{est:s<p} we can have
 	\begin{align}\label{final_est_bw}
 		S& \leq \tilde{A}_2(h)\bigg(\tilde{A}_1(h)S+C\sum_{i=0}^{N-1}e^{-A_1h(i+1)}hM_0^{N-1}\bigg)+(1-A_6h)^{-N}C\sum_{i=0}^{N-1}e^{A_4ih}hM_0^{N-1}\nonumber\\
 		&= \tilde{A}_1(h)\tilde{A}_2(h)S+C\sum_{i=0}^{N-1}\bigg(h\tilde{A}_2(h)e^{-A_1h(i+1)}+(1-A_6h)^{-N}e^{A_4ih}h\bigg)M_0^{N-1}.
 	\end{align}
 	Combining inequalities \eqref{est:p<s} and \eqref{est:s<p} we can also have
 	\begin{align}\label{final_est_fw}
 		P&\leq \tilde{A}_1(h)\bigg(\tilde{A}_2(h)P+(1-A_6h)^{-N}C\sum_{i=0}^{N-1}e^{A_4ih}hM_0^{N-1}\bigg)+C\sum_{i=0}^{N-1}e^{-A_1h(i+1)}hM_0^{N-1}\nonumber\\
 		&= \tilde{A}_1(h)\tilde{A}_2(h)P+C\sum_{i=0}^{N-1}\bigg(\tilde{A}_1(h)(1-A_6h)^{-N}e^{A_4ih}h+he^{-A_1h(i+1)}\bigg)M_0^{N-1},
 	\end{align}
 	Note that,
 	\begin{align*}
 		\overline{\tilde{A}}_1= \lim_{h\rightarrow 0}\tilde{A}_1(h)&=\lim_{h\rightarrow 0} \bigg(e^{-A_1h}A_2h\frac{1-e^{-(A_1+A_4)T}}{1-e^{-(A_1+A_4)h}}+\frac{A_3}{A_7}e^{-A_1h}\bigg)\nonumber\\
 		&=\frac{\overline{A}_2(1-e^{-(\overline{A}_1+\overline{A}_4)T})}{\overline{A}_1+\overline{A}_4}+\frac{\overline{A}_3}{\overline{A}_7},
 	\end{align*}
 and
 	\begin{align*}
 		\overline{\tilde{A}}_2=\lim_{h\rightarrow 0}\tilde{A}_2\left(h\right)&=\lim_{h\rightarrow 0}\bigg(e^{(A_4+A_1)T}(1-A_6h)^{-N}g_x+A_5h(1-A_6h)^{-N}\frac{e^{(A_4+A_1)T}-1}{e^{(A_4+A_1)h}-1}\bigg)\nonumber\\
 		&=e^{(\overline{A}_1+\overline{A}_4+\overline{A}_6)T}g_x+ \frac{\overline{A}_5(e^{(\overline{A}_1+\overline{A}_4+\overline{A}_6)T}-1)}{\overline{A}_1+\overline{A}_4},
 	\end{align*}
 	and as a result
 	\begin{align}
 		\lim_{h\rightarrow 0}\tilde{A}_1(h)\tilde{A}_2(h)=\overline{A}_0=\bigg(\frac{\overline{A}_2(1-e^{-(\overline{A}_1+\overline{A}_4)T})}{\overline{A}_1+\overline{A}_4}+\frac{\overline{A}_3}{\overline{A}_7}\bigg) \bigg(e^{(\overline{A}_1+\overline{A}_4+\overline{A}_6)T}g_x+ \frac{\overline{A}_5(e^{(\overline{A}_1+\overline{A}_4+\overline{A}_6)T}-1)}{\overline{A}_1+\overline{A}_4}\bigg).
 	\end{align}
 	Now when $\overline{A}_0<1$, for any $\epsilon>0$ and sufficiently small h, we can have from \eqref{final_est_bw} and \eqref{final_est_fw} that
 	\begin{align}
    \overline{S}&\le (1+\epsilon)[1-\overline{A}_0]^{-1}\bigg(\overline{\tilde{A}}_2\frac{1-e^{-\overline{A}_1T}}{\overline{A}_1}+\frac{e^{\overline{A}_6 T}(e^{\overline{A}_4T}-1)}{\overline{A}_4}\bigg) CM_0^{N-1},\\
 		\overline{P}&\le (1+\epsilon)[1-\overline{A}_0]^{-1}\bigg(\frac{1-e^{-\overline{A}_1T}}{\overline{A}_1}+\overline{\tilde{A}}_1\frac{e^{\overline{A}_6 T}(e^{\overline{A}_4T}-1)}{\overline{A}_4}\bigg)CM_0^{N-1},
 	\end{align}
 	where, 
 	\begin{align*}
 		\overline{P}&=\max_{0\leq n\leq N} e^{-\overline{A}_1nh}\hspace{2mm} \bE|\Delta X_n|^2,\\
 		\overline{S}&=\max \bigg\{\max_{0\leq n\leq N} e^{\overline{A}_4nh}\hspace{2mm} \bE|\Delta Y_n|^2, \sum_{i=0}^{N-1} e^{\overline{A}_4ih}e^{\overline{A}_6T}\overline{A}_7h \hspace{2mm} \bE|\Delta Z_i|^2\bigg\}.
 	\end{align*}
 	By fixing $\epsilon=1$, we obtain our error of $\bE|\Delta X_n|^2, \bE|\Delta Y_n|^2$ and $\bE|\Delta Z_n|^2$ as
 	\begin{equation*}
 		\max_{0\le n\le N}\bE|\Delta X_n|^2\le e^{\overline{A}_1T\vee 0}\overline{P}\le CM_0^{N-1},
 	\end{equation*}
 and
 	\begin{equation*}
 		\max\left\{ \max_{0\le n\le N}\bE|\Delta Y_n|^2,\,\,
 		\sum_{i=0}^{N-1}\bE|\Delta Z_i|^2h\right\}
 		\le e^{-\overline{A}_4T\vee 0}(1+\overline{A}_7^{-1})\overline{S}
 		\le CM_0^{N-1}.
 	\end{equation*}
 	Thus,
 	\begin{equation}
 		\begin{split}
 			\sup_{t\in[0,T]} \bE\bigg[|X_t-\overline{X}_{t}^{\pi}|^2+|Y_t-\overline{Y}_{t}^{\pi}|^2\bigg]+\bE\int_{0}^{T}|\hat{Z}_t-\overline{Z}_{t}^{\pi}|^2dt&\leq CM.
 		\end{split}
 	\end{equation}
 \end{proof}
 


\end{document}